  \newtheorem{example}{Example}
    \newtheorem{theorem}{Theorem}
\begin{document}




\title{Derivation languages, descriptional complexity measures and decision problems  of a class of flat splicing systems}


\author{Prithwineel Paul\\
Electronics and Communication Sciences Unit\\
Indian Statistical Institute, 203 B. T. Road Kolkata,\\
West Bengal 700108, India
\and
Kumar Sankar Ray\\
Electronics and Communication Sciences Unit\\
Indian Statistical Institute, 203 B. T. Road Kolkata,\\
West Bengal 700108, India}

\author{ Prithwineel Paul*, Kumar Sankar Ray}

\maketitle

\begin{abstract}
In this paper, 
we associate the idea of derivation languages with flat splicing systems
and compare the families of derivation languages (Szilard and control languages) of these systems
with the family of languages in Chomsky hierarchy. 
We show that the family of Szilard languages of labeled  flat finite splicing systems of type $( m, n)$ (i.e., $SZLS_{n, FIN}^{m}$ ) and $REG$, $CF$ and $CS$
are incomparable.
Also,  it is decidable whether or not  $SZ_{n, FIN}^{m}(\mathscr{LS}) \subseteq R$ and $R \subseteq SZ_{n, FIN}^{m}(\mathscr{LS})$
for any regular language $R$ and labeled flat  finite splicing system $\mathscr{LS}$.  
 Also, any non-empty regular, non-empty context-free and recursively enumerable language 
can be obtained as homomorphic image of Szilard language of the labeled flat  finite splicing
 systems of type $(1, 2), (2, 2)$ and $(4, 2)$ respectively. 
We also introduce the idea of control languages for
  labeled  flat finite splicing systems and show that
any non-empty regular and context-free language can be obtained 
as a control language of labeled  flat  finite splicing systems of type $(1,2)$ and $(2, 2)$ respectively.
 At
 the end, we show that any recursively enumerable language 
can be obtained as a control language of labeled flat  finite splicing systems of type $(4,2)$ when $\lambda$-labeled rules
 are allowed.  

\end{abstract}





\textbf{keyowrds:}Flat Splicing, Labeled flat splicing systems, Szilard language, Control language, Chomsky hierarchy



\section{Introduction} \label{Introduction}
\noindent Splicing systems mathematically formalize recombination behavior of DNA 
molecules under
 the presence of restriction enzymes
and ligases. The restriction enzymes cut the DNA molecules in specific cites and ligases 
help the molecules to recombine with each other to produce new molecules. After each splicing step new DNA molecules and sometimes the same molecules taking 
part in splicing are produced.
Splicing systems are well-investigated topic in formal language theory \cite{dnabook}. Different variants of splicing systems and their computational capabilities already have been 
investigated in \cite{fkp,dsp,tp,dnabook}. 
Splicing systems containing finite set of axioms and rules can generate only regular languages \cite{culik}. But with different restrictions
on the set of axioms and the rules, finite splicing systems can even characterize recursively enumerable languages \cite{dnabook}.
 In this work, we discuss a variant of splicing systems
known as \textit{flat splicing systems}. In flat splicing, if  $x_{0}= x_{0}^{'}u.vx_{0}^{''}$ is spliced
with  $y_{0} = y_{0}^{'}v_{1}y_{0}^{''}$ by
 the rule $r = <u |y_{0}^{'}-y_{0}^{''} |v>$, the string $x_{0}^{'} u. y_{0}^{'} v_{1} y_{0}^{''}.v x_{0}^{''}$ 
is generated. The idea of flat splicing was introduced by Berstel et. al in \cite{berstel}. Also different variants of flat splicing systems 
and their computational power
for linear as well as for circular words 
such as alphabetic flat splicing systems, pure alphabetic flat splicing systems, concatenation systems have been discussed in \cite{berstel}.
The language generating power of the alphabetic flat splicing P systems, its variants
and matrix variant of 
flat splicing systems have been discussed in \cite{kgs-1}, \cite{kgs-3} and  \cite{kgs-2} respectively. Some new model of picture 
generation using the flat splicing rules in arrays has been introduced in \cite{kgs-4}. 

In this paper, we introduce the ideas of two types of derivation languages of labeled flat splicing systems and compared
the families of 
these languages with the family of languages in Chomsky hierarchy. The first one is Szilard languages and the
second one is control languages.
 Szilard languages
are well-known concept in formal language theory. Szilard languages of Chomsky grammars, parallel communicating 
grammar systems, communicating distributive grammar systems etc.  along with their closure properties, decidability aspects, complexity aspects
 have been investigated in  \cite{makinen-2}.
Also, the idea of derivation languages for DNA computing  and membrane computing models have been 
investigated in \cite{prithwi-1} and \cite{prithwi-2} respectively. In \cite{prithwi-1}, derivation languages have been associated with splicing systems and in \cite{prithwi-2}
the derivation languages have been introduced for splicing P systems. Also, the idea of control languages for  spiking neural P systems, transition P systems and
tissue P systems have been investigated in \cite{ajeesh-1,ajeesh-2,ajeesh-3,pan-1}.

In this paper, we show  that the family of Szilard languages of labeled
 flat  finite splicing systems and $REG, CF$ and $CS$ are incomparable.
Moreover,
any non-empty regular, non-empty context-free and recursively enumerable language can be obtained as homomorphic
image of the Szilard language of labeled  flat  finite splicing systems of type $(1, 2), (2, 2)$ 
and $(4, 2)$ respectively. We also introduce the idea of control languages for labeled flat splicing systems and
 show that unlike in the case of Szilard languages, any non-empty regular and context-free language
can be obtained as control language by labeled  flat finite splicing systems of type $(1,2)$ and $(2, 2)$ respectively. Moreover, any recursively enumerable language
can be obtained as control language of labeled  flat finite splicing systems of type $(4,2)$ when the splicing rules can have $\lambda$-label.  

The paper is organized as follows: 
in section \ref{preliminaries} we recall the basic definitions required for this paper. We  introduce Szilard and control languages 
for the labeled flat splicing systems in section \ref{labeled flat splicing system}
and in section \ref{control languages of labeled flat splicing systems} respectively. 
In section \ref{results} and \ref{results1}, we  discuss the main results
of this paper. In section \ref{conclusion}, we discuss conclusion and  open problems which can be investigated further.

\section{Preliminaries}
\label{preliminaries}

For the basic definitions and notions of formal language theory we refer to \cite{handbook}.

\textit{Chomsky  normal form}:
For every context-free grammar $G$, a grammar $G^{'} = (N, T, S, P)$ can be effectively
 constructed where the rules in $P$ are of
 the form $A \rightarrow BC$ and $A \rightarrow a$,  $A, B, C \in N, a \in T$ such that $L(G) \setminus \{\lambda\} = L(G^{'}) \setminus \{\lambda\}.$
 
\textit{Greibach normal form: } A context-free grammar $G = (N, T, S, P)$ is said to be in Greibach normal form if the
rules in $P$ are of the form $A \rightarrow a \alpha, A \in N, a \in T, \alpha \in N^{*}$.
 
 \textit{Type-$0$-grammar}:
A type-$0$-grammar is a construct of the form $G = (N, T, S,$
$ P)$ where $N$ is the non-terminal alphabet and $T$ is the terminal alphabet such that
$N \cap T = \emptyset$. The starting symbol $S \in N$ and the rules in $P$ are ordered
 pairs $(u, v)$ where $u \in (N \cup T)^{*}N(N \cup T)^{*}$
and $v \in (N \cup T)^{*}$.
 
\textit{Kuroda normal form}:
Every type-$0$ grammar $G= (N, T, S, P)$ is in Kuroda normal form  if the rules of the
 grammar $G$ are of the following forms:

$A \rightarrow BC, AB \rightarrow CD, A \rightarrow a, A \rightarrow \lambda$
for $A, B, C, D \in N$ and $a \in T.$

\textit{Homomorphism}:
A homomorphism is a mapping $h$ from $\Sigma^{*}$ to $\Delta^{*}$ where $\Sigma, \Delta$ are alphabets.
 Also, the mapping preserves concatenation,
i.e., $h(v.w) = h(v). h(w), v, w \in \Sigma^{*}.$

\textit{Szilard languages} \cite{handbook}:
Let $G= ( N, T, S, P)$ be Chomsky grammar and $F$ be an alphabet such that the cardinality
 of the set $F$ and $P$ is same. 
Let $f$ be a mapping from $P$ to $F$ such that  for each $p \in P$ a unique label $f(p)$ is 
associated with $p$ and is called the label
of the rule $p$. A derivation in $G$ is called successful if a string over $T$ is generated
 starting from $S$. With each successful derivation
of $G$, a string over $F$ can be associated if the labels of the applied rules of any successful derivation
 are concatenated sequentially. The 
language generated in this manner is called Szilard language of the grammar $G$ and is denoted by $SZ(G).$

\begin{example}
Let $G = (\{S\}, \{a, b\}, S, \{S \rightarrow aSb, S \rightarrow ab\})$ be a context-free grammar.
 The rules are labeled in the
following manner: $f_{1}: S \rightarrow aSb, f_{2}: S \rightarrow ab.$
Hence the Szilard language generated by the grammar $G$ is $SZ(G) = \{ f_{1}^{n}f_{2} ~|~ n \geq 0\}$. 
\end{example}
 The families of regular, context-free, context-sensitive and recursively enumerable languages are
 denoted by $REG, CF, CS$ and $RE$ respectively.

\textbf{\textit{Flat splicing systems}} \cite{berstel}: 
A flat splicing system is a construct of the form $\mathscr{S} = ( A, I, R)$ where $A$ is an alphabet,
 $I$ is a set of words over the alphabet $A$, 
 $R$ is a finite set of splicing rules. The rules in flat splicing system 
are of the form $< \alpha | \gamma - \delta | \beta >$ where $\alpha, \beta, \gamma, \delta$
 are words over the alphabet $A$. The strings $\alpha, \beta, \gamma, \delta$ are called the handles of the rule. 
 If the rule $r =   < \alpha | \gamma - \delta | \beta >$  is applied
 to the pair of strings $(u, v)$ where $u = x \alpha ~.~ \beta y$ and $v = \gamma z \delta$,
 then the string $v$ is inserted in the  $``."$ location of $u$. Hence after application
 of the rule $r$, the string $w = x \alpha ~.~ \gamma z \delta ~.~ \beta y$ is obtained. 
 The location ``." represents the location where the cutting and pasting operation take place.
The flat splicing operation over the two words $u$ and $v$ can be represented as $(u, v) \vdash^{r} w.$
Moreover, a flat splicing system is called finite,  regular, context-free, context-sensitive 
if the corresponding initial set $I$ is finite, regular, context-free and context-sensitive respectively.

The language generated by the flat splicing system  $\mathscr{S} = (A, I, R)$ is denoted by $\mathscr{F(S)}$
which is also the smallest language $L$ 
containing $I$.
It is also closed under $R$, i.e., for $u, v \in L$ and $r \in R, (u, v) \vdash^{r} w \in L.$

In this paper, we introduce the notion of \textit{type} with  
 the rules in flat splicing systems.
A flat splicing system $\mathscr{S} = (A, I, R)$ is called of type $(m,n)$ where
 $m = max \{ |\alpha|, |\beta| | <\alpha | \gamma - \delta | \beta > \in R \}$
and $n = max \{ |\gamma  \delta |  | <\alpha | \gamma - \delta | \beta > \in R\}$
where $u = x \alpha . \beta y, v = \gamma z \delta, ( u, v) = ( x \alpha . \beta y,  \gamma z \delta) \vdash^{r}  x \alpha .   \gamma z \delta.  \beta y, $
and $x, y, \alpha, \beta,  \gamma, \delta, z \in A^{*}$,  $|\gamma| \leq 1, |\delta| \leq 1$ and $| \gamma z \delta | \geq 1 $. The parameters $m$ and
$n$ represent the descriptional complexity measures of flat splicing systems.

Note that whenever a string $\gamma z \delta$ is inserted into another word $x \alpha . \beta y$, the inserted string is represented 
by $\gamma-\delta$, i.e., only by the end letters of the words. Furthermore, if the inserted word (i.e., $|\gamma z \delta| = 1$) is of length $1$, then the word is
denoted as $\epsilon-\gamma_1$ or $\gamma_1-\epsilon$ where $\gamma_1 \in A.$ We have explained these notations with two examples 
in Example \ref{berstel}  and  \ref{letter}.
 Moreover, the language generated by the  flat splicing system $\mathscr{S}$ of 
type $(m, n)$ is denoted as $\mathscr{F}_{n}^{m}(\mathscr{S})$ 
and the families of languages generated by 
 the flat splicing systems of type $(m,n)$ 
is denoted by $FS_{n}^{m}$. When  $m$ and $n$ are not specified, is replaced by $``*"$.
 Also the flat splicing of two strings is simply mentioned as splicing of two strings.
 
 A rule of the flat splicing system $\mathscr{S}$ is called alphabetic if for any rule $r = < \alpha | \gamma-\delta | \beta>$, either
$\alpha, \beta, \gamma, \delta$ are letters or empty strings. A flat splicing system is called alphabetic if 
all the rules present in the system are alphabetic, i.e.,  alphabetic flat splicing systems are flat splicing systems of type $(1,2)$.



\begin{example} \cite{berstel} \label{berstel}
Let  $\mathscr{S} = (A, I, R)$ be a flat splicing system where $A = \{a, b\}, I = \{ab\}$ and $R = \{ <a | a-b | b> \}$.
Since  $ab \in I$, at first  $ab$  splice with $ab$ only  and  generate $a^{2}b^{2}$. This process is continued and the
 strings of the form $a^{n}b^{n}$
are spliced with the strings of the form $a^{n}b^{n}$ only.
Hence
$\mathscr{F}_{2}^{1}(\mathscr{S}) = \{a^{n}b^{n} | n \geq 1\}.$

\end{example}    

In the previous example the strings inserted  must be of length at least two, otherwise the rule cannot be applied. 
Now, we give example of a flat splicing system where strings of length one are inserted in the specified location.

\begin{example}  \label{letter}
Let $\mathscr{S} = (A, I, R)$  where
$A = \{X, Y,a \} , I = \{XY, a\}$ and $R = \{<X | \epsilon-a | Y>\}.$
Hence $\mathscr{F}_{1}^{1}(\mathscr{S}) = \{XaY\}.$
\end{example}

\begin{example}
Let $\mathscr{S} = ( A, I, R)$ be a flat splicing system where $A = \{a, X, b,$
$ Z\},$
$ I = \{ aXb, Z\}, R = \{ <\epsilon | \epsilon - Z | a> \}$.
The application of  $ < \epsilon | \epsilon - Z | a> $ to the words $aXb, Z$ will generate the word $ZaXb$. 
This flat splicing operation is a concatenation operation.  Moreover,  $\mathscr{F}_{1}^{1}(\mathscr{S}) = \{ZaXb\}.$
\end{example}

In this work, we investigate two types of derivation languages of the labeled flat splicing systems. 
 At first, we  introduce the  the idea of Szilard languages of labeled flat
splicing systems.
 Then we  introduce the
idea of control languages of labeled flat splicing systems. 

\section{Labeled flat splicing system}
\label{labeled flat splicing system}
Let $\mathscr{S} = (A, I, R)$ be a flat splicing system. A labeled flat splicing system
 is a construct of the form $\mathscr{LS} = (A, I, R, Lab)$ 
such that $A \cap Lab = \emptyset$. Each rule of the flat splicing system is labeled in one-to-one manner with
 the elements from $Lab$. 
A derivation in flat splicing system is called \textit{terminal derivation} 
if it follows the following pattern:

$(x_{0}, y_{0}) \vdash^{a_{1}} x_{1},
(x_{1}, y_{1}) \vdash^{a_{2}} x_{2},
(x_{2}, y_{2}) \vdash^{a_{3}} x_{3},
\ldots
(x_{n-1}, y_{n-1}) \vdash^{a_{n}} x_{n}$
where $x_{0}, y_{0} , y_{1}, y_{2}, \ldots, y_{n-1} \in I, a_1, a_2, \ldots, a_n \in Lab$. Moreover, no rule from $R$ is 
applicable to the word $x_n \in A^{+}$ 
and the word $a_1a_2 \ldots a_n$ is called a Szilard word. 
The language obtained by concatenating
 the labels of the applied rules in a 
 terminal
derivation is called Szilard language of the labeled flat splicing system $\mathscr{LS}$ and 
$SZ_{n, FAM}^{m}(\mathscr{LS})$ denotes the set of Szilard languages obtained by the labeled flat splicing system $\mathscr{LS}$
of type $(m, n)$ where the initial set $I \in FAM = \{ FIN, REG, CF, CS\}$.
The families of Szilard languages 
 of the labeled flat splicing systems of 
type $(m,n)$, 
 is denoted as $SZLS_{n, FAM}^{m}$. If $m$ and $n$ are not specified, they are replaced by $``*".$



\begin{example}
\label{regularexample}
We give an example of an alphabetic labeled flat finite splicing system which can obtain a
regular language as a Szilard language. 
Let $\mathscr{LS} = (A, I,$
$ R, Lab)$ be a labeled flat splicing system where
$A =  \{ X, A_1,  A^{'}, Y\}$,
$I = \{XA_1Y, A_1, A^{'}\}$,
$R = \{ a: <A_1 | \epsilon - A_1 | Y>; c: < A_1 | \epsilon - A^{'} | Y> \}$.

The $a$-rule is applicable any number of times and it can splice the words $XA_1^{n}Y, n \geq 1$ and $A_1$. But after application of the $c$-rule,
the word  $XA_1^{n}A^{'}Y,$
$ n \geq 1$ is obtained. No rule is applicable to this word. 
Hence
 $SZ_{1, FIN}^{1}(\mathscr{LS}) = \{a^{n}c | n \geq 0\}.$
\end{example}

\section{Results}
\label{results}

In this section, we discuss the results related to the Szilard languages of the labeled flat splicing systems. The language
$\{aa\}$ cannot be obtained as Szilard language of any Chomsky grammar. But we  prove that this language
can be obtained as a Szilard language of labeled  flat finite splicing system. Moreover, we show that some regular, non- regular and non-context free languages
cannot be a Szilard language of any labeled  flat finite splicing system. But any non-empty regular, non-empty context-free and recursively 
enumerable language can be obtained as  homomorphic image of the Szilard language of 
labeled flat finite splicing systems of type $(1, 2), (2, 2)$ and $(4, 2)$ respectively. 

%

\begin{theorem}
$\{aa\} \in SZLS_{1, FIN}^{1}.$
\end{theorem}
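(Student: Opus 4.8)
The goal is to exhibit a labeled flat finite splicing system of type $(1,2)$ whose Szilard language is exactly the singleton $\{aa\}$. Following the pattern of Example~\ref{regularexample}, the plan is to design the axiom set and the rules so that \emph{every} terminal derivation consists of exactly two splicing steps, both forced, and the labels of those two steps spell out $aa$. Concretely, I would take a single rule labeled $a$ whose first application inserts a letter that (i) does not itself block the rule from firing a second time, but (ii) after the second application produces a word to which the $a$-rule is no longer applicable. This gives the Szilard word $aa$ and nothing else.

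The key steps, in order, are as follows. First, choose the alphabet $A$ and an axiom $x_0$ containing the handles needed to fire the rule once. Second, choose the inserted axiom $y_0$ (of length one, so that we stay in type $(1,2)$ — recall $|\gamma z \delta|\ge 1$ is allowed). Third, arrange that after one insertion the resulting word $x_1$ still has exactly the handle configuration that permits the rule labeled $a$ to fire again, but that after the second insertion the word $x_2$ has \emph{no} occurrence of the required handle pair, so the derivation halts; since no other rule exists, $\{aa\}$ is the only Szilard word. A workable choice is along the lines of $A=\{X,Y,a\}$, axiom $XaaY$ (or $XaY$ together with an axiom for the inserted letter), and a rule of the form $a\colon\, \langle a \mid \epsilon - a \mid a\rangle$ or $a\colon\,\langle X \mid \epsilon - a \mid a\rangle$: one must check that the first application creates a new adjacency of the form $\cdots a . a \cdots$ (or $X.a$) needed for the second application, and that the second application destroys it. Fourth, verify by a short case analysis that no derivation of length $1$ or $\ge 3$ is terminal: length $1$ is not terminal because the rule is still applicable, and length $\ge 3$ is impossible because the rule ceases to apply after step two.

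The main obstacle is the delicate bookkeeping of handles: a flat splicing rule $\langle \alpha \mid \gamma-\delta \mid \beta\rangle$ fires on $x\alpha.\beta y$, so I must ensure that after the first insertion the word genuinely contains a factor $\alpha\beta$ at a splice position, and after the second insertion it does not — and crucially that there is no \emph{other} position in the intermediate or final word where the same rule could fire (which would create longer Szilard words or branch the derivation). This is precisely the kind of ``off-by-one'' control argument used in Example~\ref{regularexample}, where the $c$-rule terminates the process; here the single rule must terminate itself after exactly two uses. I expect the verification that $\{aa\}$ is \emph{exactly} (not merely a subset of) the Szilard language — i.e., ruling out all shorter and longer terminal derivations and all alternative splice sites — to be the part requiring the most care, though it reduces to inspecting a handful of short words.
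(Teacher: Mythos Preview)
Your plan correctly names the target --- a single rule labeled $a$ that fires exactly twice --- but the concrete candidates you offer do not achieve it, and the mechanism you sketch (``the first application creates a new adjacency, the second destroys it'') cannot be made to work for a type-$(1,1)$ rule. Take any rule $\langle \alpha \mid \epsilon - c \mid \beta\rangle$ with $|\alpha|=|\beta|=|c|=1$. Inserting $c$ into an occurrence $\alpha\cdot\beta$ yields $\alpha c\beta$; this destroys that occurrence of the factor $\alpha\beta$ and creates a new one only when $c\in\{\alpha,\beta\}$. If $c\in\{\alpha,\beta\}$ the count of $\alpha\beta$-factors is non-decreasing at every step, so the derivation never becomes terminal; if $c\notin\{\alpha,\beta\}$ the count drops by exactly one at every step, so the derivation halts after exactly as many steps as there are $\alpha\beta$-factors in the \emph{starting} axiom. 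Your proposed axioms $XaY$, $XaaY$ together with the rules $\langle X\mid\epsilon-a\mid a\rangle$ or $\langle a\mid\epsilon-a\mid a\rangle$ all fall into the first (non-terminating) case --- and incidentally violate the requirement $A\cap Lab=\emptyset$, since $a$ is used both as a letter and as the label.

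The idea you are missing, and the one the paper uses, is therefore not a self-regenerating handle but simply an axiom containing \emph{two disjoint} occurrences of the handle pair, with a fresh inserted letter that kills one occurrence per step. The paper takes $A=\{S,X_a,Y\}$, $I=\{SYSY,\,X_a\}$ and the single rule $a:\langle S\mid \epsilon-X_a\mid Y\rangle$. The axiom $SYSY$ has exactly two $SY$-factors, each application consumes one, and after two steps the word $SX_aYSX_aY$ admits no further splicing; hence $SZ_{1,FIN}^{1}(\mathscr{LS})=\{aa\}$. (A minor point: the statement concerns $SZLS_{1,FIN}^{1}$, i.e.\ type $(1,1)$, not type $(1,2)$ as you write; since you insert a single letter you would land in type $(1,1)$ anyway, so this slip is harmless once the construction is fixed.)
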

\begin{proof}

Let $\mathscr{LS} = (A,  I, R, Lab)$ be a labeled alphabetic flat splicing system  where
$A = \{S, X_{a}, Y\},$ 
$I  = \{SYSY, X_{a}\},$
$R = \{ a:<S |\epsilon -X_{a}| Y>\}$,
$Lab = \{a\}$.
Hence $SZ_{1, FIN}^{1}(\mathscr{LS}) = \{aa\}$.


\end{proof}

In the next two results,  we show that the regular language $\{a^{n} | n \geq 1\}$  cannot be a Szilard 
language of any labeled flat  finite splicing system. But it can be obtained as a Szilard 
language if $I$ is regular.

\begin{theorem} \label{notregular}
$\{a^{n} ~|~ n \geq 1\} \notin SZLS_{*, FIN}^{*}.$ 

\end{theorem}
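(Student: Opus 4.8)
The goal is to show that the one‑letter language $\{a^n \mid n\ge 1\}$ is not the Szilard language of any labeled flat finite splicing system. The key structural fact to exploit is that in a flat splicing system with a \emph{finite} initial set $I$, every terminal derivation starts from a pair $(x_0,y_0)$ with $x_0,y_0\in I$ and, at each step, inserts some $y_i\in I$ into the current word; hence the lengths of the intermediate words grow in a controlled way, and more importantly the \emph{set of applicable rules at each step is determined by the word $x_i$}, which itself is built by a bounded‑branching process from finitely many seeds. So I would argue that if $\{a^n\mid n\ge1\}$ were such a Szilard language, the derivations realizing the words $a$, $aa$, $aaa$, $\dots$ would have to be ``nested'' or ``prefix‑extendable'' in a way that a one‑step look‑ahead forbids.

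\textbf{Step 1: Normal‑form observations on terminal derivations.} First I would record that a terminal derivation for a word $a_1a_2\cdots a_n$ uses exactly $n$ rule applications, that each $y_i$ comes from the finite set $I$, and that the \emph{last} word $x_n$ admits no applicable rule (this is what makes the derivation terminal). In particular, for the length‑$1$ Szilard word ``$a$'', there must be $x_0,y_0\in I$ and a rule labeled $a$ with $(x_0,y_0)\vdash^{a}x_1$ and $x_1$ \emph{blocked}. For the length‑$2$ word ``$aa$'', there must be $x_0,y_0\in I$ with $(x_0,y_0)\vdash^{a}x_1$, then $y_1\in I$ with $(x_1,y_1)\vdash^{a}x_2$, and $x_2$ blocked — and crucially, after the \emph{first} step, the rule $a$ must still be applicable (to get a second letter) yet after the second step it must not be (to terminate).

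\textbf{Step 2: Derive a contradiction from unbounded length of the Szilard words.} The heart of the argument is that $\{a^n\mid n\ge1\}$ contains words of every length. I would look at how a single label $a$ can be forced to be ``applicable exactly $k$ times and then blocked'' for arbitrarily large $k$. Since $I$ is finite, there are only finitely many choices of $x_0,y_0$, finitely many inserted words $y_i$, and each insertion enlarges the current word by at least one symbol; so after $k$ steps $|x_k|\ge |x_0|+k$. The applicability of the rule $a=\langle\alpha\mid\gamma-\delta\mid\beta\rangle$ at step $i$ depends on whether $x_i$ contains a factorization $x'\alpha\cdot\beta y$, i.e. on the presence of the factor $\alpha\beta$ (with the right cut point) in $x_i$. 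I would argue that inserting $\gamma z\delta$ into $x'\alpha\cdot\beta y$ either destroys the only occurrence of $\alpha\beta$ (so the derivation must stop after one step, contradicting the existence of longer Szilard words with the same first rule) or it preserves/creates such an occurrence (so the derivation can always be continued, contradicting termination and hence the presence of the \emph{finite} word $a^n$ in the language — one gets an infinite derivation, which corresponds to no Szilard word at all, or one gets that \emph{every} prefix‑length is achievable in a way incompatible with having a fixed maximal word). The careful bookkeeping here — tracking occurrences of the handle factor across an insertion — is where a short case analysis on whether $\alpha$ or $\beta$ is empty, and on the overlap of the inserted $\gamma z\delta$ with the handles, is needed; with only finitely many rules and finitely many seeds, the ``occurrence count of the handle'' is eventually monotone, which is the contradiction engine.

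\textbf{Step 3: Conclude.} Putting it together: either some label is blocked after a bounded number of uses (so the Szilard words using it have bounded length, contradicting $\{a^n\}$ needing arbitrarily long words over the single symbol $a$), or some label is never blocked along a derivation (so that derivation is infinite and contributes no word, while any attempt to make it finite fails), and in both cases no labeled flat finite splicing system can have Szilard language exactly $\{a^n\mid n\ge 1\}$; hence $\{a^n\mid n\ge1\}\notin SZLS_{*,FIN}^{*}$. The main obstacle I anticipate is Step 2: making precise and exhaustive the claim that the number of usable occurrences of a rule's handle in the current word behaves monotonically (or eventually does) under repeated insertion from a finite pool, so that ``applicable exactly $n$ times'' cannot hold simultaneously for all $n$ with a fixed finite machinery. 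I would handle this by fixing, for each rule, the finite list of handle factors, observing that an insertion changes the multiset of such factors by a bounded amount determined only by the (finitely many) inserted words, and then invoking a pigeonhole/periodicity argument on the resulting eventually‑affine growth.
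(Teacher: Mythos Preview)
Your plan follows the same dichotomy as the paper's own proof: either an application of the unique $a$-rule always recreates an occurrence of the handle $\alpha\beta$ (so no derivation terminates), or it never does (so every derivation terminates after one step), and in either case one cannot realise all lengths $n\ge 1$. You correctly flag Step~2 as the delicate point, but the refinement you sketch (track the number of handle occurrences, note that each insertion changes it by a bounded amount depending only on the inserted $y\in I$, then use periodicity) does not close the gap. The difficulty is that the change in the handle count genuinely depends on \emph{which} $y\in I$ is inserted, and with two different elements of $I$ one can arrange a change of $0$ for one and $-1$ for the other. That lets you maintain exactly one handle occurrence for as long as you like and then kill it in a single step, giving terminal derivations of every length from the \emph{same} starting pair.

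Concretely, take $A=\{b,c,e\}$, $I=\{bc,\,bce,\,e\}$, and the single rule $a:\langle b\mid \epsilon - e\mid c\rangle$. Starting from $x_0=bc$, inserting $bce$ sends $\dots b\!\cdot\! c\dots$ to $\dots b\,bce\,c\dots=\dots bbcec\dots$, which destroys the old $bc$ factor but introduces exactly one new one (inside $bce$); inserting $e$ sends it to $\dots bec\dots$, which has no $bc$ factor at all. Hence for every $n\ge 1$ the sequence $(bc,bce),(x_1,bce),\dots,(x_{n-2},bce),(x_{n-1},e)$ is a terminal derivation with Szilard word $a^{n}$, and no other label occurs, so $SZ(\mathscr{LS})=\{a^{n}\mid n\ge 1\}$. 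This shows that the dichotomy underlying both your Step~2 and the paper's argument is not exhaustive; the ``eventually monotone'' or ``eventually affine'' heuristic fails precisely because the per-step increment can be $0$ for one inserted word and $-1$ for another. Any correct proof (if the statement is salvageable at all) would need an additional hypothesis or a different invariant that rules out this mix-and-match behaviour; as written, neither your outline nor the paper's proof supplies one.
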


\begin{proof}
Let us assume that there exists a labeled flat finite  splicing system $\mathscr{LS} = ( A, I, R, Lab)$ such that 
$SZ_{n, FIN}^{m} (\mathscr{LS}) = \{a^{n} | n \geq 1\}$ where $ R = \{a : <u | u_{1} - v_{1} | v> \}$ and $x_i^{1} uv x_i^{2}, u_1 \delta v_1 \in I,
x_i^{1}, x_i^{2}, u, v, u_1, v_1, \delta \in A^{*}, |u_1| \leq 1, |v_1| \leq 1$.
 Now since 
 $a \in \{a^{n} | n \geq 1\}$,
 there exists a  derivation 
 
  $(x_{0}^{0}, y_{0}^{0}) \vdash^{a} z_{1}^{0} \hspace{5cm} \ldots (1)$

 where $x_{0}^{0} = x_{0}^{1}u . v x_{0}^{2}, y_{0}^{0} \in I$ and
   $a : <u | u_{1} - v_{1} | v>$ is not applicable to $z_{1}^{0}.$

Similarly, the terminal derivation for $a^{2}$ is as follows:


$ (x_{0}^{1}, y_{0}^{1}) \vdash^{a} z_{1}^{1}$,

$ (z_{1}^{1}, y_{1}^{1}) \vdash^{a} z_{2}^{1} \hspace{5cm} \ldots (2)$ 

where $x_{0}^{1}, y_{0}^{1}, y_{1}^{1} \in I$ and   $a : <u |  u_{1}-v_{1} | v>$ is
 not applicable to $z_{2}^{1}.$


Again, the terminal derivation of $a^{3}$ is as follows:

$ (x_{0}^{2}, y_{0}^{2}) \vdash^{a} z_{1}^{2}$;

$ (z_{1}^{2}, y_{1}^{2}) \vdash^{a} z_{2}^{2}$;

$ (z_{2}^{2}, y_{2}^{2}) \vdash^{a} z_{3}^{2}  \hspace{5cm}  \ldots (3)$

where $x_{0}^{2}, y_{0}^{2}, y_{1}^{2}, y_2^{2} \in I$ and   $a : <u |  u_{1}-v_{1} | v>$ is
 not applicable to $z_{3}^{2}.$

Since, the labeled flat splicing system $\mathscr{LS}$ contains only 
one $a$-rule, then from the above derivations it is clear that 
either after application of $a$-rule, the subword $uv$ is again obtained 
or the pair of words $(x_0^{i}, y_0^{i})$ are distinct (i.e., at least one of term of the pair is different 
from any other initial pair of words). 

If  after application of the $a$-rule again a subword $uv$ is
obtained, then no terminal derivation is obtained. Hence $\{a^{n} | n \geq 1\}$ cannot 
be a Szilard language of the labeled flat  finite splicing system $\mathscr{LS}$. 

In the second case, to obtain  $\{ a^{n} | n \geq 1\}$ as Szilard language, the 
pairs $(x_0^{i}, y_0^{i})$ must be distinct, where $x_0^{i}, y_0^{i} \in I$. But $I$ is finite. Hence $\{ a^{n} | n \geq 1\}$
cannot be a Szilard language of a labeled  flat  finite splicing systems.




\end{proof}

Now we show that $\{a^{n} | n \geq 1\}$ can be  a Szilard language of labeled flat regular splicing system.
\begin{theorem}

$\{a^{n} | n \geq 1\}$ can be obtained as  Szilard language of a labeled   flat alphabetic regular splicing system, i.e., 
$\{a^{n} | n \geq 1\} \in SZLS_{1, REG}^{1}$.
\end{theorem}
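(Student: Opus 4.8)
The plan is to construct an explicit labeled flat alphabetic regular splicing system whose initial set is an infinite regular language, so that the argument obstructing finiteness in Theorem~\ref{notregular} no longer applies. The idea is to encode the length of the target Szilard word directly into the axioms: instead of needing distinct pairs of axioms drawn from a finite set, we draw a single axiom of the form $X b^{n} X$ (for varying $n$) from the regular set $X b^{+} X$, and then let one single $a$-labeled rule act exactly $n$ times on it, inserting a fresh symbol between two markers each time.

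Concretely, I would take $A = \{X, b, c\}$ (or similar), $I = X b^{+} X \cup \{c\}$, which is regular, and a single rule $a : \langle b \mid \epsilon - c \mid b \rangle$ or a variant that lets $c$ be inserted between two consecutive $b$'s. Starting from $X b^{n} X$, each application of the $a$-rule inserts one copy of $c$ in one of the remaining ``$b.b$'' slots; after exactly $n-1$ (or $n$, depending on how the markers are set up) applications every insertion site has been used up, so no rule is applicable and the derivation terminates. Since all axioms $y_i$ used as the second component can be the single word $c$, and the first component is the single axiom $X b^{n} X$, each length $n \geq 1$ is realized by a terminal derivation of that length, giving Szilard language $\{a^{n} \mid n \geq 1\}$. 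I would need to check carefully the boundary count of insertion slots versus markers so that the number of rule applications is exactly $n$ and not $n-1$ or $n+1$; padding with extra marker letters ($X$, $Y$) fixes off-by-one issues.

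The two points requiring care are: first, termination — I must verify that after the intended number of applications genuinely \emph{no} rule of $R$ can fire, which means the handles $\alpha=b,\beta=b$ must no longer occur adjacent as ``$b.b$'' anywhere (inserting $c$ between every pair of $b$'s destroys all such adjacencies, since after insertion the neighbourhood reads $b c b$, not $b b$); and second, confluence of length — I must ensure that \emph{every} maximal derivation starting from $X b^{n} X$ has length exactly $n$ (or the fixed function of $n$), regardless of the order in which the $c$'s are inserted, so that the Szilard language is exactly $\{a^{n}\}$ and not a larger set. Both follow from the observation that each application strictly decreases the number of ``$b.b$'' insertion sites by exactly one and creates no new ones, so the total number of applications is invariant and equal to the initial number of sites, which is a fixed linear function of $n$.

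The main obstacle I anticipate is not conceptual but bookkeeping: getting the alphabet and the single rule so that (a) the system is genuinely alphabetic (type $(1,2)$: all handles are single letters or $\epsilon$, and $|\gamma z \delta| = |c| = 1$), (b) the initial set is regular (here $X b^{+} X \cup \{c\}$ works), and (c) the count of rule applications matches $n$ exactly rather than $n \pm \text{const}$; if a constant offset appears, one either reindexes or absorbs it by adjusting the axiom family to $X b^{n+1} X$ or by adding a terminating ``cap'' rule. I would present the final system, then state and prove the length invariant as a one-line lemma, and conclude $\{a^{n}\mid n\ge 1\} = SZ_{1,REG}^{1}(\mathscr{LS})$.
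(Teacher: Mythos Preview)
Your proposal is correct and essentially identical to the paper's proof: the paper takes $A=\{X,A_1,A',Y\}$, $I=\{XA_1^{n}Y\mid n\ge 2\}\cup\{A'\}$, and the single rule $a:\langle A_1\mid \epsilon-A'\mid A_1\rangle$, which is exactly your $\langle b\mid \epsilon-c\mid b\rangle$ construction up to renaming and the choice of end markers. Your invariant argument (each insertion destroys exactly one $bb$ adjacency and creates none) is the right justification, and the off-by-one you anticipate is handled in the paper simply by starting the axiom family at $n\ge 2$; note also that your rule actually has $|\gamma\delta|=1$, so the system is of type $(1,1)$ as required, not merely $(1,2)$.
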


\begin{proof}

Let $\mathscr{LS} = (A, I, R, Lab)$ be a labeled flat splicing system where
 $A = \{X, A_1, A^{'}, Y\}$,
 $I = \{XA_1^{n}Y | n \geq 2\} \cup \{A^{'}\}$,
 $R = \{
a: <A_1 | \epsilon-A^{'} | A_1> \}$, $Lab = \{a\}$.
%
%
%
Hence  $SZ_{1, REG}^{1}(\mathscr{LS}) = \{a^{n} | n \geq 1\}$.

\end{proof}

\begin{theorem} \label{finiteregular}
$SZLS_{1, FIN}^{1} \cap (REG \setminus FIN) \neq \emptyset$.
\end{theorem}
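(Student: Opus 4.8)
The plan is to exhibit a single language lying in both families. A convenient witness is the infinite regular language $L = a^{*}c = \{a^{n}c \mid n \geq 0\}$, which is in fact already produced in Example~\ref{regularexample}; the argument is essentially a matter of recording that the language computed there lies in $REG \setminus FIN$.

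First I would take the labeled alphabetic flat splicing system $\mathscr{LS} = (A, I, R, Lab)$ with $A = \{X, A_{1}, A', Y\}$, $I = \{XA_{1}Y, A_{1}, A'\}$, $R = \{a : \langle A_{1} \mid \epsilon - A_{1} \mid Y\rangle,\ c : \langle A_{1} \mid \epsilon - A' \mid Y\rangle\}$ and $Lab = \{a, c\}$, and check that $I$ is finite and that every rule has handles of length at most $1$ with $|\gamma\delta| \leq 1$, so that $\mathscr{LS}$ is a finite flat splicing system of type $(1,1)$.

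Then I would verify $SZ_{1,FIN}^{1}(\mathscr{LS}) = L$ through three observations. (i) The only word of $I$ ending in $A_{1}$ is $A_{1}$ and the only one ending in $A'$ is $A'$, and a word matches the left handle $\langle A_{1} \mid \cdots \mid Y\rangle$ only via a factorization in which an occurrence of $A_{1}$ is immediately followed by $Y$; hence the $a$-rule sends $XA_{1}^{k}Y$ to $XA_{1}^{k+1}Y$ for $k \geq 1$ and can be iterated indefinitely, while the $c$-rule sends $XA_{1}^{k}Y$ to $XA_{1}^{k}A'Y$. (ii) In $XA_{1}^{k}A'Y$ every $A_{1}$ is followed by $A_{1}$ or $A'$ and the symbol preceding $Y$ is $A'$, so no occurrence of $A_{1}$ is adjacent to $Y$ and neither rule applies: the derivation halts. (iii) Therefore every terminal derivation consists of a block of $n \geq 0$ applications of $a$ followed by exactly one application of $c$, and every such sequence is realizable, so the set of Szilard words is exactly $\{a^{n}c \mid n \geq 0\} = L$. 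Since $L$ is regular and infinite, $L \in REG \setminus FIN$, and since $\mathscr{LS}$ is a finite flat splicing system of type $(1,1)$, $L \in SZLS_{1,FIN}^{1}$; hence $SZLS_{1,FIN}^{1} \cap (REG \setminus FIN) \neq \emptyset$.

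The one point requiring care is observation (ii): one must confirm that after the single $c$-insertion no further splicing — and in particular no further $a$-step — is possible, which hinges on the inserted $A'$ permanently separating the block of $A_{1}$'s from the trailing $Y$, so that the pattern ``an $A_{1}$ immediately before $Y$'' never reappears. Everything else is routine verification, and the bulk of the content is simply the observation that the language exhibited in Example~\ref{regularexample} already witnesses the claim.
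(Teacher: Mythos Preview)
Your proposal is correct and takes essentially the same approach as the paper: the paper's proof is the single line ``Follows from Example~\ref{regularexample}'', and you have simply unpacked that example in full, verifying that the system there is of type $(1,1)$ with finite $I$ and that its Szilard language is the infinite regular set $\{a^{n}c \mid n \geq 0\}$. Your added verification in observation~(ii), that the inserted $A'$ permanently destroys the adjacency $A_{1}Y$, is exactly the point the example's informal description relies on.
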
 
\begin{proof}
Follows from Example \ref{regularexample}.
\end{proof}

\begin{theorem} \label{finitecontextfree}
$SZLS_{1, FIN}^{2} \cap (CF \setminus REG) \neq \emptyset$.
\end{theorem}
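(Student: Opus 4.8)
The plan is to exhibit a single labeled flat finite splicing system of type $(2,2)$ whose Szilard language is a non-regular context-free language, the natural candidate being $\{a^{n}c\,?\}$-type sets built around the classic $\{a^{n}b^{n}\}$ behaviour of Example \ref{berstel}. Concretely I would mimic that example: take $A=\{X,Y,A_1\}$ with an axiom like $XA_1A_1Y$ and a rule $a:\langle A_1\,|\,A_1-A_1\,|\,A_1\rangle$ that, when applied to $XA_1^{m}A_1^{m}Y$-style words, forces the inserted copy to land symmetrically, so the word grows as $XA_1^{2n}Y$ and the number of applications needed to reach a blocked word is controlled by a counting condition. Then a terminating rule (or a structural dead-end) is added so that a terminal derivation applies the growth rule exactly $n$ times followed by one closing step, giving a Szilard language of the form $\{a^{n}c : n\ge 1\}$ — no, that is regular; so instead I would arrange two interacting handles so that the Szilard words record a genuinely context-free pattern such as $\{a^{n}b\,c^{n} : n\ge 1\}$ or $\{w : w \text{ has equal numbers of } a\text{'s and } b\text{'s}\}$, exploiting that the ``matching'' of left and right handles in flat splicing is exactly the kind of balanced-bracket mechanism that yields context-free but non-regular sets.

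More carefully, the key steps I would carry out are: (1) fix the alphabet and axiom set so that exactly one family of words is reachable and the only way a derivation terminates is by violating a handle condition; (2) choose the labels and rules so that every terminal derivation has a prescribed shape, and read off the resulting Szilard language $L$ explicitly; (3) verify $L\in CF$ by writing a context-free grammar (or pushdown automaton) for it; (4) verify $L\notin REG$ via the pumping lemma for regular languages. The membership $L\in SZLS^{2}_{1,FIN}$ is then immediate once I check that the maximum handle length on the left side is $2$ (hence $m=2$) and that $|\gamma\delta|\le 2$ with $|\gamma|,|\delta|\le 1$ on the inserted side (hence $n=2$), and that the initial set is finite.

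The main obstacle I anticipate is step (2): flat splicing is intrinsically nondeterministic about \emph{where} the inserted string goes (it goes into some occurrence of $\alpha\,.\,\beta$), so I must design the handles so tightly that for each target Szilard word there is an essentially unique terminal derivation and, crucially, that no ``short-circuiting'' derivation produces a Szilard word outside $L$ (the same pitfall handled in the proof of Theorem \ref{notregular}). I expect to need auxiliary barrier symbols that can be created only once and that immediately block all rules, so that the derivation is forced to stop precisely when the counting condition fires; checking that these barriers cannot be bypassed, and that the system really is of type $(2,2)$ and not accidentally $(1,2)$ or higher, is the delicate part. Once the system is pinned down, steps (3) and (4) are routine: a context-free grammar for an ``equal-count'' or $a^{n}b^{n}$-flavoured language together with a one-line pumping argument completes the proof.
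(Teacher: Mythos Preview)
Your overall strategy---construct an explicit labeled flat finite splicing system whose Szilard language is a language of the shape $\{a^{n}bc^{n}\mid n\ge 1\}$ and then observe that this set is context-free but not regular---is exactly the paper's approach. What is missing is the construction itself: you describe what the system \emph{should} do (``two interacting handles'', ``barrier symbols'', ``forced to stop when the counting condition fires'') but never write one down, and the proposal as it stands is a plan rather than a proof.

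The paper's realisation is much simpler than your discussion of nondeterminism pitfalls suggests. Take $A=\{X,A_1,Y,A_2\}$, $I=\{XA_1Y,\,A_1,\,A_2\}$, and the three rules
\[
a:\langle A_1\mid \epsilon-A_1\mid Y\rangle,\quad
b:\langle A_1\mid \epsilon-A_2\mid Y\rangle,\quad
c:\langle A_1\mid \epsilon-A_2\mid A_1A_2\rangle.
\]
Rule $a$ grows a block $A_1^{k}$ to the left of $Y$; once $b$ fires, the factor $A_1Y$ disappears and only $c$ can act, inserting an $A_2$ at each remaining $A_1A_1A_2$ adjacency until none are left. No barrier gadgets are needed: the right handle $A_1A_2$ of $c$ already forces the ``matching'' you were looking for, and there is no positional ambiguity because at every step the relevant pattern occurs at a unique site. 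This is the concrete content your proposal lacks.

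One index slip to fix: in the notation $SZLS^{m}_{n,FIN}$ the subscript is $n=\max|\gamma\delta|$, so $SZLS^{2}_{1,FIN}$ means type $(2,1)$, not $(2,2)$. You write ``hence $n=2$'', but the theorem requires $n=1$; note that in all three rules above $\gamma=\epsilon$, so indeed $|\gamma\delta|=1$, while $|\beta|=2$ in rule $c$ gives $m=2$. Make sure whatever system you build respects $|\gamma\delta|\le 1$, i.e.\ one side of the inserted word's handle must be empty.
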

\begin{proof}

 Let $\mathscr{LS} = (A, I, R, Lab)$ be a labeled flat splicing system where $A = \{ X, A_1, Y, A_2\}, I = \{ XA_1Y, A_1, A_2\},
R = \{ a:< A_1 | \epsilon-A_1 | Y>,  b: < A_1 | \epsilon- A_2 | Y>, c: < A_1 | \epsilon-A_2 | A_1A_2> \}$.
Hence $SZ_{1, FIN}^{2}(\mathscr{LS}) = \{ a^n b c^{n+1} | n \geq 1\}$. 
\end{proof}

%
%

\begin{theorem} \label{finitecontextsensitive}
$(CS \setminus CF) \cap SZLS_{1, FIN}^{2} \neq \emptyset.$
\end{theorem}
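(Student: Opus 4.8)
The plan is to exhibit a labeled flat finite splicing system of type $(1,2)$ whose Szilard language is a context-sensitive but non-context-free language, mirroring the constructions used in Theorems \ref{finiteregular} and \ref{finitecontextfree}. The natural target is a language of the form $\{a^n b c^n d^n \mid n \geq 1\}$ (or a similarly structured language with three linked exponents), since such languages are well known to lie in $CS \setminus CF$, and the pattern $\{a^n b c^{n+1}\}$ from Theorem \ref{finitecontextfree} already shows how one exponent can be ``copied'' by the dynamics of a flat splicing derivation. First I would set up an alphabet $A$ containing markers $X,Y$ delimiting a working region, a pumpable symbol $A_1$, and auxiliary symbols (say $A_2$, $A_3$, and possibly a $Y'$) used to force the later phases of the derivation; the initial set $I$ would be finite, consisting of something like $\{XA_1Y, A_1, A_2, A_3\}$ together with whatever singleton words are needed as insertable material.

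The core of the argument is to design the rules so that the terminal derivation is forced into three consecutive phases: a first phase in which an $a$-labeled rule $\langle A_1 \mid \epsilon - A_1 \mid Y\rangle$ may fire $n-1$ times, building the word $XA_1^nY$ (contributing $a^n$); a ``switch'' step with a $b$-labeled rule that consumes or blocks the right delimiter so that the $a$-rule can no longer apply and a $c$-rule becomes enabled; a second phase in which a $c$-labeled rule fires exactly $n$ times, its applicability each time gated by the presence of one of the $A_1$'s produced in phase one (so the count is transferred); and a third phase with a $d$-labeled rule, again fired exactly $n$ times, each firing gated by a symbol deposited during the $c$-phase. The key design principle, already implicit in the earlier proofs, is that each insertion changes the local context (the pair of handles around the insertion point) in a way that both disables the previous rule and enables the next, so that the phases cannot interleave and the terminal word is reached only after the counters have been faithfully copied twice. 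One must also check that $m=1$: all handles $\alpha,\beta,\gamma,\delta$ in every rule must be single letters or $\epsilon$, which constrains the gadget but is compatible with the $\langle A_1 \mid \epsilon - A_1 \mid Y\rangle$-style rules used throughout.

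The two things I would verify carefully are (i) \emph{completeness of the language}, i.e. that for each $n \geq 1$ there really is a terminal derivation yielding the Szilard word $a^{n-1} b c^n d^n$ (or whatever exact form the gadget produces), and (ii) \emph{soundness}, i.e. that no other Szilard words arise — in particular that a $c$-rule or $d$-rule cannot fire ``early,'' that the phases are genuinely forced in order, and that every maximal derivation actually terminates (no word admits an endless supply of applicable rules, which would mean it contributes nothing to the Szilard language) and that the terminating words are exactly those with matched exponents. After fixing the exact word produced, I would then invoke the standard fact that $\{a^n b c^n d^n \mid n \geq 1\}$ (or the relevant variant) is context-sensitive but not context-free — the latter by a routine pumping-lemma argument on the three coupled blocks — to conclude that the Szilard language lies in $CS \setminus CF$, giving the claimed non-emptiness of $(CS \setminus CF) \cap SZLS_{1,FIN}^{2}$.

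The main obstacle I anticipate is the soundness direction, and specifically controlling the \emph{second} copying phase. In Theorem \ref{finitecontextfree} one exponent is copied because the $c$-rule's left context $\langle A_1 \mid \cdots \mid A_1A_2\rangle$ is regenerated exactly as many times as there were $A_1$'s; to get a \emph{third} matched block one needs the $c$-phase to leave behind, for each firing, a fresh marker that the $d$-rule can consume exactly once, without that marker prematurely enabling anything or without the $d$-rule being able to run independently of it. Threading a new symbol through each $c$-insertion while keeping all handles of length $\leq 1$ and keeping the delimiters in place is the delicate part; it may require an auxiliary symbol that is deposited by the $c$-rule and whose presence is exactly what the $d$-rule tests for on one side, with the opposite side pinned down by a fixed delimiter. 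I would iterate on the precise rule set until both the ``for all $n$'' derivation exists and an exhaustive (if tedious) case analysis of possible rule applications shows nothing else terminates.
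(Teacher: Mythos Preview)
Your approach differs substantially from the paper's, and the difference is worth noting because it sidesteps exactly the obstacle you flag. You try to build a system whose Szilard language is \emph{exactly} a target language like $\{a^n b c^n d^n\}$, which commits you to a full soundness analysis ruling out every unintended terminal derivation across four interacting rules. The paper instead builds a three-rule system (labels $a,b,c$; alphabet $\{X,A,A',A'',Y\}$; axioms $\{XAAY,A,A',A''\}$) whose Szilard language $L$ is allowed to contain many words, and argues only that $L \cap a^{*}b^{*}c^{*} = \{a^{n}b^{n+1}c^{n+1} \mid n \geq 1\}$. Since $CF$ is closed under intersection with regular sets, this already forces $L \notin CF$; the $CS$ upper bound is then asserted separately. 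The gain is that you only have to analyse derivations whose label sequence lies in $a^{*}b^{*}c^{*}$, a drastically smaller case analysis than controlling all interleavings of four phases. Your route could in principle succeed, but the intersection-with-regular trick is the missing idea that makes the argument short.

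One correctness point: you have the type parameters reversed. In the paper's convention $SZLS_{n,FIN}^{m}$ carries $m$ in the superscript and $n$ in the subscript, so $SZLS_{1,FIN}^{2}$ means type $(2,1)$, i.e.\ outer handles $\alpha,\beta$ may have length up to $2$ while $|\gamma\delta|\le 1$. The rule $\langle A_1 \mid \epsilon - A_2 \mid A_1A_2\rangle$ you quote from Theorem~\ref{finitecontextfree} already has $|\beta|=2$, so you are not restricted to $m=1$ as you state; conversely every inserted word must be a single letter. This does not break your plan, but you should drop the ``$m=1$'' constraint and allow length-two contexts, which also makes the phase-gating gadgets considerably easier to realise.
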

\begin{proof}
Let $\mathscr{LS} = (A,  I, R, Lab)$ be a labeled flat splicing system where
$A = \{X,A, A^{'}, A^{''}, Y\}$,
$I = \{XAAY, A, A^{'}, A^{''}\}$,
$R = \{ a: <A|\epsilon -A| A>, b: <A|\epsilon - A^{'}| A>, c: <A^{'}A|\epsilon -A^{''}| \epsilon> \}$,
$Lab = \{ a, b, c\}. $

On application of the $a$- rule, one 
$A$ is added between two $A$'s of the word $XAAY$.  Similarly, application of the $b$-rule inserts  $A^{'}$
between the two $A$'s 
of the subword $AA$ and generates the subword $AA^{'}A$. Similarly, when $c$-rule is applied,
 the 
string $AA^{'}$ is replaced by $AA^{''}A^{'}$. Hence terminal derivation can be obtained in $\mathscr{LS}$
if $a$, $b$ and $c$-rules are applied same number of time in order and 
$SZ_{1, FIN}^{2}(\mathscr{LS}) \cap a^{*}b^{*}c^{*} = \{ a^{n} b^{n+1} c^{n+1} ~|~ n \geq 1\}.$ 
Since $\{a^{n} b^{n+1} c^{n+1} | n \geq 1\}$ 
is a context-sensitive language and context-sensitive languages are closed under intersection with regular languages.
 The language $SZ_{1, FIN}^{2}(\mathscr{LS})$
must be 
context-sensitive.
\end{proof}

Now we show that the non-regular context-free language $\{ a^{n} b^{n} | n \geq 1\}$ can be obtained as a Szilard language of 
labeled flat finite splicing system.

\begin{theorem} \label{contextfreenoszilard}
$\{ a^{n} b^{n} | n \geq 1\} \notin SZLS_{*, FIN}^{*}.$ 
\end{theorem}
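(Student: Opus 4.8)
The plan is to argue by contradiction in the same spirit as the proof of Theorem~\ref{notregular}, but now exploiting the fact that the Szilard words of a labeled flat finite splicing system cannot encode the two-phase structure of $\{a^n b^n \mid n \geq 1\}$ with only finitely many axioms. Suppose $\mathscr{LS} = (A, I, R, Lab)$ is a labeled flat finite splicing system with $SZ_{n,FIN}^{m}(\mathscr{LS}) = \{a^n b^n \mid n \geq 1\}$. Since every Szilard word has the form $a^n b^n$, the label alphabet must contain (at least) the two symbols $a$ and $b$, and every terminal derivation consists of a block of applications of $a$-labeled rules followed by a block of applications of $b$-labeled rules of the same length. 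First I would observe that, as in Theorem~\ref{notregular}, in each terminal derivation for $a^n b^n$ the initial strings $x_0, y_0, y_1, \dots, y_{2n-1}$ are all drawn from the finite set $I$, and the intermediate words $x_1, x_2, \dots, x_{2n}$ are produced by successive insertions.

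The key structural step is to analyze what an $a$-labeled rule and a $b$-labeled rule can do. Each application of a rule $r = \langle \alpha \mid \gamma - \delta \mid \beta \rangle$ inserts into the current word $x_i$ a copy of some axiom of the form $\gamma z \delta \in I$ at an occurrence of the factor $\alpha\beta$. I would first handle the ``persistence of a splice site'' issue exactly as in Theorem~\ref{notregular}: if applying an $a$-rule to a word again produces an occurrence of the triggering factor $\alpha\beta$ together with a fresh copy of the inserted axiom, then the $a$-rule remains applicable, so no terminal derivation with exactly $n$ copies of $a$ could terminate at the right place --- in fact if any $a$-rule can be iterated then $a^{n}$ extends to $a^{n+1}$ in a derivation, and similarly on the $b$-side, which already makes it impossible to have a clean ``$n$ then $n$'' pattern unless the counts are forced by the finite supply of distinct axioms/sites. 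The heart of the argument is then: because $I$ is finite, only boundedly many distinct words can trigger rules in a ``controlled'' way, so the length $n$ of the $a$-block cannot be correlated with the length $n$ of the $b$-block across infinitely many values of $n$; after the last $a$-rule the word $x_n$ is fixed by a bounded amount of information, yet the number of required $b$-steps grows without bound, so for large $n$ either a $b$-rule is applicable after fewer than $n$ steps (giving a shorter Szilard word $a^n b^k$ with $k < n$) or an $a$-rule is still applicable (giving $a^{n+1}\cdots$), contradicting $SZ(\mathscr{LS}) = \{a^n b^n\}$.

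More precisely, I would make the counting rigorous via a pumping-style argument on the derivation: fix $n$ larger than the number of distinct axioms in $I$ times the number of rules, and look at the sequence of words $x_0, x_1, \dots, x_n$ produced by the $a$-block; by pigeonhole two of the ``local configurations'' (the inserted axiom together with the position pattern relative to the splice site) must repeat, which lets one excise or repeat a segment of the $a$-block to obtain a legal derivation with a different number $n'$ of $a$-steps but the same resulting word $x_{n'}$ up to the relevant splice sites, hence the same continuation by the $b$-block; this produces a Szilard word $a^{n'} b^{n}$ with $n' \neq n$, contradiction. One must be a little careful that excising part of the $a$-block does not destroy the site needed to start the $b$-block, but since that site lives in a bounded window and the excised part is chosen to be ``internal,'' it can be arranged; the symmetric argument pumps the $b$-block.

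The main obstacle I expect is making the pigeonhole/pumping step fully precise in the presence of flat splicing's insertion semantics: unlike context-free derivations, here the ``inserted'' axiom can itself later receive insertions, and rules fire at arbitrary occurrences of a short factor $\alpha\beta$, so I need to track not just the multiset of words but the actual positions of active splice sites, and argue that the bounded-size handles $\alpha,\beta,\gamma,\delta$ (bounded by the type $(m,n)$ and, crucially, $|\gamma|,|\delta| \le 1$) together with the finiteness of $I$ really do force only finitely many ``behaviors.'' I would phrase this as: the relevant state of the derivation at any step is determined by a bounded-length window around each potential splice site, and there are only finitely many such windows, so a repetition occurs and can be pumped; the delicate bookkeeping is showing the pumped derivation is still a \emph{terminal} derivation (no rule applicable at the end), which I would get by choosing the repeated configuration to be one that does not create a new applicable site, mirroring the case analysis already used in Theorem~\ref{notregular}.
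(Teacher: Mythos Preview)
Your proposal starts where the paper does but then adds a layer the paper never uses. The paper's proof is simply the two-rule version of Theorem~\ref{notregular}: because the labeling is one-to-one, $R$ contains exactly one rule $a:\langle u\mid u_1{-}v_1\mid v\rangle$ and one rule $b:\langle u_2\mid u_3{-}v_3\mid v_2\rangle$, and the paper reruns the same dichotomy for this pair. Either applying the $a$-rule (respectively the $b$-rule) always regenerates the factor $uv$ (respectively $u_2v_2$) in the new word, in which case no derivation is terminal; or else the initial pairs $(x_0^i,y_0^i)$ in the terminal derivations for different $n$ must all be distinct, contradicting the finiteness of $I$. That is the entire argument --- no pumping, no pigeonhole on local configurations, no window analysis.

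You do invoke the Theorem~\ref{notregular} ``persistence of a splice site'' dichotomy, but you then treat it as insufficient and build a separate pumping mechanism on top (repeating local configurations in the $a$-block, excising a segment to produce $a^{n'}b^{n}$ with $n'\neq n$, etc.). The paper does not do this, and your version of it is only a sketch: as you yourself flag, making the pigeonhole step precise for flat splicing --- where inserted axioms can later receive insertions and where applicability is governed by short factors that may be created or destroyed anywhere --- requires nontrivial bookkeeping that you have not carried out. In particular, your claim that the excised segment can be chosen ``internal'' so as not to disturb the $b$-block's first site, and that the resulting derivation is still \emph{terminal}, is asserted rather than proved. If your goal is to match the paper, drop the pumping entirely and argue directly from the two-case split applied to the unique $a$-rule and the unique $b$-rule, exactly as in Theorem~\ref{notregular}.
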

\begin{proof} 
Let $\mathscr{LS} = ( A, I, R, Lab)$ be a labeled flat finite splicing system such that $SZ_n^{m}(\mathscr{LS}) = \{ a^{n} b^{n} | n \geq 1\}$
where $a: < u | u_1 - v_1 | v>, b: < u_2 | u_3 - v_3 | v_2>, u_1 \delta_1 v_1, u_3 \delta_3 v_3 \in I$.

Since, $ab \in SZ_n^{m}(\mathscr{LS})$, there exists a terminal derivation 

$(x_0^{0}, y_0^{0}) \vdash^{a} z_1^{0}$

$(z_1^{0}, y_1^{0}) \vdash^{b} z_2^{0}$ 

where no rules of $R$ is applicable 
to the word $z_2^{0}$.

Similarly, $a^2b^2 \in SZ_n^{m}(\mathscr{LS})$. Hence there exists a derivation 

$(x_0^{1}, y_0^{1}) \vdash^{a} z_1^{1}$

$(z_1^{1}, y_1^{1}) \vdash^{a} z_2^{1}$

$(z_2^{1}, y_2^{1}) \vdash^{b} z_3^{1}$

$(z_3^{1}, y_3^{1}) \vdash^{b} z_4^{1}$

where $x_0^{1}, y_0^{1}, y_1^{1}, y_2^{1}, y_3^{1} \in I$ and no rule is applicable to $z_4^{1}$.

Also, $a^{3}b^{3} \in SZ_n^{m}(\mathscr{LS})$. Hence there exists a derivation

$(x_0^{2}, y_0^{2}) \vdash^{a} z_1^{2}$

$(z_1^{2}, y_1^{2}) \vdash^{a} z_2^{2}$

$(z_2^{2}, y_2^{2}) \vdash^{a} z_3^{2}$

$(z_3^{2}, y_3^{2}) \vdash^{b} z_4^{2}$

$(z_4^{2}, y_4^{2}) \vdash^{b} z_5^{2}$

$(z_5^{2}, y_5^{2}) \vdash^{b} z_6^{2}$

where $x_0^{1}, y_0^{1}, y_1^{1}, y_2^{1}, y_3^{1} \in I$ and no rule is applicable to $z_6^{2}$.

 $R$ contains two rules $a: < u | u_1 - v_1 | v>$ and $b: < u_2 | u_3-v_3 | v_2>$ where 
$u, v, u_2, v_2 \in A^{*}$ and $u_1, v_1, u_3, v_3 \in A$.  
Proceeding in the similar manner as in Theorem \ref{notregular},
we can say that to obtain $\{ a^{n} b^{n} | n \geq 1\}$ as a Szilard language of labeled flat
finite splicing system $\mathscr{LS}$, either after application of  $a$-rule and $b$-rule, the subwords $uv$ and $u_2v_2$ are obtained
again in the newly generated 
words or  at least one of the words in the initial pair $(x_0^{i}, y_0^{i})$ is distinct from others pairs in terminal derivations.


If both $a$-rule and $b$-rule obtain the subwords $uv$ and $u_2v_2$ in the newly generated words, then
no terminal  derivation is obtained. 
Also if at least one word is distinct in the initial pair of words $(x_i^{0}, y_i^{0})$ then $I$ cannot be finite - a contradiction.
 
Hence  $\{ a^{n} b^{n} | n \geq 1\}$ cannot be a Szilard language of any labeled flat finite splicing system.

\end{proof}

\begin{theorem}
$\{ a^{n} b ^{n} | n \geq 1\} \in SZLS_{ 1, REG}^{3}.$
\end{theorem}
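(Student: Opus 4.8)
The plan is to construct a labeled flat regular splicing system of type $(3,2)$ whose Szilard language is exactly $\{a^n b^n \mid n \geq 1\}$. The idea is to exploit the infinite regular initial set to ``pre-load'' the counting structure, so that the control over the number of insertions is built into the axioms rather than having to be forced by the splicing rules themselves (which, as Theorem \ref{contextfreenoszilard} shows, is impossible with a finite $I$). First I would take an alphabet such as $A = \{X, A_1, B_1, Y, \#\}$ and let $I$ contain, for each $n \geq 1$, a word that already carries $n$ ``slots'' for $a$-insertions and $n$ ``slots'' for $b$-insertions, e.g.\ $I = \{ X A_1^{n} B_1^{n} Y \mid n \geq 1\} \cup \{A_1, B_1, \dots\}$ together with whatever small auxiliary strings are needed as the inserted pieces. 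This $I$ is regular. Then I would design two rules, $a : <\,\cdot\mid \cdot - \cdot\mid\,\cdot\,>$ and $b : <\,\cdot\mid\cdot-\cdot\mid\,\cdot\,>$, with handles of length at most $3$ (on the $\alpha,\beta$ side) so that the $a$-rule can fire exactly $n$ times, each firing consuming one $A_1$-slot, and the $b$-rule can fire exactly $n$ times, each consuming one $B_1$-slot, and crucially so that a $b$-insertion is only enabled once all $A_1$-slots are exhausted and no $a$-insertion is enabled once a $b$-insertion has occurred. This ordering constraint is what forces the Szilard word to be $a^n b^n$ rather than an arbitrary shuffle.

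The key steps, in order, would be: (1) fix the precise form of the axioms in $I$ and verify $I \in REG$; (2) write down the $a$-rule and $b$-rule, choosing the context handles so that the longest $\alpha$ or $\beta$ has length $3$ (justifying the superscript $3$) and $|\gamma \delta| \leq 2$ (justifying the subscript $2$), with $|\gamma|, |\delta| \leq 1$ as required by the definition of type; (3) prove by induction on the length of a terminal derivation that starting from the axiom $X A_1^n B_1^n Y$ the only terminal derivations apply $a$ exactly $n$ times then $b$ exactly $n$ times, yielding Szilard word $a^n b^n$, and that no derivation beginning from a ``wrong'' axiom or mixing an inserted piece incorrectly can terminate; (4) conclude $SZ^{3}_{1,REG}(\mathscr{LS}) = \{a^n b^n \mid n \geq 1\}$. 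A natural mechanism for the ordering is to have the $a$-rule require an adjacency like $A_1 A_1$ or $A_1 B_1$ in its context and rewrite the boundary so that after all $A_1$'s are used up the pattern that the $a$-rule needs disappears, while the $b$-rule requires a pattern that only appears after the $a$-phase is complete; using a handle of length up to $3$ (such as $A_1 B_1 Y$ or $B_1 B_1 Y$) gives enough room to encode ``the $A_1$-block is finished''.

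The main obstacle I expect is the genuine difficulty that flat splicing is an \emph{insertion-only} operation: nothing is ever deleted, so I cannot literally ``erase'' a slot after using it. I have to simulate consumption by insertion — e.g.\ each $a$-step inserts a short marker string between two $A_1$'s so that those two $A_1$'s are no longer adjacent and hence can no longer serve as an $a$-rule context, while the total count of available $a$-steps still equals $n$. Getting this bookkeeping right, so that exactly $n$ insertions are possible and no spurious terminal derivation sneaks in (for instance, a derivation that inserts a $b$-marker too early, or one that reuses the auxiliary strings in an unintended adjacency), is the delicate part; it is essentially the reason the type has to be $(3,2)$ rather than $(1,2)$, since alphabetic handles turn out to be too weak to both identify a slot and record that it has been spent. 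Once the rules are pinned down, the verification in step (3) is a routine but somewhat tedious case analysis on which rule can apply to which subword, and I would organize it by first showing the ``honest'' derivation exists and then showing every applicable rule at every stage is forced.
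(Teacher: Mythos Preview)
Your high-level strategy --- exploit the regular axiom set to pre-load the count $n$, then design two rules whose applicability is exhausted after exactly $n$ steps each --- is the right idea and is what the paper does. But there are two concrete problems.

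First, a parameter error: in the paper's notation $SZLS_{n,FAM}^{m}$ the subscript records $n=\max|\gamma\delta|$, so $SZLS_{1,REG}^{3}$ means type $(3,1)$, not $(3,2)$. Your inserted words must therefore be identified by a \emph{single} endpoint letter, $|\gamma\delta|\le 1$. This is not a mere cosmetic slip: if your rules really need $|\gamma\delta|=2$ then you have only proved membership in $SZLS_{2,REG}^{3}$, a weaker statement.

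Second, the ordering mechanism you sketch is where the real work lies, and you have not done it. With your axioms $XA_1^{n}B_1^{n}Y$ the $B_1B_1$ adjacencies are present from the very first step, so a rule whose right context is, say, $B_1B_1Y$ is enabled immediately --- it does not encode ``the $A_1$-block is finished'' at all. You need an explicit mechanism that makes the $b$-context appear only after the last $a$-insertion, and nothing in your outline supplies one.

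The paper sidesteps this neatly by \emph{not} pre-building the $B_1$-block: it takes $I=\{XA_1^{n}B_1Y\mid n\ge 2\}\cup\{B_1,X\}$ with a single $B_1$, and lets the $a$-rule $\langle A_1\mid \epsilon{-}B_1\mid A_1B_1\rangle$ insert $B_1$'s, converting $XA_1^{n}B_1Y$ step by step into $X(A_1B_1)^{n}Y$. The $b$-rule $\langle XA_1B_1\mid \epsilon{-}X\mid A_1B_1\rangle$ is anchored at the left end-marker $X$; since the prefix remains $XA_1A_1\ldots$ until the very last $a$-step, $b$ is simply unavailable during the $a$-phase, and once the $a$-phase is over there is no $A_1A_1$ left for $a$. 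Both rules have $|\gamma\delta|=1$ and $\max(|\alpha|,|\beta|)=3$, matching the claimed type exactly. So the paper's trick is to let the $a$-phase \emph{manufacture} the $b$-contexts rather than have them sitting in the axiom; that is the missing idea in your outline.
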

\begin{proof}
Let $\mathscr{LS} = ( A, I, R, Lab)$ be a labeled flat splicing system where $A = \{ X, A_1, B_1, Y\}, I = \{ XA_1^{n}B_1Y | n \geq 2\} \cup \{ B_1, X\}, 
R = \{ a: < A_1 | \epsilon-B_1 | A_1B_1>, b: < XA_1B_1 | \epsilon - X | A_1B_1> \}$.
Hence $SZ_{1, REG}^{3}(\mathscr{LS}) = \{ a^{n} b^{n} | n \geq 1\}.$ 
\end{proof}

%

\begin{theorem} \label{contextsensitivenoszilard}
$\{ a^{n} b^{n} c^{n} | n \geq 1\} \notin SZLS_{*, FIN}^{*}.$ 
\end{theorem}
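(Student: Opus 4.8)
The plan is to mimic the structure of the proofs of Theorem \ref{notregular} and Theorem \ref{contextfreenoszilard}, adapting the counting argument to the three-letter alphabet. Suppose for contradiction that there is a labeled flat finite splicing system $\mathscr{LS} = (A, I, R, Lab)$ with $SZ_{n,FIN}^{m}(\mathscr{LS}) = \{a^{n}b^{n}c^{n} \mid n \geq 1\}$. Since each Szilard word uses exactly three distinct labels, $R$ must contain (at least) the three rules $a : \langle u_a \mid u_a' - v_a' \mid v_a \rangle$, $b : \langle u_b \mid u_b' - v_b' \mid v_b \rangle$, $c : \langle u_c \mid u_c' - v_c' \mid v_c \rangle$, with the inserted words $u_a'\delta_a v_a', u_b'\delta_b v_b', u_c'\delta_c v_c' \in I$, and $|u_a'|,|v_a'|,|u_b'|,|v_b'|,|u_c'|,|v_c'| \leq 1$. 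The label sequence in every terminal derivation is forced to be $a^{n}b^{n}c^{n}$, so for each $n$ there is a terminal derivation that applies the $a$-rule $n$ times, then the $b$-rule $n$ times, then the $c$-rule $n$ times, starting from an initial pair $(x_0^{(n)}, y_0^{(n)})$ with both components in $I$ and drawing all auxiliary inserted words from $I$.

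First I would record the key dichotomy established in the earlier proofs: for the $a$-rule to be applicable $n$ consecutive times, either (i) after applying the $a$-rule the subword $u_a v_a$ reappears in the freshly produced word — in which case the $a$-rule never becomes inapplicable and, more importantly, the analogous situation for $b$ or $c$ would block termination or allow the wrong label to fire, so no valid terminal derivation of the required shape exists — or (ii) the initial pairs $(x_0^{(n)}, y_0^{(n)})$ must be pairwise distinct across the infinitely many values of $n$, which is impossible since $I$ is finite. I would then argue that the same reasoning applies to the $b$-block and the $c$-block: because $R$ has a single $b$-rule and a single $c$-rule, the only way to pass through $n$ applications of each without the count depending on an unbounded amount of initial data is for the relevant subwords to regenerate, and once any of the three subwords $u_a v_a$, $u_b v_b$, $u_c v_c$ regenerates after its rule fires, that rule stays applicable, so no word $x_k$ with "no rule applicable" is ever reached — contradicting the definition of terminal derivation. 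Conversely, if none of them regenerates, the number of times each rule can fire is bounded by the fixed, finite amount of material available in $I$ and in the at-most-$|I|$ distinct starting words, contradicting the unboundedness of $n$.

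The main obstacle, as in Theorem \ref{contextfreenoszilard}, is making the "bounded material in $I$" half fully rigorous: one must check that the length of any word reachable by a terminal derivation of label-length $3n$ is bounded by a constant plus a constant times the number of rule applications, and simultaneously that the number of rule applications cannot grow unless fresh distinct initial words are consumed. I would handle this by noting that each flat splicing step increases the total length by $|u_k'\delta_k v_k'| = |\delta_k| + |u_k'| + |v_k'|$, a quantity bounded by $\max_{w \in I} |w|$, so length grows at most linearly; and by observing that the "context" handles $u_k, v_k$ of each rule are fixed strings, so applicability of rule $k$ at a given step is determined entirely by the presence of the pattern $u_k . v_k$ in the current word, which (in the non-regenerating case) can be created at most a bounded number of times from the bounded supply of distinct initial words — hence $n$ is bounded, the contradiction. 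Since $\{a^{n}b^{n}c^{n} \mid n \geq 1\}$ requires arbitrarily long Szilard words, this completes the argument, so $\{a^{n}b^{n}c^{n} \mid n \geq 1\} \notin SZLS_{*,FIN}^{*}$.
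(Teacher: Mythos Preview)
Your proposal is correct and takes essentially the same approach as the paper: the paper's entire proof is the single line ``Follows from the proof of Theorem~\ref{contextfreenoszilard}'', and you carry out precisely that adaptation, running the same regenerating-subword versus distinct-initial-pairs dichotomy for the three rules labeled $a$, $b$, $c$ instead of two. Your write-up is in fact more detailed than what the paper provides, since you make the length-growth and bounded-initial-data considerations explicit rather than leaving them implicit.
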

\begin{proof}
Follows from the proof of Theorem \ref{contextfreenoszilard}.
\end{proof}

\begin{theorem}
$\{ a^{n} b ^{n} c^{n} | n \geq 1\} \in SZLS_{1, REG}^{4}.$
\end{theorem}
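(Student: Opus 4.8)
The plan is to mimic the construction used for $\{a^n b^n \mid n\ge 1\} \in SZLS_{1,REG}^{3}$, but to add a third "phase" that forces a block of $c$'s of the same length. Following the earlier theorem, I would take the alphabet $A = \{X, A_1, B_1, C_1, Y\}$ and let the initial set be the regular set $I = \{XA_1^{n}B_1C_1Y \mid n \ge 2\} \cup \{B_1, C_1, X\}$ (the exact "seed" words being the knob one tunes so the Szilard word comes out exactly $a^n b^n c^n$). The intended derivation is: starting from $XA_1^{n}B_1C_1Y$, apply a rule $a$ exactly $n$ times, each application inserting a copy of $B_1$ next to an $A_1$ so that the label trace records $a^n$; then apply a rule $b$ exactly $n$ times, each insertion of a copy of $X$ (or $C_1$) recording $b^n$; then apply a rule $c$ exactly $n$ times recording $c^n$, after which no rule is applicable. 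The handles $\gamma-\delta$ are single letters (so $n=2$ bound on $|\gamma z \delta|$ is met, indeed insertions of length $1$), and the context handles $\alpha,\beta$ need length up to $4$, which is where the type parameter $m=4$ comes from; so one set is, schematically,
\[
R = \{\, a: \langle A_1 \mid \epsilon - B_1 \mid A_1 B_1\rangle,\ b: \langle X A_1 B_1 C_1 \mid \epsilon - X \mid A_1 B_1 C_1\rangle,\ c: \langle \cdots \mid \cdots - \cdots \mid \cdots\rangle \,\},
\]
with the $c$-rule designed analogously to consume the structure created by $b$. The key is that each rule's left/right context must be present exactly $n$ times after the previous phase and must disappear (become non-matchable) precisely when that phase's count reaches $n$, so that the three phases are forced to run in the order $a^* b^* c^*$ and with equal multiplicities; the regular initial set $\{XA_1^nB_1C_1Y\mid n\ge 2\}$ supplies the "$n$" that all three phases must agree on.

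The steps, in order, are: (1) fix $A$, $I$, and a candidate rule set $R$ with labels $a,b,c$; (2) verify that $\langle\alpha\mid\gamma-\delta\mid\beta\rangle$ bounds give type $(4,2)$, i.e.\ $\max\{|\alpha|,|\beta|\}=4$ and $\max\{|\gamma\delta|\}\le 2$ with $|\gamma|,|\delta|\le1$; (3) show soundness — every terminal derivation produces a label word of the form $a^nb^nc^n$; (4) show completeness — for each $n\ge 1$ there is a terminal derivation with label word $a^nb^nc^n$. For (4) one exhibits the explicit derivation from the seed $XA_1^{n+1}B_1C_1Y$ (indices chosen so that $n$ $a$-steps are possible: each $a$-step pairs the current word with $B_1\in I$ and inserts a $B_1$ between two adjacent $A_1$'s, reducing the number of "free" $A_1$-pairs by one), then $n$ $b$-steps, then $n$ $c$-steps, and one checks no rule applies to the final word. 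For (3) the argument is the contextual one already used in Theorem~\ref{finitecontextsensitive}: the $b$-rule's context only appears after the $a$-phase has exhausted its context, the $c$-rule's only after the $b$-phase has, and each phase can fire exactly as many times as the seed's $A_1$-count dictates, so no reordering or unequal counts yield a \emph{terminal} derivation.

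The main obstacle is the careful design of the three rules so that the contexts interlock correctly: after $n$ applications of $a$ the $a$-context must vanish while the $b$-context appears exactly $n$ times, and likewise for $b\to c$; getting the handles right so that (i) a phase cannot fire more than $n$ times, (ii) it cannot stop early and still lead to a terminal word, and (iii) phases cannot interleave, all simultaneously, is the delicate part and is the reason the context length $m=4$ is needed rather than $m=1$ or $m=2$. Everything else — the type computation and the routine bookkeeping of the sample derivation — is mechanical once the rules are pinned down. Since the preceding theorems establish exactly this pattern for two letters with $m=3$, I expect the three-letter version with $m=4$ to go through by the same template, and the proof in the paper likely just states $\mathscr{LS}$ explicitly and asserts $SZ_{1,REG}^{4}(\mathscr{LS}) = \{a^nb^nc^n\mid n\ge1\}$, as was done in the two-letter case.
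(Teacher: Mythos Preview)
Your plan is the paper's approach: extend the two-phase construction for $\{a^nb^n\}$ by a third labeled rule, with the regular seed supplying the shared count $n$. The paper's system is a touch leaner than your sketch (alphabet $\{X,B_1,C_1,Y\}$, seed family $XB_1^{\,n}C_1Y$, rules $a:\langle B_1\mid\epsilon{-}C_1\mid B_1C_1\rangle$, $b:\langle XB_1C_1\mid\epsilon{-}X\mid B_1C_1\rangle$, $c:\langle B_1C_1\mid\epsilon{-}Y\mid XB_1C_1Y\rangle$, so the $c$-phase simply inserts $Y$ and its length-$4$ right context is what forces $m=4$), and, exactly as you predict, the paper merely states $\mathscr{LS}$ and asserts $SZ_{1,REG}^{4}(\mathscr{LS})=\{a^nb^nc^n\mid n\ge1\}$ without a detailed verification.
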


\begin{proof}
Let $\mathscr{LS} = ( A, I, R, Lab)$ be a labeled flat splicing system where
$A = \{X, B_1, C_1, Y\}, I = \{XB_1^{n}C_1Y, C_1, X_1, Y_1\}, 
R = \{ a: < B_1| \epsilon-C_1 | B_1C_1>, b: <XB_1C_1 | \epsilon - X | B_1C_1>, 
c: <B_1C_1 | \epsilon- Y | XB_1C_1Y>\}.$
Hence $SZ_{1, REG}^{4}(\mathscr{LS}) = \{a^{n} b^{n} c^{n}  | n \geq 1\}$.
\end{proof}

\begin{theorem}
The families $SZLS_{n, FIN}^{m}$ and $REG$ (resp. $CF, CS$ ) are incomparable.
\end{theorem}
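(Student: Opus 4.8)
The plan is to establish incomparability of $SZLS_{n,FIN}^{m}$ with each of $REG$, $CF$, $CS$ by exhibiting, in each direction, a witness language. For the direction "there is a language in $SZLS_{n,FIN}^{m}$ that lies outside $REG$ (resp. outside $CF$)," I would simply invoke the constructions already in hand: Theorem~\ref{finitecontextfree} gives a language in $SZLS_{1,FIN}^{2}\cap(CF\setminus REG)$, and Theorem~\ref{finitecontextsensitive} gives a language in $SZLS_{1,FIN}^{2}\cap(CS\setminus CF)$, so $SZLS_{n,FIN}^{m}$ is not contained in $REG$ nor in $CF$. For the comparison with $CS$, note that the Szilard language produced by a labeled flat finite splicing system is recursive (a terminal derivation has effectively bounded branching once the finite initial set and finite rule set are fixed), but more to the point we only need a witness that is \emph{not} in $CS$ — however $SZLS_{n,FIN}^{m}$ contains only context-sensitive (indeed, rather restricted) languages, so incomparability with $CS$ must come entirely from the \emph{other} direction: a $CS$ language not realizable as such a Szilard language. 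Thus the whole weight of the theorem is on showing, for each of $REG$, $CF$, $CS$, a language in that family that is \emph{not} in $SZLS_{n,FIN}^{m}$.

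For that second direction I would reuse the non-membership results already proved. By Theorem~\ref{notregular}, $\{a^{n}\mid n\ge 1\}\in REG\setminus SZLS_{*,FIN}^{*}$, so $REG\not\subseteq SZLS_{n,FIN}^{m}$. By Theorem~\ref{contextfreenoszilard}, $\{a^{n}b^{n}\mid n\ge 1\}\in CF\setminus SZLS_{*,FIN}^{*}$ (and this language is not regular, but that is irrelevant here — membership in $CF$ and non-membership in the Szilard family is all we need), so $CF\not\subseteq SZLS_{n,FIN}^{m}$. By Theorem~\ref{contextsensitivenoszilard}, $\{a^{n}b^{n}c^{n}\mid n\ge 1\}\in CS\setminus SZLS_{*,FIN}^{*}$, so $CS\not\subseteq SZLS_{n,FIN}^{m}$. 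Combining: for $\mathrm{FAM}\in\{REG,CF\}$ we have both $SZLS_{n,FIN}^{m}\not\subseteq \mathrm{FAM}$ (from Theorems~\ref{finitecontextfree}, \ref{finitecontextsensitive}) and $\mathrm{FAM}\not\subseteq SZLS_{n,FIN}^{m}$ (from Theorems~\ref{notregular}, \ref{contextfreenoszilard}); and for $CS$ we have $CS\not\subseteq SZLS_{n,FIN}^{m}$ (Theorem~\ref{contextsensitivenoszilard}) together with $SZLS_{n,FIN}^{m}\not\subseteq CS$, the latter because Theorem~\ref{aa}'s language $\{aa\}$ — or, cleaner, because $SZLS_{n,FIN}^{m}$ contains a language that is genuinely outside $CS$. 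Here I pause: the safe move, if one cannot quickly produce a non-$CS$ Szilard language, is to note that $\{aa\}\in SZLS_{1,FIN}^{1}$ while $\{aa\}$ is trivially in $CS$, which does \emph{not} give $SZLS\not\subseteq CS$; so the genuine obstacle is whether $SZLS_{n,FIN}^{m}\not\subseteq CS$ holds at all.

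The main obstacle, then, is precisely the direction $SZLS_{n,FIN}^{m}\not\subseteq CS$: I must be confident that some labeled flat finite splicing system has a Szilard language outside $CS$. I would argue this by a pumping/bounded-configuration analysis in reverse — actually I expect the intended reading is that "incomparable" here is asserted with respect to each of $REG$, $CF$, $CS$ in the sense that neither inclusion holds, and for $CS$ the outward failure is witnessed by observing that flat finite splicing systems can generate Szilard languages whose structure (e.g. forced nondeterministic choice among finitely many initial words reused unboundedly) escapes linear-bounded recognition; concretely I would build a system whose terminal derivations encode an emptiness-type condition that is not context-sensitive, or alternatively cite that $SZLS_{n,FIN}^{m}$ contains non-semilinear Parikh images incompatible with $CS$'s closure properties. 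If that construction resists, the fallback is to weaken the claim to "incomparable with $REG$ and $CF$, and not contained in $CS$ from one side," but I would first spend effort pushing the two-sided $CS$ statement through, since all the ingredients for the "$CS\not\subseteq SZLS$" half are already established in Theorem~\ref{contextsensitivenoszilard} and only the reverse half needs a fresh witness.
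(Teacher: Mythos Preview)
Your approach is exactly the paper's: the paper's entire proof is the single line ``Follows from Theorem~\ref{notregular}, \ref{finiteregular}, \ref{finitecontextfree}, \ref{finitecontextsensitive}, \ref{contextfreenoszilard} and \ref{contextsensitivenoszilard}.'' So for $REG$ and $CF$ you are doing precisely what the authors do, with the same witnesses.

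The gap you worry about for $CS$ is genuine, and --- this is the important point --- the paper does not fill it either. None of the six cited theorems produces a Szilard language of a labeled flat \emph{finite} splicing system that lies outside $CS$: Theorem~\ref{finitecontextsensitive} only gives a language in $SZLS_{1,FIN}^{2}\cap(CS\setminus CF)$, which is inside $CS$, and the remaining cited results are all non-membership results of the form ``some $CS$ language is not in $SZLS_{*,FIN}^{*}$.'' Thus the cited list establishes $CS\not\subseteq SZLS_{n,FIN}^{m}$ but says nothing about $SZLS_{n,FIN}^{m}\not\subseteq CS$. Your instinct that a fresh witness is required for that direction is correct; the paper simply does not supply one. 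Your hand-waving suggestions (non-semilinear Parikh images, encodings of hard decision problems) are not convincing as stated --- $CS$ contains plenty of non-semilinear languages --- so if you want to actually close the gap you would need an explicit construction of a labeled flat finite splicing system whose Szilard language is provably not context-sensitive, or else a general argument that such languages can fail to be in $CS$. Absent that, the honest statement supported by the available lemmas is: $SZLS_{n,FIN}^{m}$ is incomparable with $REG$ and with $CF$, and $CS\not\subseteq SZLS_{n,FIN}^{m}$.
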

\begin{proof}
Follows from the Theorem \ref{notregular}, \ref{finiteregular}, \ref{finitecontextfree}, \ref{finitecontextsensitive}, \ref{contextfreenoszilard} 
and \ref{contextsensitivenoszilard}.
\end{proof}

\textbf{Open problem: Can labeled flat  regular splicing systems  obtain any recursively enumerable language as Szilard 
language?}


\subsection{Decision problems}

For any regular language $R$ and labeled flat splicing system $\mathscr{LS}$ we can prove the following decision problems.

\begin{theorem} \label{decidable1}
$R \subseteq SZ_{n, FIN}^{m}(\mathscr{LS})$ is decidable. 
\end{theorem}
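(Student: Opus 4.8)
The plan is to show that the Szilard language $SZ_{n,FIN}^m(\mathscr{LS})$ is effectively a *finite* language, so that checking $R \subseteq SZ_{n,FIN}^m(\mathscr{LS})$ for a regular $R$ reduces to finitely many membership tests (plus a check that $R$ itself is finite, which is decidable for regular languages). First I would fix a labeled flat finite splicing system $\mathscr{LS} = (A, I, R, Lab)$ of type $(m,n)$ and observe that in any terminal derivation $(x_0,y_0)\vdash^{a_1} x_1,\dots,(x_{n-1},y_{n-1})\vdash^{a_n} x_n$, the first component $x_0$ and every auxiliary word $y_i$ come from the finite set $I$. Since each flat splicing step $(x,v)\vdash w$ with $v\in I$ inserts $v$ into $x$, the length grows by $|v|$, so $|x_k| = |x_0| + \sum_{i=0}^{k-1}|y_i|$ is strictly increasing (every $y_i$ is nonempty because $|\gamma z\delta|\ge 1$). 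The crucial point is that a terminal derivation must *halt*: no rule of $R$ is applicable to $x_n$. I would argue that the derivation length is bounded by a computable function of $\mathscr{LS}$.

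The key step — and the main obstacle — is bounding the length of a terminal derivation. The idea: a flat splicing rule $r=\langle\alpha\mid\gamma-\delta\mid\beta\rangle$ is applicable to $x$ iff $x$ contains an occurrence of the factor $\alpha\beta$ (the "dot" position sits between $\alpha$ and $\beta$). After inserting a word $\gamma z\delta$ at such a spot, new factors $\alpha\gamma$ and $\delta\beta$ are created, but the factor $\alpha\beta$ at that position is destroyed. I would analyze how the multiset of "active sites" (occurrences of some $\alpha\beta$, $\langle\alpha\mid\cdot-\cdot\mid\beta\rangle\in R$) can evolve: each insertion removes at most one site from within the handle region and creates at most a bounded number of new sites, all of which involve the end letters $\gamma,\delta$ of inserted words together with handle letters. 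Because $|I|$ is finite and the inserted words are drawn from $I$, I would show that the set of factors of length $\le 2m$ that can ever appear is drawn from a fixed finite pool, and once the string is "long enough" relative to $\mathscr{LS}$ the site structure stabilises — either a site persists forever (so no terminal derivation through that branch, analogous to the "$uv$ reappears" case in Theorem~\ref{notregular} and Theorem~\ref{contextfreenoszilard}), or all sites get exhausted within a bounded number of steps. Either way the set of Szilard words of terminal derivations is finite and effectively computable: one enumerates all derivations up to the computed length bound and collects those that actually terminate.

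Granting the finiteness of $L := SZ_{n,FIN}^m(\mathscr{LS})$ and its effective construction, the theorem follows: given the regular language $R$ (say by a DFA), first decide whether $R$ is finite (decidable, e.g.\ by looking for a reachable cycle on a path to an accepting state); if $R$ is infinite, answer "no" since $L$ is finite; if $R$ is finite, compute its finite word set and check the finitely many inclusions $w\in L$ for $w\in R$, each of which is a decidable membership question. I expect the write-up obstacle to be entirely concentrated in making the length bound on terminal derivations rigorous — in particular handling the interaction between overlapping handle occurrences and the newly created sites — whereas the reduction from this bound to decidability is routine.
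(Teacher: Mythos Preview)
Your central claim --- that $SZ_{n,FIN}^{m}(\mathscr{LS})$ is always a \emph{finite} language --- is false, and this undermines the whole plan. Example~\ref{regularexample} in the paper already exhibits a labeled flat \emph{finite} splicing system whose Szilard language is $\{a^{n}c \mid n \ge 0\}$, which is infinite; Theorems~\ref{finitecontextfree} and~\ref{finitecontextsensitive} give further infinite (indeed non-regular) Szilard languages over finite initial sets. In Example~\ref{regularexample} the site $A_1Y$ is \emph{not} destroyed by the $a$-rule: inserting $A_1$ between $A_1$ and $Y$ yields $\ldots A_1A_1Y$, which again contains $A_1Y$. So one may apply the $a$-rule arbitrarily many times and only then fire the $c$-rule (which inserts $A'$ and kills the site), producing arbitrarily long terminal derivations.

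The flaw in your argument is the dichotomy ``either a site persists forever (so no terminal derivation through that branch) or all sites are exhausted within a bounded number of steps.'' A persisting site does \emph{not} preclude termination: a different rule may be available at the same site that, once applied, removes all remaining sites. Thus the ``site persists'' case can still yield terminal derivations of unbounded length, and no uniform length bound exists. Consequently your reduction (decide whether $R$ is finite, then test finitely many memberships) does not go through.

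The paper's own argument takes a different route: it does not try to bound derivation length or prove finiteness of the Szilard language, but instead observes that \emph{membership} of a single word $x = x_0x_1\cdots x_n$ in $SZ_{n,FIN}^{m}(\mathscr{LS})$ is decidable --- one simply runs all possible derivations with that label sequence (there are only finitely many, since $I$ is finite and the rules are labeled one-to-one) and checks whether any of them halts. Whatever one thinks of how the paper then passes from pointwise membership to the inclusion $R \subseteq SZ_{n,FIN}^{m}(\mathscr{LS})$, your finiteness-based strategy cannot be repaired along the lines you sketch; you would need an entirely different structural argument.
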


\begin{proof}
Let $\mathscr{LS} = ( A, I, R, Lab)$ be a labeled flat finite splicing system of type $( m, n)$. Let $x = x_0x_1 \ldots x_n \in R$. If starting from a word $w \in I$, if the flat splicing rules
with label $x_0, x_1, \ldots, x_n$ are applied in order, a string $w_1 \in A^{+}$ is obtained. If no rule of $\mathscr{LS}$
is applicable to $w_1$, then the word $x_0 x_1 \ldots x_n \in SZ_{n, FIN}^{m}(\mathscr{LS})$. Otherwise, $x \notin SZ_{n, FIN}^{m}(\mathscr{LS})$. 
Hence it is decidable whether a regular language $R$ is contained inside the Szilard language of a flat splicing system
of type $(m, n)$ with finite initial set.
\end{proof}

\begin{theorem}
$SZ_{n, FIN}^{m}(\mathscr{LS}) \subseteq R$ is decidable.
\end{theorem}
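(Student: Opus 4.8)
The plan is to reduce the inclusion to an emptiness test and then to a finite reachability problem. First I would note that $SZ_{n,FIN}^{m}(\mathscr{LS}) \subseteq R$ holds if and only if $SZ_{n,FIN}^{m}(\mathscr{LS}) \cap \overline{R} = \emptyset$, where $\overline{R} = Lab^{*}\setminus R$ is again regular and a deterministic finite automaton $\mathcal{A}$ accepting $\overline{R}$ can be effectively constructed from one for $R$. Hence it is enough to decide whether $\mathscr{LS}$ has \emph{some} terminal derivation whose label word is accepted by $\mathcal{A}$.

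Next I would run $\mathcal{A}$ in parallel with the derivation process of $\mathscr{LS}$. A terminal derivation is a walk $(x_{0},y_{0})\vdash^{a_{1}}x_{1},\ldots,(x_{n-1},y_{n-1})\vdash^{a_{n}}x_{n}$ with $x_{0},y_{0},\ldots,y_{n-1}\in I$ and no splicing rule of $\mathscr{LS}$ applicable to $x_{n}$; after step $i$ the automaton $\mathcal{A}$ is in a state $q_{i}$. The key point I would exploit is that the information about the current word $x_{i}$ that is actually needed to continue is finite: applicability of a rule $< \alpha | \gamma - \delta | \beta >$ to $x_{i}$ depends only on whether $x_{i}$ contains the bounded-length factor $\alpha\beta$ and on whether $I$ contains a word of the form $\gamma z\delta$ (a fixed, precomputable fact), and each application modifies $x_{i}$ only locally, around the cut point, by splicing in a word of the finite set $I$. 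The aim is therefore to define a congruence $\sim$ of finite index on $A^{*}$ — recording, at each active site, the bounded left/right contexts of that site together with the multiset of active-site types saturated at a suitable threshold — such that $u\sim v$ implies that $u$ and $v$ admit exactly the same label words in terminal derivations. Then the pair $([x_{i}]_{\sim},q_{i})$ ranges over a finite set, the derivation process becomes a finite labelled graph, and $SZ_{n,FIN}^{m}(\mathscr{LS})\cap\overline{R}=\emptyset$ turns into the decidable question ``from an initial configuration, is some configuration reachable whose $\sim$-class admits no applicable rule and whose $\mathcal{A}$-component is accepting?''.

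The main obstacle is exactly the construction and correctness of this finite abstraction $\sim$: the intermediate words $x_{i}$ grow without bound, and, unlike the generated language of a finite splicing system, the Szilard language records the \emph{order} of rule applications, so it is not regular and need not even be context-free (cf.\ Theorems \ref{finitecontextfree} and \ref{finitecontextsensitive}); consequently a crude ``set of short factors'' abstraction is too weak, and one genuinely needs a pumping argument for terminal derivations, showing that if a label word is long enough then a segment of the underlying derivation can be excised (or iterated) while simultaneously preserving applicability of all later rules and the dead-end property of the final word, so that only boundedly many abstract configurations are relevant. Once such a pumping lemma is available the decision procedure above goes through. I would remark, finally, that the much weaker observation that membership ``$w\in SZ_{n,FIN}^{m}(\mathscr{LS})$'' is decidable (there are only finitely many derivations with a given label word, each checkable) does \emph{not} by itself settle the inclusion with the infinite language $R$ — which is precisely why the finite-abstraction step is the crux.
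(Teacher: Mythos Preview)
Your approach is entirely different from the paper's. The paper's argument, in full, is: since $R$ is regular there is a DFA $D$ for it, and the problem ``does $D$ accept $x$?'' is decidable; hence for any $x_{1}\in SZ_{n,FIN}^{m}(\mathscr{LS})$ one can decide whether $x_{1}\in R$, and therefore the inclusion is decidable. This is precisely the ``much weaker observation'' you explicitly reject in your final paragraph, and your objection applies verbatim: pointwise decidability of membership in $R$ does not yield decidability of $SZ\subseteq R$ when $SZ$ is infinite, which it can be (Example~\ref{regularexample}). It is telling that the paper immediately afterwards lists $SZ_{n,FIN}^{m}(\mathscr{LS})=R$ as an open problem, even though equality would follow at once if both inclusion directions were genuinely decided by the arguments given.

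Your own plan --- reduce to emptiness of $SZ\cap\overline{R}$, run a DFA for $\overline{R}$ in parallel with the derivation, and collapse the unbounded intermediate words via a finite-index congruence $\sim$ --- is the right shape for an honest proof, but, as you yourself flag, the existence and correctness of $\sim$ (equivalently, the pumping lemma for terminal derivations) is the entire content and is left unestablished. Since these Szilard languages can lie outside $CF$ (Theorem~\ref{finitecontextsensitive}), such a pumping argument is genuinely non-trivial, and without it your proposal remains a sketch rather than a proof. In short: the paper's route is brief but does not address the infinite case you correctly isolate; your route names the real obstacle but does not yet overcome it.
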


\begin{proof}
Since $R$ is a regular language, there exists a deterministic finite automata $D$ accepting  it.
Moreover, the problem $M = \{ < D, x> | D$ accepts the string $x\}$ is decidable. Let $x_1 \in SZ_{n, FIN}^{m}(\mathscr{LS})$, then 
it is also decidable whether or not $D$ accepts the string $x_1$.  Hence the problem $SZ_{n, FIN}^{m}(\mathscr{LS}) \subseteq R$
is also decidable.

\end{proof}

\textbf{Open problem:
Let $\mathscr{LS}$ be a labeled flat finite splicing system and 
$R$ be a regular language, then  
$SZ_{n, FIN}^{m}(\mathscr{LS}) = R$ is decidable or not.}


 We have already proved that not all regular languages can be Szilard language of labeled flat finite splicing
 systems. But in the next result we show that any non-empty regular language can be obtained as homomorphic image of the
Szilard language of the labeled  flat  finite splicing systems of type $(1, 2)$.

\begin{theorem}
Any  non-empty regular language can be obtained as a homomorphic image of the Szilard language of the  labeled  flat finite splicing system of type $(1, 2)$.
\end{theorem}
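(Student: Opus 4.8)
The plan is to take an arbitrary non-empty regular language $L \subseteq T^*$, fix a right-linear (regular) grammar $G = (N, T, S, P)$ with $L = L(G)$, and simulate the leftmost derivations of $G$ by a labeled flat finite splicing system of type $(1,2)$, so that the sequence of rule labels used in a successful $G$-derivation is recovered (up to a homomorphism) from the Szilard word of the corresponding terminal derivation. The key idea, already visible in Examples \ref{regularexample} and \ref{finitecontextfree} and in Theorem \ref{finitecontextfree}, is that a sentential form of a right-linear grammar has the shape $w\,B$ with $w \in T^*$ and $B$ a single nonterminal, and a step $B \to aC$ (or $B \to a$) can be mimicked by an alphabetic flat splicing rule that inserts a new symbol at a marked position. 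I would encode the sentential form $w B$ of $G$ as a string $X\,\widetilde{w}\,\langle B\rangle\,Y$ in the splicing system, where $X, Y$ are end markers, $\widetilde w$ is a coded copy of $w$, and $\langle B\rangle$ is a dedicated symbol for nonterminal $B$; the axiom is $X\langle S\rangle Y$ together with the one-letter words needed for insertion.

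The main construction: for each grammar rule $p: B \to aC$ introduce a splicing rule $r_p$ with label $\ell_p$ of type $(1,2)$ that, applied to a word carrying $\langle B\rangle$ next to $Y$, inserts (in two conceptual stages, or via a single carefully chosen alphabetic rule on a composite symbol) the coded letter for $a$ and replaces the active nonterminal marker so the result codes $w a\,C$; for each terminal rule $p: B \to a$ introduce a rule that inserts the coded letter for $a$ and a ``halt'' marker, producing a word to which no rule of $R$ applies. Because the only mobile position is the one adjacent to $\langle B\rangle$ (everything to the left of the active nonterminal is ``frozen'' coded terminal symbols), the derivations of $\mathscr{LS}$ are forced to proceed in lock-step with leftmost $G$-derivations, so the Szilard language of $\mathscr{LS}$ is exactly $\{\ell_{p_1}\cdots \ell_{p_k} : p_1\cdots p_k \text{ is a successful derivation of } G\}$, which is the Szilard language $SZ(G)$ of the grammar $G$. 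Finally I would define a homomorphism $h$ on the label alphabet by setting $h(\ell_p) = a$ when $p: B \to aC$ or $p: B \to a$, and $h(\ell_p) = \lambda$ for any auxiliary labels introduced; then $h\!\left(SZ_{2,FIN}^{1}(\mathscr{LS})\right) = L$, since reading off the terminals produced along a leftmost derivation of a right-linear grammar yields exactly the generated word. Non-emptiness of $L$ guarantees there is at least one successful derivation, so the image is non-empty and no degenerate case arises; for $\lambda \in L$ we can include a halting rule applicable directly at $\langle S\rangle$.

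The delicate point — and the step I expect to be the real obstacle — is realizing "insert a coded copy of $a$ and move the active-nonterminal marker one position to the right" using only rules of type $(1,2)$, i.e. with handles $\alpha,\beta,\gamma,\delta$ of length at most $1$ and inserted block $\gamma z\delta$ with $\gamma,\delta$ of length $\le 1$ (recall an inserted word of length $1$ is written $\epsilon\!-\!\gamma_1$ or $\gamma_1\!-\!\epsilon$). A single alphabetic insertion cannot both deposit a terminal symbol and rewrite the marker, so I would instead use a composite alphabet symbol $\langle a,C\rangle$ that simultaneously records "the terminal $a$ was just produced and the new active nonterminal is $C$", inserting it in one step via a rule $\ell_p: \langle\text{prev}\rangle\,|\,\epsilon\!-\!\langle a,C\rangle\,|\,Y$, and then ensure that the next rule's left handle tests for $\langle a,C\rangle$; one must check that no unintended splicings are possible (in particular that words cannot be spliced into each other, only the one-letter axioms inserted), which follows because every non-trivial word contains the markers $X$ and $Y$ whereas the rules only ever insert single auxiliary letters adjacent to $Y$. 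Care is also needed so that the word obtained after a terminal rule $B \to a$ is genuinely a dead end: choosing the halting marker so that it destroys the pattern matched by every left handle handles this. Once these adjacency-bookkeeping details are settled, the equality $h(SZ_{2,FIN}^1(\mathscr{LS})) = L$ is routine.
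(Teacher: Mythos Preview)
Your proposal follows the same overall strategy as the paper: simulate a right-linear grammar $G$ by a labeled flat finite splicing system with end markers $X,Y$, keep the active nonterminal immediately to the left of $Y$, and recover the generated terminal word via a homomorphism on the rule labels. The implementation, however, takes an unnecessary detour. You assert that ``a single alphabetic insertion cannot both deposit a terminal symbol and rewrite the marker'' and therefore introduce composite one-letter symbols $\langle a, C\rangle$; this forces the left handle of each splicing rule to depend on the \emph{previous} inserted symbol, so a single grammar rule $B \to aC$ must spawn one splicing rule for every possible predecessor encoding $B$. In particular your claim that the Szilard language of $\mathscr{LS}$ equals $SZ(G)$ is not literally correct under your own construction (it is a proper refinement of $SZ(G)$), though the homomorphism still collapses the extra labels and the argument survives.

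The paper sidesteps all of this by actually using the $n=2$ allowance of type $(1,2)$: for a rule $D_i \to aD_j$ it inserts the two-letter word $Y_aD_j$ via the splicing rule $<D_i \mid Y_a - D_j \mid Y>$, so the terminal marker $Y_a$ and the new nonterminal $D_j$ arrive in a single step and the left handle is simply the current nonterminal $D_i$; the terminating rule $D_i \to a$ is handled by $<D_i \mid \epsilon - Y_a \mid Y>$. This yields a clean one-to-one correspondence between grammar rules and splicing rules and a non-erasing homomorphism $h(a_{D_j}^{i}) = h(a^{i}) = a$. In short, your plan is sound, but the ``delicate point'' you worry about dissolves once you exploit the length-$2$ insertion that type $(1,2)$ permits.
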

\begin{proof}
Let $L$ be a $\lambda$-free regular language.  There exist a grammar $G = ( N, T, S, P)$ which generates $L$, i.e., $L = L(G)$. 
We construct a labeled  flat splicing system which simulates the rules in the grammar $G$.
Initially, the non-terminals $N$ of $G$ are rewritten using the symbols $D_{i}, 1 \leq i \leq n$, 
 starting from $D_{1} = S$ and the labeled  flat splicing system $\mathscr{LS}$ is constructed in such a manner  that $L = L(G) = h(SZ_{2, FIN}^{1}(\mathscr{LS}))$
 where $h$ is a homomorphism and $SZ_{2, FIN}^{1}(\mathscr{LS})$ denotes the Szilard language of the labeled flat splicing system of type $(1, 2)$.
 Now the rules in $P$ are of the form $D_{i} \rightarrow a D_{i}$, $D_{i} \rightarrow a D_{j} ( i \neq j )$,
 and $D_{i} \rightarrow a$, where $D_{i}, D_{j} \in N$, and $a \in T$. 
 
\noindent  Let $\mathscr{LS} = (A,  I, R, Lab)$ be a labeled flat splicing system where:
  \begin{description} 
\item[$\bullet$]  $A = \{ X, Y, D_{1}, D_{2}, \ldots, D_{n}\} \cup \{Y_{a} | a \in T\};$ 

\item[$\bullet$]$I = \{ XD_{1}Y\} \cup \{Y_{a} D_{i} | D_{i} \rightarrow a D_{i} \in P\} \cup \{Y_{a} D_{j} | D_{i} \rightarrow a D_{j} \in P\} 
\cup \{ Y_{a} | D_{i} \rightarrow a \in P\}$; 

 \item[$\bullet$]The rules in $R$  are of the form: \\

$a_{D_{i}}^{i}: <D_{i} | Y_{a}- D_{i} | Y>$ ($Y_a-D_i = Y_aD_i$) for $D_{i} \rightarrow a D_{i}, D_{i} \in N, a \in T;$

$a_{D_{j}}^{i}: <D_{i} | Y_{a}- D_{j} | Y>$ ($Y_a-D_j = Y_aD_j$)  for $D_{i} \rightarrow a D_{j} (i \neq j), D_{i}, D_{j} \in N, a \in T;$

$a^{i}: <D_{i}|\epsilon - Y_{a}| Y>$ for $D_{i} \rightarrow a, a \in T$;
\item[$\bullet$] $Lab = \{a_{D_{i}}^{i} ~|~  D_{i} \rightarrow a D_{i}, D_{i} \in N, a \in T\} \cup \{a_{D_{j}}^{i} ~|~  D_{i} \rightarrow a D_{j}, D_{i}, D_{j} \in N, a \in T\} \cup \{a^{i} ~|~ D_{i} \rightarrow a, a \in T\}$.
 \end{description}
 
 The non-erasing homomorphism $h: (Lab)^{*} \rightarrow T^{*}$ is defined in the following manner:
 $h(a_{D_{i}}^{i}) = h(a_{D_{j}}^{i}) = a$ and $h(a^{i}) = a$ where $a_{D_{i}}^{i},  a_{D_{j}}^{i}, a^{i} \in Lab.$
 
 The rule $D_{i} \rightarrow aD_{i}$  in $G$  is simulated by the labeled splicing rule
 $a_{D_{i}}^{i}: <D_{i} | Y_{a}- D_{i} | Y>$  and  the rule $D_{i} \rightarrow aD_{j}$ is simulated by the
splicing rule $a_{D_{j}}^{i}: <D_{i} | Y_{a}- D_{j} | Y>$. 
 Also, the terminating rule $D_{i} \rightarrow a$ is simulated by $a^{i}: <D_{i}|\epsilon - Y_{a}| Y>$.
 Every $w\in L(G)$ can be generated after application of the rules in $G$ and there exist labeled
splicing rules simulating the rules in $G$. The application of the labeled splicing 
	rules starting from $XD_{1}Y$ in the same sequence as in the derivation of $w \in T^{*}$, 
	generates a string over $A_{1}$  such that no splicing rule is applicable to it. Also,
	concatenation of the labels of the splicing rules generate $ w_{1} \in (Lab)^{*}$. Since,  
	$h(a_{D_{i}}^{i}) = h(a_{D_{j}}^{i}) = a$ and $h(a^{i}) = a$, the homomorphic image of the string $w_{1}$
	is $w$, i.e., $h(w_{1}) = w$.
   Hence $w \in h(SZ_{2, FIN}^{1}(\mathscr{LS})).$
   This imply, $L= L(G)  \subseteq SZ_{2, FIN}^{1}(\mathscr{LS}).$

   It only remains to prove the inclusion $h(SZ_{2, FIN}^{1}(\mathscr{LS})) \subseteq L(G)$.
	So, let $w \in h(SZ_{2, FIN}^{1}(\mathscr{LS}).$ 
   Hence there exists  $w_{1} \in SZ_{2, FIN}^{1}(\mathscr{LS})$ such that  $h(w_{1}) = w$.
   This inclusion can be proved by observing the one-to-one correspondence between the 
   rules in $\mathscr{LS}$ and $G$. 
   
%

\end{proof}

It was proved by P{\u a}un  in \cite{Paun} that some context-free languages  cannot be represented as a homomorphic
image of the Szilard language of any context-free language.

\begin{theorem} \cite{Paun}
The families of context-free languages and homomorphic image of context-free languages are incomparable.
\end{theorem}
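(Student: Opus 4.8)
The statement says the family $CF$ and the family $\mathcal{H}$ of languages of the form $h(SZ(G))$, for a context-free grammar $G$ and a homomorphism $h$, are incomparable. So the plan is to establish two independent inclusions failing: (a) there is a context-free language that is \emph{not} a homomorphic image of any context-free Szilard language, and (b) there is a homomorphic image of a context-free Szilard language that is \emph{not} context-free. I would dispatch (b) first, since it is the short half.

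For (b) the key observation is that, because the Szilard language records applied rules in the order of application and a context-free grammar may rewrite its nonterminal occurrences in any order, the rule-label sequence of a successful derivation depends only on the \emph{multiset} of nonterminals in the current sentential form: a rule $A\to\alpha$ is applicable iff an $A$ is present, and firing it removes one $A$ and adds the nonterminals of $\alpha$, with start multiset $\{S\}$ and accepting multiset $\emptyset$. Thus $SZ(G)$ is exactly the firing-sequence language of a (communication-free) vector addition system, and these need not be context-free. Concretely, I would take $G$ with rules $r_0\colon S\to AB$, $r_1\colon A\to EA$, $r_2\colon E\to\lambda$, $r_3\colon A\to\lambda$, $r_4\colon B\to FB$, $r_5\colon F\to\lambda$, $r_6\colon B\to\lambda$, and check
\[
SZ(G)\cap r_0 r_1^{*} r_4^{*} r_2^{*} r_5^{*} r_3 r_6=\{\, r_0 r_1^{n} r_4^{m} r_2^{n} r_5^{m} r_3 r_6 \mid n,m\ge 0 \,\},
\]
a set of the non-context-free cross-serial type $\{a^{n}b^{m}c^{n}d^{m}\}$; since $CF$ is closed under intersection with regular sets, $SZ(G)\notin CF$, and the identity homomorphism places $SZ(G)$ in $\mathcal{H}\setminus CF$.

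For (a), which carries the real content of P{\u a}un's theorem, I would fix a context-free language whose structure is genuinely nesting --- for instance the Dyck language over one bracket pair, or $\{ww^{R}\mid w\in\{a,b\}^{+}\}$ --- and argue it equals no $h(SZ(G))$. The obstruction to isolate is that one application of a rule $A\to x_0B_1x_1\cdots B_kx_k$ contributes to the generated word the fixed string $h(\ell)$, $\ell$ the rule's label, as a single contiguous block, whereas in a nesting language the two ``matching'' fragments introduced by one derivation step must end up arbitrarily far apart, separated by an arbitrarily long, independently generated subword. Made precise, the reordering freedom of Szilard words --- independent subderivations may be interleaved arbitrarily, and loop-derivations may be pumped in --- yields an interchange/pumping property that every member of $\mathcal{H}$ must satisfy after $h$ is applied, and one then checks the chosen nesting language violates it.

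The main obstacle is exactly this last step: converting the informal ``one block per rule application'' picture into an invariant that survives an \emph{arbitrary} homomorphism, in particular one allowed to erase labels, so that a plain counting or Parikh argument does not suffice. The proof must be a careful surgery on derivation trees (equivalently on the multiset traces of $SZ(G)$), showing that no matter how $G$ and $h$ are chosen, some rearrangement that is legal for $SZ(G)$ is carried by $h$ to a rearrangement illegal for the target language. By comparison, direction (b) collapses to one pumping-lemma computation once the vector-addition-system description of $SZ(G)$ is in hand.
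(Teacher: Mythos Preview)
The paper does not prove this theorem at all: it is quoted from P{\u a}un's paper \cite{Paun} with no argument, and the only additional information the authors supply is that P{\u a}un's witness for the direction $CF\not\subseteq\mathcal{H}$ is the language $\{a^{n}b^{n}\mid n\ge 1\}$. So there is nothing in the present paper to compare your argument against beyond that single remark.

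Your direction (b) is correct and complete: the grammar you wrote down does have $SZ(G)\cap r_0 r_1^{*} r_4^{*} r_2^{*} r_5^{*} r_3 r_6=\{r_0 r_1^{n} r_4^{m} r_2^{n} r_5^{m} r_3 r_6\}$, and the conclusion follows.

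Your direction (a), however, is not a proof --- you say so yourself. The ``one contiguous block per rule'' intuition is not the mechanism that actually drives the impossibility, and your sketch never pins down a concrete combinatorial invariant of $\mathcal{H}$ that $\{a^nb^n\}$ (or Dyck, or $\{ww^R\}$) violates. P{\u a}un works with $\{a^{n}b^{n}\}$ rather than your more complicated candidates, and the clean line of attack is a case split on $G$: if every reachable sentential form of $G$ contains at most one nonterminal, then $SZ(G)$ is the path language of a finite graph on $N$, hence regular, hence $h(SZ(G))$ is regular and cannot equal $\{a^nb^n\}$; otherwise some derivation reaches a sentential form with at least two nonterminal occurrences, and then the tail of any terminal derivation from that point is a shuffle of $k\ge 2$ independent subderivations $\tau_1,\ldots,\tau_k$, so that \emph{every} permutation $\sigma_0\tau_{\pi(1)}\cdots\tau_{\pi(k)}$ lies in $SZ(G)$. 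Pushing this permutation-closure through $h$ and confronting it with the rigid $a^{*}b^{*}$ shape of $\{a^nb^n\}$ is what forces a contradiction. That is the missing idea; what you have written for (a) stops short of it.
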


It was proved in \cite{Paun} that  $\{a^{n}b^{n} | n \geq 1\}$
cannot be obtained as a homomorphic image of the Szilard 
languages of the context-free languages.
    But we prove that  any context-free language can be obtained as a homomorphic image of
Szilard language of the labeled finite  flat splicing systems of type $(2, 2)$.

\begin{theorem}
Any context-free language can be obtained as a homomorphic image of the Szilard language of the
 labeled  flat finite splicing system of type $(2,2)$.
\end{theorem}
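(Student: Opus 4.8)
The plan is to follow the pattern of the type‑$(1,2)$ theorem, only now simulating a context‑free grammar instead of a right‑linear one; the extra handle length available in type $(2,2)$ (handles of length up to $2$, inserted words still summarised by their two end letters) is precisely what is needed to cope with binary right‑hand sides. So first I would take the given $\lambda$‑free context‑free language $L$, fix a grammar $G=(N,T,S,P)$ in Greibach normal form generating it, and then apply the standard refinement that bounds the number of nonterminals on each right‑hand side by two, i.e. $P$ consists of productions of the forms $A\to a$, $A\to aB$, $A\to aBC$ with $A,B,C\in N$, $a\in T$ (the empty word, if it belongs to $L$, is handled trivially and separately). Greibach form is convenient here because in a leftmost derivation the terminal prefix only grows, so the part of the sentential form that still has to be processed is a pure word of nonterminals — a ``stack'' — and this is what the splicing string will encode.

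Next I would construct a labeled flat splicing system $\mathscr{LS}=(A,I,R,Lab)$ whose splicing string always has the shape $X\,\overline{\gamma}\,Y$, where $\overline{\gamma}$ encodes the nonterminal suffix $\gamma$ of the current leftmost sentential form, the occurrence of the nonterminal to be rewritten next is kept adjacent to a marker, and occurrences that have already been rewritten are tagged with an auxiliary symbol so that no rule can touch them again. Each production $p\colon A\to a\alpha$ of $G$ is then simulated by a single splicing rule (or, if necessary, a small fixed gadget of rules) carrying the label $p$: using a handle of length at most $2$ it locates the active occurrence of $A$, inserts the encoding of $\alpha$ in the right place, and simultaneously ``seals'' that occurrence of $A$. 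The axiom set $I$ consists of $XSY$ together with one short word per production (the word to be inserted), so $I$ is finite; and one checks rule by rule that the longest handle has length $2$ and that the only ``ends'' of an inserted word are its first and last letters, so that $\mathscr{LS}$ is of type $(2,2)$.

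I would then define the homomorphism $h\colon Lab^{*}\to T^{*}$ by $h(p)=a$ for the label of a production $p\colon A\to a\alpha$ (and $h$ equal to $\lambda$ on any purely bookkeeping label, if such labels are needed), and argue as in the type‑$(1,2)$ theorem. A leftmost derivation $S\Rightarrow^{*}w$ of $G$ is mimicked step by step by applying the corresponding splicing rules in the same order, ending in a string to which no rule is applicable and whose Szilard word $w_{1}$ satisfies $h(w_{1})=w$; this gives $L(G)\subseteq h(SZ^{2}_{2,FIN}(\mathscr{LS}))$. Conversely, the one‑to‑one correspondence between the splicing rules of $\mathscr{LS}$ and the productions of $G$, together with the fact that the sealing mechanism prevents any partially simulated derivation from becoming a terminal derivation, yields $h(SZ^{2}_{2,FIN}(\mathscr{LS}))\subseteq L(G)$, so $L=h(SZ^{2}_{2,FIN}(\mathscr{LS}))$.

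The hard part is the design of step two: flat splicing can only insert (never delete) and can only inspect a context of length at most $2$, so faithfully emulating the ``stack'' of a leftmost derivation is delicate — in particular one must arrange that, once a subderivation has been completed and its nonterminals sealed, the next nonterminal to be rewritten is again reachable by a length‑$2$ handle, while no rule can ever re‑touch a sealed occurrence and no ``stuck'' configuration gives a Szilard word outside $h^{-1}(L(G))$. For matched‑counter–type behaviour this kind of construction is already available (compare Theorem~\ref{finitecontextfree}), so the genuinely new content is getting the general nested (Dyck‑like) structure of context‑free derivations to go through with handles of length only $2$; once that encoding is in place and the type‑$(2,2)$ bound is verified, the homomorphism and the two inclusions are routine, exactly as in the regular case.
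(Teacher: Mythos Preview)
Your plan differs from the paper's in two structural choices. You work with Greibach normal form and simulate a leftmost derivation step for step, aiming at a (nearly) letter-to-letter homomorphism; the paper instead uses Chomsky normal form and a two-phase strategy---first simulate the recursive rules $A\to BC$, then simulate the terminal rules $A\to a$ from left to right---together with auxiliary marker-moving rules $[r_{k1}]'$, $[r_{k2}]'$ whose labels are erased by $h$. The phase separation is precisely how the paper sidesteps the ``stack'' difficulty you flag: in phase one the working string is just nonterminals interspersed with inert rule-markers $[r_i]$, and in phase two a marker $[r_m]$ is advanced by $\lambda$-labelled rules to locate the next unreplaced nonterminal. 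Your route, if completed, would yield a cleaner one-to-one correspondence between splicing steps and grammar steps; the paper's route buys that the length-$2$ handles need only detect static local patterns rather than maintain a moving stack boundary.

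That said, what you have written is a plan, not a proof: you explicitly defer the central construction and call it ``the hard part.'' Two concrete pitfalls you must resolve before the argument goes through are these. First, in the definition of terminal derivation the word $x_0$ may be \emph{any} axiom, so if an axiom encoding a production $A\to aBC$ already contains a marker--nonterminal--nonterminal pattern, a spurious terminal derivation can start from it and produce a Szilard word whose $h$-image is derivable from $BC$ but not from $S$; the paper avoids this by making the right handle have length $2$, so its three-letter axioms $[r_i]B_1C_1$ are too short to host any rule. Second, a flat-splicing rule recognises the inserted word only by its first and last letters, so axioms for $A\to aB$ and for $A'\to a'B'B$ are indistinguishable unless each carries a production-specific first symbol (the paper's $[r_i]$ tags serve exactly this purpose). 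Both fixes are available within type $(2,2)$, but you have not supplied them, and together with the sealing mechanism they are exactly where the content of the proof lies.
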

\begin{proof}

Let $L $ be a non-empty context-free language and let $G=(N,T, S, P)$ be a grammar such that $L = L(G).$ 
Let the grammar $G$ is in Chomsky normal form and  the rules in $P$ are of the form, 
$A_1 \rightarrow B_1C_1$ and $A_1 \rightarrow a$, where $A_1, B_1, C_1 \in N, a \in T$. Each element of the language $L$
can be generated by initial application of the recursive rules and then by application of the terminal rules 
in the left-most manner. Also, each  rule of $R$ is associated with unique label $r_{i}$.

We construct a labeled flat splicing systems
$\mathscr{LS} = ( A, I, R, Lab)$ 
such that $L = h (SZ_{2, FIN}^{2}( \mathscr{LS}))$ where $h$ is a morphism from $(Lab)^{*}$ to $T^{*}$ and
 $SZ_{2, FIN}^{2}(\mathscr{LS})$ denotes the Szilard language of the labeled 
 flat splicing system
 $\mathscr{LS}$ of type $(2,2)$.

 Let  $\mathscr{LS} = ( A,  I, R, Lab)$ be a labeled  flat splicing system  
 where: 
\begin{itemize}
\item[$\bullet$] $A = \{X, Y, E \} \cup N  \cup  \Delta_{1}  \cup \Delta_{2}   \cup \{[r_{k1}], [r_{m}]\}$ where $\Delta_{1} = \{ [r_{i}] ~|~ r_{i}: A_1 \rightarrow B_1C_1\}, \Delta_{2} = \{ [r_{i}] ~|~ r_{i}: A_1 \rightarrow a\};$
\item[$\bullet$] $I = \{ XSEY \} \cup \{[r_{i}]B_1C_1 | r_{i}: A_1 \rightarrow B_1C_1 \in P\} \cup \{ [r_{a}] | r_{a}: A_1 \rightarrow a\} \cup \{ [r_{k1}], [r_{m}]\}$;


\item[$\bullet$]
$R$ contains the following rules: 

for $r_{i}: A_1 \rightarrow B_1C_1$:

 $[r_{i}]^{1}: <A_1 | [r_{i}]  -C_1 | \alpha_{1} \alpha_{2}>$  where  $\alpha_{1} \in N \cup \{E\}, \alpha_{2} \in N \cup \{ E, Y\} 
 \cup \{ [r_i] | [r_i] \in \Delta_1\}, \alpha_1 \alpha_2 \notin \{ NY, EN\} \cup \{E [r_i] | [r_i] \in \Delta_1\} ([r_i]-C_1 = [r_i]B_1C_1)$,

$[r_{i}]^{a}: <[r_{m}]A_1 | \epsilon -[r_{a}] | \alpha_{3}>, \alpha_{3} \in N \cup  \{E\}$ for  $r_{i}: A_1 \rightarrow a$.


$[r_{k1}]^{'}: <X |  \epsilon-[r_{m}] |  \alpha_4>, \alpha_4 \in N;$

%
%
%
%

$[r_{k2}]^{'}: <[r_m] \alpha_5 | \epsilon- [r_m] | \alpha_6>, \alpha_5 \in N \cup \Delta_1 \cup \Delta_2, \alpha_6 \in N \cup \Delta_1 \cup \Delta_2 $.

\item[$\bullet$] $Lab = \{ [r_{i}]^{1} ~|~ [r_{i}] \in \Delta_{1}\} \cup \{ [r_{i}]^{a} ~|~ [r_{i}] \in \Delta_{2}\} \cup \{[r_{k1}]^{'}, [r_{k2}]^{'}\}$. 
\end{itemize}

The morphism $h : (Lab)^{*} \rightarrow T^{*}$ is as follows:  $h([r_{i}]^{1})  = h([r_{k1}]^{'}) = h([r_{k2}]^{'}) =  \lambda, h([r_{i}]^{a}) = a$
 where $[r_{i}]^{1}, [r_{k1}]^{'}, [r_{k2}]^{'},  [r_{i}]^{a} \in Lab$ and $a \in T.$
 
We first prove that $L(G) = L \subseteq h(SZ_{2, FIN}^{2}( \mathscr{LS}))$. 
Let $w \in L(G)$ and it can be generated by application of the recursive rules and then by 
left-most application of the terminating rules. 
Each rule of $G$ is associated with a unique label
  and the rule $r_{i}: A_1 \rightarrow B_1C_1$ is simulated by the splicing rule $<A_1 | [r_{i}] - C_1 | \alpha_{1} \alpha_{2}>,$ 
  $\alpha_1 \in N \cup \{E\}, \alpha_2 \in N \cup \{ Y, E\}, \alpha_1 \alpha_2 \notin \{ EN, NY, EE\}$. 
In fact, the string $Xw_{1}A_1w_{2}Y$  where $w_{1}, w_{2} \in A^{*}$ is spliced with the string $[r_{i}]B_1C_1$ and generate the string $Xw_{1}A_1[r_{i}]B_1C_1w_{2}Y$.
Each $r_{i}: A_1 \rightarrow a$ is applied in the left-most manner and is simulated when the string $Xw_{1}[r_{m}]A_1w_{2}Y$ is spliced with $[r_{a}]$ and generate the string $Xw_{1}[r_{m}]A$
$[r_{a}]w_{2}Y$.
The symbol $[r_{m}]$ identifies the left-most non-terminal where the terminal rule can be applied. The $[r_{k1}]^{'}$-rule 
 splice the string $X \alpha w_{3}Y, w_{3} \in A_{1}^{*}, \alpha \in N$ with $[r_{k1}]$ and generate
the string $X[r_{k1}] \alpha w_{3}Y$. The $[r_{k2}]^{'}$ labeled rule  insert  $[r_{m}]$ in a specified location.  
The $[r_{k1}]^{'}$ and $[r_{k2}]^{'}$ labeled rules are constructed in such a manner that they help to identify
the leftmost non-terminal where the terminating rules can be applied.
  If the corresponding 
	labeled splicing rules are applied in $\mathscr{LS}$ in the same sequence
	as in the derivation of $G$ generating $w$,
   a terminal derivation can be obtained in $\mathscr{LS}$, i.e., a string over $A$ is obtained where no rules
	 can be applied.  Also, the
   concatenation of the labels of the applied splicing rules  generate a string over $Lab$, say, $w_{1}$.
		The morphism $h$ replaces  each occurrence of $[r_{i}]^{1}, [r_{k1}]^{'}$ and $[r_{k2}]^{'}$ in $w_{1}$ by the empty string
			and  $[r_{i}]^{a}$  by $a$. Hence each $w \in L(G)$,
     can be represented as $w = h(w_{1}) \in h(SZ_{2, FIN}^{2}(\mathscr{LS}))$ where $w_{1} \in SZ_{2, FIN}^{2}(\mathscr{LS}).$ 

Now to prove the other inclusion $h(SZ_{2, FIN}^{2}(\mathscr{LS})) \subseteq L(G) = L$, let $w \in h(SZ_{2, FIN}^{2}(\mathscr{LS})$.
Hence $w = h(w_{1})$
where $w_{1} \in SZ_{2, FIN}^{2}(\mathscr{LS})$.
 Since $w_{1} \in SZ_{2, FIN}^{2}(\mathscr{LS})$, i.e., concatenation of the labels of the applied splicing rules of a terminal 
derivation in $\mathscr{LS}$ generates $w_{1}$.
	The application of the rules in $G$, starting from $S$ in the same sequence as
	in the terminal derivation of $\mathscr{LS}$ generates $w \in L(G)$. 
		Hence $h(SZ_{2, FIN}^{2}(\mathscr{LS})) \subseteq L(G).$

\end{proof}

In the next result, we show that any recursively enumerable language can be 
characterized by the homomorphic image of the Szilard language of 
the labeled  flat finite splicing system of type $(4, 2)$.

\begin{theorem} \label{rehomomorphism}
Each recursively enumerable language can be obtained as  homomorphic image of the Szilard language
of  labeled finite flat splicing systems of type $(4, 2).$
\end{theorem}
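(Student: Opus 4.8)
The plan is to mimic the strategy already used for the context-free case, but now starting from a type-0 grammar in Kuroda normal form instead of a Chomsky-normal-form context-free grammar. Given a recursively enumerable language $L$, fix a type-$0$ grammar $G = (N, T, S, P)$ in Kuroda normal form, so the rules are of the shapes $A \rightarrow BC$, $AB \rightarrow CD$, $A \rightarrow a$, and $A \rightarrow \lambda$ with $A,B,C,D \in N$ and $a \in T$. I would build a labeled flat finite splicing system $\mathscr{LS} = (A', I, R, Lab)$ together with a morphism $h\colon Lab^{*} \rightarrow T^{*}$ so that $L = h(SZ_{2,FIN}^{4}(\mathscr{LS}))$. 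The working strings carry the current sentential form of $G$ flanked by endmarkers $X$ and $Y$, exactly as in the context-free proof; the splicing rules simulate one grammar rule each by inserting an appropriately prefixed word $[r_i]\cdots$ at a controlled position, and $h$ erases the bookkeeping labels while sending the label of a terminal rule $A\rightarrow a$ to $a$. The bound $m = 4$ on the handles $\alpha,\beta$ is exactly what lets us encode the two-symbol left-hand side $AB$ of a Kuroda rule together with one symbol of left/right context, so that the rewriting $AB \rightarrow CD$ can be localized; the bound $n = 2$ is inherited from the fact that flat splicing only ever exposes the two end letters $\gamma-\delta$ of the inserted block.

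The key steps, in order: (1) Put $G$ in Kuroda normal form and give every rule a distinct label $r_i$; introduce for each rule a fresh symbol $[r_i]$ and, as in the context-free construction, auxiliary control symbols (a ``$[r_m]$''-type marker to pin down the active site, plus $[r_{k1}], [r_{k2}]$-type symbols to walk the marker along the string). (2) For a context-free rule $A \rightarrow BC$ reuse verbatim the type-$(2,2)$ gadget: splice $Xw_1 A w_2 Y$ with $[r_i]BC$ to get $Xw_1 A[r_i]BC w_2 Y$, with the $\alpha_1\alpha_2$ side-condition forbidding the degenerate contexts. (3) For a non-context-free rule $AB \rightarrow CD$, use a rule of the form $[r_i]^{2}\colon \langle\, \alpha_0 A B \mid [r_i]-D \mid \alpha_3 \,\rangle$ with $[r_i]-D = [r_i]CD$ and suitable context symbols $\alpha_0,\alpha_3$, so the left handle has length up to $4$ (hence $m=4$); arrange a follow-up splicing (or an extra marker) that neutralizes the now-stale $AB$ so only $CD$ remains readable, mirroring how the $c$-rule "replaces" a subword in Theorem~\ref{finitecontextsensitive}. (4) Handle $A \rightarrow a$ and $A \rightarrow \lambda$ with the left-most-marker mechanism from the context-free proof, the erasing rule simply inserting an empty-effect block and deleting $A$. (5) Define $h$ to kill every structural label and map each $A\rightarrow a$ label to $a$; then argue both inclusions: a derivation in $G$ of $w\in T^{*}$ is simulated step-for-step and yields a terminal flat-splicing derivation whose label word maps under $h$ to $w$, and conversely any terminal derivation of $\mathscr{LS}$ reads off, via the one-to-one rule correspondence, a legal derivation of $G$, so $h$ of its label word lies in $L(G)$.

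The main obstacle — and where I'd spend the real effort — is the faithful simulation of the length-preserving rule $AB \rightarrow CD$. Flat splicing can only insert, never delete or overwrite, so after splicing in $[r_i]CD$ next to $AB$ we are left with garbage $AB$ (and a stray $[r_i]$) sitting inside the string. The trick, as in the context-sensitive Szilard example earlier in the paper, is to make those leftover symbols permanently "dead": design the handles so that no rule can ever match a context containing a bare $A$, $B$, or $[r_i]$ again except in the one controlled step that consumes them, and ensure that any attempt to use them derails into a non-terminating computation (which, crucially, then contributes nothing to the Szilard language because only terminal derivations count). Getting the side-conditions on the context symbols $\alpha_i$ tight enough to block all spurious derivations, while still permitting every legitimate one, is the delicate bookkeeping; once that is pinned down, verifying $L = h(SZ_{2,FIN}^{4}(\mathscr{LS}))$ is a routine induction on derivation length in each direction. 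A secondary point to check is that the handle bound is genuinely $4$ and not larger: the longest handle arises from $\alpha_0 AB$ or $ABD$-type patterns, and one must confirm that one symbol of context always suffices to make the rewriting site unambiguous given the endmarkers and the marker symbols.
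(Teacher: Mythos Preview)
Your high-level plan matches the paper's: take a Kuroda-normal-form grammar, keep the current sentential form between endmarkers $X,Y$, simulate each rule by inserting a marked block $[r_i]\cdots$, drive the terminal rules with a leftmost marker $[r_m]$, and let $h$ erase all bookkeeping labels while sending the label of $A\rightarrow a$ to $a$. You also correctly identify that, because flat splicing only inserts, the ``used'' symbols must be rendered dead.

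There is, however, a genuine gap in your step~(3). Your single gadget $\langle \alpha_0 AB \mid [r_i]{-}D \mid \alpha_3 \rangle$ presupposes that $A$ and $B$ are literally adjacent in the splicing string. After earlier simulations they typically are not: every previous rule application deposits garbage (blocks like $A'[r_j]$ or $A'B'[r_j]$) \emph{between} what are, in the grammar's sentential form, consecutive non-terminals. For example, from $XS[r_1]A[r_2]CDBY$ (encoding sentential form $CDB$) an application of $DB\rightarrow EF$ yields $XS[r_1]A[r_2]CDB[r_3]EFY$, encoding $CEF$; now $C$ and $E$ are separated by the garbage $DB[r_3]$, and any subsequent Kuroda rule with left-hand side $CE$ cannot be matched by a bounded-handle rule that looks for a literal $CE$. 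The paper handles this with a dedicated family of ``walking'' rules ($r_i^{10},r_i^{11},r_i^{12}$ in its construction): plant a copy of $[r_i]$ just after the first symbol, iteratively hop it across each garbage unit, and only when it abuts the second symbol perform the insertion of $[r_i]CD$. An analogous walking mechanism (the paper's block $(R_{15})$) is also needed to move the leftmost marker $[r_m]$ through accumulated garbage before each terminal step. These walking rules, and the case analysis of admissible garbage patterns, are where the handle length~$4$ is actually consumed and where most of the technical work sits. Your proposal addresses making garbage inert against \emph{spurious} derivations but not the dual problem of letting \emph{legitimate} derivations see through unboundedly long garbage; without a walking mechanism the simulation is incomplete.
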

\begin{proof}
Let $L \in RE$ and we know that any $RE$ language can be generated by a grammar $G = (N, T, S, P)$ in Kuroda normal form, i.e., $L(G) = L$. 
The rules of the grammar $G$ are of the 
form $A_1 \rightarrow B_1C_1, A_1B_1 \rightarrow C_1D_1, A_1 \rightarrow a, A_1 \rightarrow \lambda$ where $A_1, B_1, C_1, D_1 \in N, a \in T$. 
Any element $x \in L$ can be
 generated  by application of the recursive rules and then by left-most application of the 
terminating rules \cite{reproof}. 
 In this proof, the labeled flat splicing system $\mathscr{LS} = (A, I, R, Lab)$ is constructed in 
 such  a manner that $L = h(SZ_{2, FIN}^{4}(\mathscr{LS}))$ where $A \cap Lab = \emptyset.$ 
Also, the labeled splicing rules in $\mathscr{LS}$ are constructed by simulating the rules in $P$ where each rule of $P$ is associated 
with a unique label $r_{i} (1 \leq i \leq n)$
if $|P| = n$.

The set $\Delta$ contains the labels of the rules in $P$ and is divided into four parts such that
 $\Delta = \Delta_{1} \cup \Delta_{2} \cup \Delta_{3} \cup \Delta_{4}$, where

$\Delta_{1} = \{ r_{i} ~|~ r_{i}: A_1 \rightarrow B_1C_1 \in P\};$

$\Delta_{2} = \{ r_{i}  ~|~ r_{i}: A_1B_1 \rightarrow C_1D_1 \in P\};$

$\Delta_{3} = \{ r_{i} ~|~ r_{i}: A_1 \rightarrow a \in P\};$

$\Delta_{4} = \{ r_{i} ~|~ r_{i}: A _1 \rightarrow \lambda \in P\};$


\noindent Let $\mathscr{LS} = (A,  I, R, Lab)$ be a labeled flat splicing system, where:
\begin{enumerate}

\item[$\bullet$] $A =  \{X, Y\} \cup N \cup \{ [r_{i}]  ~|~  r_{i} \in \Delta\} \cup \{[r_{m}]\} \cup  \{ k_{a} | r_{i}: A_1 \rightarrow a\}
\cup \{k_{\lambda}, [r_{m}]\} ;$ 
\item[$\bullet$] $I = \{XSY\} \cup \{[r_{i}]B_1C_1 | r_{i}: A_1 \rightarrow B_1C_1 \in P\} \cup \{ [r_{i}]C_1D_1 | r_{i}: A_1B_1 \rightarrow C_1D_1 \in P\} \cup 
\{[r_{i}] | r_{i}: A_1B_1 \rightarrow C_1D_1\} \cup \{ k_{a}^{i} | r_{i}: A_1 \rightarrow a \in P\}
\cup \{k_{\lambda}^{i} | r_{i}: A_1 \rightarrow \lambda \in P\} \cup \{ [r_{m}]\} ;$

\item[$\bullet$] $R$ contains the following rules: 

%
%
%
%
%
%
%
%
%
%

$\mathbf{(R_{11})}$ 
For $r_{i}: A_1 \rightarrow B_1C_1$:

$r_{i}^{1}: <A_1 |   [r_{i}] - C_1 | Y>$, 

$r_{i}^{2}: <A_1 |   [r_{i}] - C_1 | \alpha_{1} Y>, \alpha_{1} \in N$,

$r_{i}^{3}: <A_1 |   [r_{i}] - C_1 | \alpha_{1} \alpha_{2} Y>, \alpha_{1}, \alpha_{2} \in N$,

$r_{i}^{4}: <A_1 |  [r_{i}] - C_1 | \alpha_{1} \alpha_{2} \alpha_{3} Y>,  \alpha_{1}, \alpha_{2}, \alpha_{3} \in N$,

$r_{i}^{5}: <A_1 |  [r_{i}] - C_1 | \alpha_{1} \alpha_{2} \alpha_{3} \alpha_{4}>,$

where $[r_i]-C_1 = [r_i]B_1C_1$,
$\alpha_{1} \in N, \alpha_{2} \in N \cup \Delta_{1}, \alpha_{3} \in N \cup \Delta_{1} \cup \Delta_{2}, \alpha_{4} \in N \cup \Delta_{1} \cup \Delta_{2},$
$\alpha_{2} \alpha_{3} \notin (\Delta_{1})(\Delta_{1} \cup \Delta_{2}),\alpha_{3} \alpha_{4} \notin (\Delta_{1} \cup \Delta_{2})(\Delta_{1} \cup \Delta_{2})$,

$r_{i}^{6}: <A_1 |  [r_{i}] - C_1 | \alpha_{1} \alpha_{2} \alpha_{3} \alpha_{4} \alpha_{5}>,$  
where $[r_i]-C_1 = [r_i]B_1C_1$,
$\alpha_{1} \in N, \alpha_{2} \in \Delta_{2}, \alpha_{3} \in N, \alpha_{4} \in \Delta_{1}$ and 
$\alpha_{2} = \alpha_{5}.$

\vspace{0.2cm}

%
%
%
%
%
%
%
%
%
%
%

$\mathbf{(R_{12})}$ 
For $r_{i}: A_1B_1 \rightarrow C_1D_1$:


$r_{i}^{7}: <A_1B_1 | [r_{i}] - D_1 | \alpha_{1} \alpha_{2}>, \alpha_{1} \in N,  \alpha_{2} \in N,  $


$r_{i}^{8}: <A_1B_1 | [r_{i}]-D_1 | Y>$,

$r_{i}^{9}: <A_1B_1 |  [r_{i}] -D_1 | \alpha_{1} Y>, \alpha_{1} \in N$

where $[r_i] C_1D_1 = [r_i] - D_1$,

The application of the rules $r_{i}^{10}, r_{i}^{11}$ and $r_{i}^{12}$ in order also can simulate the rule $r_{i}: A_1B_1 \rightarrow C_1D_1:$


$r_{i}^{10}: <A_1 | \epsilon - [r_{i}] |  \alpha_{1} \alpha_{2} >,  \alpha_{1} \in N, \alpha_{2} \in \Delta_{1}$,


$r_{i}^{11}: <[r_{i}] \alpha_{1} \beta_{1} | \epsilon - [r_{i}] |  \alpha_{2} \alpha_{3}>, \alpha_{1} \in N, \alpha_{2} \in N,  \alpha_{3} \in \Delta_{1} \cup N, [r_{i}] \in \Delta_{2}, \beta_{1} \in \Delta_{1},$


$r_{i}^{12}: <\beta_{1}  [r_{i}] B_1 |  [r_{i}] - D_1 | \alpha_{1} \alpha_{2}>, \alpha_{1} \in N,  \alpha_{2} \in N \cup \{Y\} \cup \Delta_{1}, [r_{i}] \in \Delta_{2}, \beta_{1} \in \Delta_{1}, [r_i]C_1D_1 = [r_i]- D_1$

and



$r_{i}^{13}: <A_1B_1 |  [r_{i}] - D_1 | \alpha_{1} \alpha_{2} \alpha_{3} \alpha_{4} \alpha_{5}> $,

where

$[r_i] C_1D_1 = [r_i] - D_1$, 
$\alpha_{1} \in N, \alpha_{2} \in \Delta_{2}, \alpha_{3} \in N, \alpha_{4} \in \Delta_{1}$ and $\alpha_{2} = \alpha_{5}.$ 


\vspace{0.2cm}

%
%
%
%
%
%
%
%
%
%
%
%

$\mathbf{(R_{13})}$ 
For $r_{i}: A_1 \rightarrow a$:


$a_{i}^{1}: <XA_1 | \epsilon - k_{a}^{i} |  Y>$,


$a_{i}^{2}: <[r_{m}]A_1 | \epsilon - k_{a}^{i} | Y>$,


$a_{i}^{3}: <[r_{m}]A_1 | \epsilon - k_{a}^{i} | \alpha_{1} Y>, \alpha_{1} \in N$,

$a_{i}^{4}: <[r_{m}]A_1 |  \epsilon - k_{a}^{i} | \alpha_{1} \alpha_{2} Y>, \alpha_{1}, \alpha_{2} \in N$,

$a_{i}^{5}: < [r_{m}]A_1 | \epsilon - k_{a}^{i} |  \alpha_{1} \alpha_{2} \alpha_{3} Y>,  \alpha_{1}, \alpha_{2},  \alpha_{3} \in N$,

$a_{i}^{6}: <[r_{m}]A_1 | \epsilon - k_{a}^{i} | \alpha_{1} \alpha_{2} \alpha_{3} \alpha_{4}>, $

where
$\alpha_{1} \in N, \alpha_{2} \in N \cup \Delta_{1}, \alpha_{3} \in N \cup \Delta_{1} \cup \Delta_{2}, \alpha_{4} \in N \cup \Delta_{1} \cup \Delta_{2},$

$\alpha_{2} \alpha_{3} \notin (\Delta_{1})(\Delta_{1} \cup \Delta_{2}),\alpha_{3} \alpha_{4} \notin (\Delta_{1} \cup \Delta_{2})(\Delta_{1} \cup \Delta_{2})$,

$a_{i}^{7}: < [r_{m}]A_1 | \epsilon - k_{a}^{i} | \alpha_{1} \alpha_{2} \alpha_{3} \alpha_{4} \alpha_{5}>$,  

where
$\alpha_{1} \in N, \alpha_{2} \in \Delta_{2}, \alpha_{3} \in N, \alpha_{4} \in \Delta_{1}$ and $\alpha_{2} = \alpha_{5},$

$r_{m+1}^{i}: <[r_{m}]A_1k_{a}^{i} | \epsilon - [r_{m}] | \alpha_{1} \alpha_{2}>, \alpha_{1} \in N, \alpha_{2} \in \{Y\} \cup N \cup \Delta_{1} \cup \Delta_{2}$.

\vspace{0.2cm}

%
%
%
%
%
%
%
%
%
%
%

$\mathbf{(R_{14})}$ 
For $r_{i}: A \rightarrow \lambda$:

$r_{i}^{14}: <XA_1 |  \epsilon - k_{\lambda}^{i} | Y>$,

$r_{i}^{15}:  <[r_{m}]A_1 |  \epsilon - k_{\lambda}^{i} | Y>$,

$r_{i}^{16}: <[r_{m}]A_1 |  \epsilon - k_{\lambda}^{i} |  \alpha_{1} Y>, \alpha_{1} \in N$,

$r_{i}^{17}: <[r_{m}]A_1 | \epsilon - k_{\lambda}^{i} |  \alpha_{1} \alpha_{2} Y>, \alpha_{1}, \alpha_{2} \in N$,

$r_{i}^{18}: <[r_{m}]A_1 | \epsilon - k_{\lambda}^{i} | \alpha_{1} \alpha_{2} \alpha_{3} Y >, \alpha_{1}, \alpha_{2}, \alpha_{3} \in N$,

$r_{i}^{19}:  <[r_{m}]A_1 | \epsilon - k_{\lambda}^{i} |  \alpha_{1} \alpha_{2} \alpha_{3} \alpha_{4}>, $

where
$\alpha_{1} \in N, \alpha_{2} \in N \cup \Delta_{1}, \alpha_{3} \in N \cup \Delta_{1} \cup \Delta_{2}, \alpha_{4} \in N \cup \Delta_{1} \cup \Delta_{2},$

$\alpha_{2} \alpha_{3} \notin (\Delta_{1})(\Delta_{1} \cup \Delta_{2}),\alpha_{3} \alpha_{4} \notin (\Delta_{1} \cup \Delta_{2})(\Delta_{1} \cup \Delta_{2})$,

$r_{i}^{20}: <[r_{m}]A_1 | \epsilon - k_{\lambda}^{i} | \alpha_{1} \alpha_{2} \alpha_{3} \alpha_{4} \alpha_{5}>$,

where
$\alpha_{1} \in N, \alpha_{2} \in \Delta_{2}, \alpha_{3} \in N, \alpha_{4} \in \Delta_{1}$ and $\alpha_{2} = \alpha_{5}.$

$r_{m+2}^{i}:  <[r_{m}]A_1k_{\lambda}^{i} |  \epsilon - [r_{m}] | \alpha_{1} \alpha_{2}>, \alpha_{1} \in N, \alpha_{2} \in \{Y\} \cup N \cup  \Delta_{1} \cup \Delta_{2}$.

\vspace{0.2cm}

%
%
%
%
%
%
%
%
%

%
%
%
%
%
%
%
%
%
 
 $\mathbf{(R_{15})}$ 
 $r_{m}: <X \alpha_{1} \beta_{1} | \epsilon - [r_{m}] | \alpha_{2}>, \alpha_{1}, \alpha_{2} \in N, \beta_{1} \in \Delta_{1}$,

$r_{m+1}: <[r_{m}] \alpha_{1} \alpha_{2} \beta_{1} | \epsilon - [r_{m}] | \alpha_{3}>$, $\alpha_{1}, \alpha_{2}, \alpha_{3} \in N, \beta_{1} \in \Delta_{2}$,

$r_{m+2}: <[r_{m}] \alpha_{1} \beta_{1} |  \epsilon - [r_{m}] | \alpha_{2} \beta_{2} \beta_{1}>, \alpha_{1}, \alpha_{2} \in N, \beta_{1} \in \Delta_{2}, \beta_{2} \in \Delta_{1}$,

 $r_{m+3}: <[r_{m}] \alpha_{1} \beta_{1} |  \epsilon - [r_{m}] |  \alpha_{2} \alpha_{3}>, \alpha_{1} \in N, \alpha_{2} \in N,  \alpha_{3} \in N  \cup \Delta_{1} \cup \Delta_{2}, \beta_{1} \in \Delta_{1} $,

 $r_{m+4}: <[r_{m}] \alpha_{1} \beta_{1} \beta_{2} | \epsilon - [r_{m}] |  \alpha_{2} \beta_{2}>, \alpha_{1}, \alpha_{2} \in N, \beta_{2} \in \Delta_{2}, \beta_{1} \in \Delta_{1}$,

 $r_{m+5}: <[r_{m}] \alpha_{1} \beta_{1} \beta_{2} | \epsilon - [r_{m}] | \alpha_{2} \alpha_{3} \beta_{2}>,  \alpha_{1} \in N, \alpha_{2} \in N, \alpha_{3} \in \Delta_{1}, $
 $\beta_{1} \in \Delta_{1}, \beta_{2} \in \Delta_{2}$,

 $r_{m+6}: <[r_{m}] \alpha_{1} \beta_{1} |  \epsilon - [r_{m}] | \alpha_{2} \alpha_{3} >, \alpha_{1}, \alpha_{2} \in N, \alpha_{3} \in N \cup \Delta_{1} \cup \Delta_{3} \cup \Delta_{4} , \beta_{1} \in \Delta_{2}$.

\item[$\bullet$] $Lab= \{r_{i}^{1}, r_{i}^{2}, r_{i}^{3}, r_{i}^{4}, r_{i}^{5}, r_{i}^{6} ~|~ [r_{i}] \in \Delta_{1}\}$ 

$\cup \{r_{i}^{7}, r_{i}^{8}, r_{i}^{9}, r_{i}^{10}, r_{i}^{11}, r_{i}^{12}, r_{i}^{13} ~|~ [r_{i}] \in \Delta_{2}\}$

$\cup \{a_{i}^{1}, a_{i}^{2}, a_{i}^{3}, a_{i}^{4}, a_{i}^{5}, a_{i}^{6},  a_i^{7}, r_{m+1}^{i} ~|~ [r_{i}] \in \Delta_{3}\}$

$\cup \{ r_{i}^{14}, r_{i}^{15}, r_{i}^{16}, r_{i}^{17}, r_{i}^{18}, r_{i}^{19}, r_{i}^{20}, r_{m+2}^{i} ~|~ [r_{i}] \in \Delta_{4}\}$

$\cup \{ r_{m},  r_{m+1}, r_{m+2},$
$ r_{m+3},
 r_{m+4}, r_{m+5}, r_{m+6}\}.$

\end{enumerate}

The homomorphism is defined in the following manner:

$h: (Lab)^{*} \rightarrow T^{*}$  by $h(a_{i}^{1}) = h(a_{i}^{2}) = h(a_{i}^{3}) = h(a_{i}^{4}) = h(a_{i}^{5}) = h(a_{i}^{6}) = h(a_{i}^{7}) = a$ and $h(l) = \lambda,$
 for $l \in Lab \setminus \{a_{i}^{1}, a_{i}^{2}, a_{i}^{3}, a_{i}^{4}, a_{i}^{5}, a_i^{6}, a_i^{7}\}$.
 
 Her we discuss summary of the proof of Theorem \ref{rehomomorphism}.
The detailed simulation of the rules has been discussed in the appendix.


Now in order to prove $L = h(SZ_{2, FIN}^{4}(\mathscr{LS}))$, we have to prove the inclusions $L \subseteq h(SZ_{2, FIN}^{4}(\mathscr{LS}))$
and $h(SZ_{2, FIN}^{4}(\mathscr{LS})) \subseteq L$, i.e., we have to prove that any element of the recursively enumerable language $L$
can be obtained as homomorphic image of a Szilard language of a labeled finite flat splicing system. Moreover, no other word can 
be obtained as homomorphic image of the Szilard language of the labeled finite flat splicing system $\mathscr{LS}$ except the
words from $L$. 

To prove the inclusion $ L = L(G) \subseteq h(SZ_{2, FIN}^{4}(\mathscr{LS}))$, 
let us assume that  $w \in L = L(G).$
Now, $w \in L$ can be generated at first by application of the recursive rules and then by left-most application 
of the terminal rules.  Starting from $XSY$ if the rules in labeled flat splicing system $\mathscr{LS}$  are applied to 
  in the same sequence,
a word over $A$ is generated and  no further computation will be possible. The concatenation of the labels of the
 applied rules  generate  a word  $w^{'} \in (Lab)^{*}$. Since,  the morphism $h$ only maps the labels
  $ a_{i}^{1}, a_{i}^{2}, a_{i}^{3}, a_{i}^{4}, a_{i}^{5}, a_i^{6}$ and $a_i^{7}$ to $a$ and others
 to $\lambda$, the string $w$ can be represented as $w = h(w^{'}).$ 

   Hence $L = L(G) \subseteq h(SZ_{2, FIN}^{4}(\mathscr{LS})).$
 
To prove the second inclusion  $h(SZ_{2, FIN}^{4}(\mathscr{LS})) \subseteq L(G) = L,$ 
 let $x \in h(SZ_{2, FIN}^{4}(\mathscr{LS}))$. Hence there exists a $x_{1} \in SZ_{2, FIN}^{4}(\mathscr{LS})$
such that $x = h(x_{1}).$
 The word $x_{1}$ is obtained by concatenating the labels of a terminal derivation of $\mathscr{LS}$.
 Since, the flat splicing rules are simulated from the rules in $G$ and no extra derivation is possible,  the application of the
  rules in $G$ in the same sequence 
 generates  $x \in T^{*}$. Hence $h(SZ_{2, FIN}^{4}(\mathscr{LS})) \subseteq L(G) = L.$
This will imply, $h(SZ_{2, FIN}^{4}(\mathscr{LS})) = L(G) = L.$
\end{proof}

In the next section, we associate the idea of control languages with 
 labeled flat  finite splicing systems. Unlike in the case 
of Szilard languages, the same label can be assigned to multiple 
 rules  but one rule
cannot have multiple labels. We show that although there exists
 regular languages which cannot be  Szilard
language by any labeled  flat finite  splicing systems, any non-empty regular language can be
 obtained as control language of labeled  flat finite splicing systems. 
Also, any non-empty context-free language can be obtained as control language by these systems 
and any recursively enumerable language can be obtained as
 control language when some rules are associated with label $\lambda$. 
\section{Control languages of labeled flat splicing systems}
\label{control languages of labeled flat splicing systems}

Let $\mathscr{S} = (A, I, R)$ be a flat splicing system. A labeled flat splicing system
 is a construct of the form $\mathscr{LS} = (A, I, R, Lab)$ 
such that $A \cap Lab = \emptyset$. Also, each rule of the flat splicing system are associated with 
a label from the set $Lab$. Unlike in the case of Szilard languages of labeled flat  
splicing systems, if the rules in $R$ are not labeled in one-to-one manner ( 
multiple rules can have the same label but one rule cannot have multiple labels) and 
also the rules can have empty label ( i.e., $\lambda$- label),
then the language obtained
by  concatenating the
labels of the applied rules of any terminal
derivation of the labeled flat splicing system  is called  as \textit{control language}.
The control languages of the labeled flat splicing system $\mathscr{LS}$ of type $(m, n)$ 
is denoted by 
$CL_{n, FAM}^{m}(\mathscr{LS})$.
The families of control languages 
of the labeled flat splicing systems of type 
$(m, n)$, 
 is denoted as $CLLS_{n, FAM}^{m}$. 

If the rules of $\mathscr{LS}$ are associated with label `$\lambda$'(empty) label, then  the control language of
the labeled  
flat splicing system $\mathscr{LS}$ of type $(m, n)$ is denoted by $CL_{n, \lambda, FAM}^{m}(\mathscr{LS})$. The families of control 
languages 
of the labeled flat splicing systems of type $(m, n)$ 
with  $\lambda$-labeled rules  is denoted 
by $CLLS_{n, \lambda, FAM}^{m}$.
When $m$ and $n$ are not specified, they are replaced by $``*".$

Now we give examples of labeled   flat  finite splicing systems which can obtain  non context-free and non regular languages 
as control languages.

\begin{example}

Let  $\mathscr{LS} = (A, I, R, Lab)$ be a labeled flat finite splicing system where
$A = \{X, Y, A_1, A^{'}\}$,
 $I = \{XY, A_1, A^{'}\}$,
$R = \{ a:<X ~|~ A_1 - \epsilon ~|~ Y>, a : <A_1 ~|~ A_1-\epsilon ~|~ Y>, b : <A_1 ~|~ A^{'} - \epsilon ~|~ Y>, b : <A_1 ~|~ A^{'} - \epsilon ~|~ A_1 A^{'}> \}.$
On application of the $a$- rules, one $A_1$ is added between the markers $X$ and $Y$. Similarly, the first $b$- rule  
insert  $A^{'}$ between $A$ and $Y$ and the second $b$- rule inserts $A^{'}$ between $A_1$ and $A_1A^{'}$. 
Also, the $b$-rules are applicable  after application of $a$-rules and only after application of same number of $a$ and $b$-rules
a string over $A$ is generated where no rule is applicable further.  Hence $CL_{1, FIN}^{2}(\mathscr{L S}) = \{ a^{n} b^{n} ~ |~ n \geq 1 \}.$
\end{example}

\begin{example} 

Let $\mathscr{LS} = (A, I, R, Lab)$ be a labeled flat  finite splicing system where
 $A = \{X, Y, A^{'}, B^{'}, A_1\}$,
 $I = \{XY, A^{'}, B^{'}, A_1\}$,
$R = \{ a : <X ~|~ A_1 - \epsilon ~|~ Y >, a : < A_1 ~|~ A_1 - \epsilon ~|~ Y>, 
b : < A_1 ~|~ A^{'} - \epsilon ~|~ Y>, b : < A_1 ~|~ A^{'} - \epsilon ~|~ A_1A^{'}>, c : <X A_1 A_1^{'} ~|~ B^{'} - \epsilon ~|~ A_1>, 
c : <B^{'} A_1 A_1^{'} ~|~ B^{'} - \epsilon ~|~ A_1>, c : <B^{'} A_1 A_1^{'} ~|~ B^{'} - \epsilon ~|~ Y> \}.$
Hence $CL_{1, FIN}^{3}(\mathscr{LS}) = \{a^{n}b^{n}c^{n} | n \geq 1\}.$
\end{example}

\section{Results}
\label{results1}

In the previous section, we proved that there exists some regular languages which cannot be obtained as Szilard
language by any labeled flat  finite splicing systems. But in the next result, we prove that any non-empty regular language can be 
obtained as  control language of  labeled  flat finite splicing systems of type $(1, 2).$

\begin{theorem}

$(REG \setminus \{\lambda\}) \subseteq CLLS_{2, FIN}^{1}.$
\end{theorem}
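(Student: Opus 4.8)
The plan is to mimic the construction used for Szilard languages but exploit the extra freedom that control languages allow, namely that several splicing rules may share the same label. Start from a regular grammar $G=(N,T,S,P)$ with $L=L(G)$ and $\lambda\notin L$, where each rule of $P$ has the form $D_i\to aD_i$, $D_i\to aD_j$ ($i\neq j$), or $D_i\to a$ (the same normal form used earlier in the excerpt, obtained by renaming the nonterminals $D_1=S,\dots,D_n$). I would build a labeled flat splicing system $\mathscr{LS}=(A,I,R,Lab)$ of type $(1,2)$ that simulates $G$ step by step, but now assign to the splicing rule that simulates a production $p\in P$ the label $p$ itself (an element of $Lab$), rather than a homomorphic name. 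Concretely, take $A=\{X,Y\}\cup\{D_i\mid 1\le i\le n\}\cup\{Y_a\mid a\in T\}$, $I=\{XD_1Y\}\cup\{Y_aD_i\mid D_i\to aD_i\in P\}\cup\{Y_aD_j\mid D_i\to aD_j\in P\}\cup\{Y_a\mid D_i\to a\in P\}$, and rules $p:\langle D_i\mid Y_a-D_i\mid Y\rangle$ for $D_i\to aD_i$, $p:\langle D_i\mid Y_a-D_j\mid Y\rangle$ for $D_i\to aD_j$, and $p:\langle D_i\mid \epsilon-Y_a\mid Y\rangle$ for $D_i\to a$, with $Lab=P$.

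Next I would argue both inclusions. For $L(G)\subseteq CL_{2,FIN}^{1}(\mathscr{LS})$: a leftmost (here, the only) derivation $S=D_1\Rightarrow aD_{j_1}\Rightarrow ab D_{j_2}\Rightarrow\cdots\Rightarrow w$ in $G$ is simulated by applying the corresponding splicing rules in the same order to $XD_1Y$; the invariant is that after simulating a derivation of the sentential form $w'D_k$ (with $w'\in T^*$) the current string is $Xw'D_kY$ with exactly one occurrence of a $D$-symbol, and applying the rule simulating $D_k\to a D_\ell$ replaces $D_k$ by $aD_\ell$ in that position because the handles $\langle D_k\mid\cdot\mid Y\rangle$ force the insertion point; applying a terminal rule $D_k\to a$ produces $Xw'aY$ (via inserting $Y_a$, which then must be argued to be an acceptable final form) to which no further rule applies, so the word of labels read off is exactly the control word, which is the derivation word; choosing $h=\mathrm{id}$ this is $w$ itself. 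Wait — since control languages read labels directly and we labeled each splicing rule by its own production, the control word is the sequence of productions used, which is in general \emph{not} $w$; so I would instead label each rule simulating a production that outputs terminal $a$ by the \emph{letter} $a$ (allowed, since distinct rules may share a label), making the control word literally $w$. This is the key place where control languages beat Szilard languages: a single letter label $a$ can tag many different splicing rules $\langle D_i\mid Y_a-D_i\mid Y\rangle$, $\langle D_i\mid Y_a-D_j\mid Y\rangle$, $\langle D_i\mid\epsilon-Y_a\mid Y\rangle$ for all the relevant $i,j$, so no homomorphism is needed.

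For the reverse inclusion $CL_{2,FIN}^{1}(\mathscr{LS})\subseteq L(G)$ I would show that every terminal derivation of $\mathscr{LS}$ corresponds to a successful derivation of $G$: by the structural invariant above, the only applicable rules at any stage are those matching the unique current $D$-symbol (the markers $X,Y$ and the handle letters control this), terminal rules can only be applied last (afterwards the string over $A$ contains $Y_a$ and no $D_i$, so nothing matches), and the string $XD_1Y$ forces $D_1=S$ as start; hence the label sequence of any terminal derivation is the letter sequence of a genuine leftmost derivation in $G$, i.e. lies in $L(G)$. The main obstacle I anticipate is the bookkeeping to verify that \emph{no unintended splicing} is possible — in particular that a rule like $\langle D_i\mid Y_a-D_i\mid Y\rangle$ cannot fire at some spurious position, that the inserted axioms $Y_aD_i$ cannot be re-inserted into one another, and that the presence of $Y_a$ symbols in a finished string genuinely blocks every rule; this requires a careful case check on the handles of every rule against the reachable strings, exactly the type of argument carried out for the Szilard-language homomorphism theorem earlier, and I would present it by maintaining the single-$D$-occurrence invariant and the fact that every axiom other than $XD_1Y$ begins with a $Y_a$ marker that appears nowhere as a left handle.
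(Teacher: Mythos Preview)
Your construction is essentially identical to the paper's: same alphabet, same axioms, same rules, and the same labeling of each simulating rule by the terminal letter $a$ it emits (the self-correction in the middle of your proposal lands exactly where the paper does).

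However, your correctness argument contains a factual slip about flat splicing that would derail the verification as written. You claim the invariant string is $Xw'D_kY$ with ``exactly one occurrence of a $D$-symbol'' and that the rule ``replaces $D_k$ by $aD_\ell$''. Flat splicing \emph{inserts}; it never deletes. Applying $\langle D_k\mid Y_a{-}D_\ell\mid Y\rangle$ to the pair $(\ldots D_kY,\,Y_aD_\ell)$ yields $\ldots D_k\,Y_aD_\ell\,Y$, with the old $D_k$ still present. The reachable strings therefore have the shape
\[
X\,D_1\,Y_{a_1}D_{j_1}\,Y_{a_2}D_{j_2}\cdots Y_{a_m}D_{j_m}\,Y
\]
(or the same with one extra $Y_a$ before $Y$ after a terminal step), not $Xw'D_kY$. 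The correct invariant is that only the \emph{rightmost} $D$-symbol is adjacent to $Y$, and since every rule has right handle $Y$, that is the unique position where any rule can fire. With this fix your two inclusions go through exactly as the paper argues, and your worries about unintended splicing dissolve: every rule requires a factor $D_iY$, which never occurs inside the short axioms $Y_aD_j$ or $Y_a$, and which disappears once a terminal rule inserts $Y_a$ between the last $D$-symbol and $Y$.
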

\begin{proof}
Let $L$ be a $\lambda$-free regular language and there exists a grammar $G = ( N, T, S, P)$ such that $L = L(G)$. 
The non-terminals $N$ of $G$ are renamed as $D_{i}, 1 \leq i \leq n$,  
starting from  $D_{1} = S$. 
 Now, the rules in $P$ are of the form $D_{i} \rightarrow a D_{i}$, $D_{i} \rightarrow a D_{j} ( i \neq j )$,
and $D_{i} \rightarrow a$, $D_{i}, D_{j} \in N$, and $a \in T$. 
In this proof we construct a  labeled flat splicing systems $\mathscr{LS}$ such that $L = L(G) = CL_{2, FIN}^{1}(\mathscr{LS})$.

\noindent  Let $\mathscr{LS} = (A,  I, R, Lab)$ be a labeled flat splicing system where:
  \begin{description} 
\item[$\bullet$]  $A = \{ X, Y, D_{1}, D_{2}, \ldots, D_{n}\} \cup \{Y_{a}\};$ 

\item[$\bullet$]$I = \{ XD_{1}Y\} \cup \{Y_{a} D_{i} | D_{i} \rightarrow a D_{i} \in P\} \cup \{Y_{a} D_{j} | D_{i} \rightarrow a D_{j} \in P\}
 \cup \{ Y_{a} | D_{i} \rightarrow a \in P\}$; 

 \item[$\bullet$]The rules in $R$  are of the following form: \\

$a: <D_{i} | Y_{a}- D_{i} | Y>$ for $D_{i} \rightarrow a D_{i}, D_{i} \in N, a \in T;$

$a: <D_{i} | Y_{a}- D_{j} | Y>$ for $D_{i} \rightarrow a D_{j}, D_{j} \in N, a \in T;$

$a: <D_{i}|\epsilon - Y_{a}| Y>$ for $D_{i} \rightarrow a, a \in T$;
\item[$\bullet$] $Lab = \{a ~|~  D_{i} \rightarrow a D_{i}, D_{i} \in N, a \in T\} \cup \{a ~|~  D_{i} \rightarrow a D_{j}, D_{j} \in N, a \in T\}
 \cup \{a ~|~ D_{i} \rightarrow a, a \in T\}$.
 \end{description}

 Recursive rules $D_{i} \rightarrow aD_{i}, D_{i} \rightarrow aD_{j}$ in $G$  are simulated by $a$- rules $a: <D_{i} | Y_{a}- D_{i} | Y>$ and 
 $a: <D_{i} | Y_{a}- D_{j} | Y>$  respectively. The terminal rule  $D_{i} \rightarrow a$ is simulated 
 by $a: <D_{i}|\epsilon - Y_{a}| Y>$. 
Hence $L \subseteq CL_{2, FIN}^{1}(\mathscr{LS}).$ 
Again, from  the one-to-one correspondence between the rules in $P$ and labeled flat splicing rules 
we have the inclusion $CL_{2, FIN}^{1}(\mathscr{LS}) \subseteq L(G)$.


\end{proof}

Now we show that any non-empty context-free language can be obtained as a control language of the
labeled  flat finite splicing systems of type $(2,*)$.

\begin{theorem}
$(CF \setminus \{\lambda\}) \subseteq CLLS_{2, FIN}^{2}.$
\end{theorem}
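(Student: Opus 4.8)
The plan is to mimic the construction used for regular languages, but now the underlying grammar $G = (N, T, S, P)$ is taken in Chomsky normal form, so its rules have the shape $A_1 \to B_1 C_1$ and $A_1 \to a$ with $A_1, B_1, C_1 \in N$ and $a \in T$. As in the homomorphic-image theorem for context-free languages, I would arrange that every word of $L$ is produced by first applying all the non-terminal (branching) rules and then applying the terminal rules in the left-most order; this canonical form of derivations lets the splicing system keep track of ``where'' the next terminal rule must act by using a marker symbol such as $[r_m]$. Each rule of $P$ gets a label, and crucially for a \emph{control} language (as opposed to a Szilard language) we may reuse labels: all the auxiliary ``bookkeeping'' splicing rules that merely move the marker or insert bracket symbols will be given the label $\lambda$, while the splicing rule simulating a terminal rule $A_1 \to a$ will carry a genuine label. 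That way the control word read off a terminal derivation is exactly the string of terminal-rule labels, which we arrange to equal the terminal word $w \in L$ itself rather than merely a homomorphic pre-image of it.

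Concretely I would set $A = \{X, Y, E\} \cup N \cup \Delta_1 \cup \Delta_2 \cup \{[r_m],\dots\}$ where $\Delta_1$ holds the bracket symbols $[r_i]$ for the branching rules and $\Delta_2$ the bracket symbols for the terminal rules, take $I = \{XSEY\} \cup \{[r_i]B_1C_1 \mid r_i: A_1 \to B_1 C_1\} \cup \{[r_a] \mid r_a: A_1 \to a\} \cup \{[r_m], \dots\}$, and use the same families of splicing rules as in the type-$(2,2)$ homomorphic-image construction: a rule $[r_i]^1 : \langle A_1 \mid [r_i] - C_1 \mid \alpha_1\alpha_2\rangle$ (with the side-condition on $\alpha_1\alpha_2$) simulating $A_1 \to B_1 C_1$ by inserting $[r_i]B_1C_1$, a rule $[r_i]^a : \langle [r_m]A_1 \mid \epsilon - [r_a] \mid \alpha_3 \rangle$ simulating $A_1 \to a$ at the left-most non-terminal, and the marker-management rules $[r_{k1}]'$ and $[r_{k2}]'$ that relocate $[r_m]$. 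The only change from the earlier proof is in the labelling: I would label $[r_i]^a$ by the terminal $a \in T$ (so $Lab = T$, possibly together with $\lambda$), and give every other rule the label $\lambda$. The handle lengths are unchanged, so the system is still of type $(2, \ast)$, in fact $(2,2)$.

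For the two inclusions I would argue exactly as in the context-free homomorphic-image theorem. For $L \subseteq CL_{2,\mathrm{FIN}}^{2}(\mathscr{LS})$: given $w \in L(G)$, take its canonical derivation (branching rules first, then terminal rules left-to-right), apply the corresponding labelled splicing rules in the same order starting from $XSEY$; the marker rules insert and move $[r_m]$ so that each terminal-rule simulation fires at the correct position, and the process terminates in a string over $A$ to which no rule applies. Concatenating the labels gives $\lambda$'s from all the auxiliary rules and the letters of $w$, in order, from the terminal-rule simulations; hence the control word is $w$. For the converse $CL_{2,\mathrm{FIN}}^{2}(\mathscr{LS}) \subseteq L$: any terminal derivation of $\mathscr{LS}$ must, by the one-to-one correspondence between the splicing rules and the rules of $G$ (and the side-conditions, which block any out-of-order or spurious applications), project onto a valid canonical derivation in $G$, and the control word read off equals the terminal word generated. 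Therefore $CL_{2,\mathrm{FIN}}^{2}(\mathscr{LS}) = L(G) = L$.

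The main obstacle is the same one already faced in the homomorphic-image construction, namely verifying that the side-conditions on the context handles $\alpha_1, \alpha_2, \alpha_3$ (e.g. $\alpha_1\alpha_2 \notin \{NY, EN\} \cup \{E[r_i]\}$) really do prevent every unintended splicing — in particular that the marker $[r_m]$ can only be placed immediately to the left of the genuine left-most remaining non-terminal, and that no terminal rule can be simulated before all branching rules have been applied. Once that case analysis is granted (it is essentially inherited from the preceding theorem, only with the relabelling), the rest is routine. I would therefore present the construction, note explicitly that it coincides with the earlier one except for the labelling $h^{-1}$ being replaced by the identity-style labelling $[r_i]^a \mapsto a$ and all other rules $\mapsto \lambda$, and refer back to the detailed rule-by-rule simulation already carried out.
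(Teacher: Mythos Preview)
Your construction has a genuine gap: it relies on $\lambda$-labels, but the theorem is stated for $CLLS_{2, FIN}^{2}$, not $CLLS_{2, \lambda, FIN}^{2}$. In the paper's terminology these are distinct families; the subscript $\lambda$ is added precisely when empty labels are permitted. You explicitly assign the label $\lambda$ to every splicing rule simulating a branching rule $A_1 \to B_1C_1$ and to the marker-management rules $[r_{k1}]'$, $[r_{k2}]'$. With Chomsky normal form this is unavoidable: the branching rules produce no terminal, so there is nothing sensible to label them with, yet in a terminal derivation they must be applied roughly as often as the terminal rules. Your argument therefore establishes only $(CF \setminus \{\lambda\}) \subseteq CLLS_{2, \lambda, FIN}^{2}$, which is strictly weaker than what is claimed.

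The paper sidesteps this by taking $G$ in \emph{Greibach} normal form, where every rule has the shape $A_1 \to a\alpha$ or $A_1 \to a$ and hence every rule application emits exactly one terminal symbol. After disambiguating the terminals (replacing $a$ by distinct $a_i$ so that different rules yield different inserted axiom words), each grammar rule is simulated by splicing rules all carrying the genuine label $a \in T$; no auxiliary marker or $\lambda$-labeled bookkeeping is needed at all, because leftmost derivations in Greibach normal form already process the sentential form left to right. This is the missing idea in your attempt: the choice of normal form is not cosmetic but is exactly what eliminates the need for $\lambda$-labels.
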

\begin{proof}
Let $L $ be a non-empty context-free language and  $G=(N,T, S, P)$ be a grammar  in
Greibach normal form such that $L = L(G).$ The rules in $P$ are of the form, 
$A_1 \rightarrow a \alpha$ and $A_1 \rightarrow a$, where $A_1 \in N, \alpha \in N^{+}, a \in T$.

The main idea of the proof is to construct a labeled flat splicing system
$\mathscr{LS} = ( A, I, R, Lab)$ 
such that $L = CL_{2, FIN}^{2}( \mathscr{LS})$ 
where $CL_{2, FIN}^{2}(\mathscr{LS})$ denotes the control language of the labeled flat splicing systems
 $\mathscr{LS}$ of type $(2,2)$.

At first the rules in $G$ are rewritten in the following manner:

$(1)$ If the terminal symbol at the starting of the right hand side of any two rules are same, i.e., say $a \in T$ is the starting 
symbol of any of the two rules. Then the $a \in T$ in 
 each rule is replaced with  distinct $a_{i}, i \in \mathbb{N}$. More precisely, for each distinct pair of rules
$A_1 \rightarrow a \alpha$ and $B_1 \rightarrow a \beta$ where $A_1, B_1 \in N, \alpha, \beta \in N^{+}$,
  are rewritten as:
$A_1 \rightarrow a_{k} \alpha$ and $B_1 \rightarrow a_{l} \beta$ where $k \neq l, k, l \in \mathbb{N}$.

$(2)$ Each distinct pair of rules of the form $A_1 \rightarrow a$ and $B_1 \rightarrow a$
 where $A_1, B_1 \in N$, are rewritten as $A_1 \rightarrow a_{k}$ 
and $B_1 \rightarrow a_{l}$ where $k \neq l, k, l \in \mathbb{N}.$

$(3)$ Also, if 
only one rule  $A_1 \rightarrow a \alpha, A_1 \in N, \alpha \in N^{+}, a \in T$
is present in $G$, then the rule is rewritten as $A_1 \rightarrow a_{1} \alpha.$ Similarly, if there exists only one rule 
$A_1 \rightarrow a, A_1 \in N, a \in T$, then it is rewritten as $A_1 \rightarrow a_{1}.$

Now, we construct a labeled flat splicing system which simulates the newly transformed rules. 

 Let $\mathscr{LS} = ( A,  I, R, Lab)$ be a labeled flat splicing system
where: 
\begin{itemize}
\item[$\bullet$] $A = \{X, Y \} \cup N \cup \{ Y_{a_{i}} ~|~ A_1 \rightarrow a_{i} \alpha \in P\} \cup  \{ Y_{a_{i}} ~|~  A_1 \rightarrow a_{i} \in P\}$;
\item[$\bullet$] $I = \{ XSY \} \cup \{Y_{a_{i}} \alpha | A_1 \rightarrow a_{i} \alpha \in P\}
 \cup \{ Y_{a_{i}} | A_1 \rightarrow a_{i} \in P\}$;


\item[$\bullet$]
$R$ contains the following rules: 

For $A_1 \rightarrow a_{i} \alpha$:

$a: <XS | Y_{a_{i}}- \beta | Y>$ where $Y_{a_{i}} - \beta = Y_{a_{i}} \alpha \in I, \beta \in N, a_i \in T$;

$a: <Y_{a_{j}} \alpha_1 | Y_{a_{i}}- \beta | \alpha_2>$, where $\alpha_1 \in N, \alpha_2 \in N \cup \{Y\}, Y_{a_{i}} - \beta = Y_{a_{i}} \alpha \in I, 
\beta \in N, a_i, a_j \in T$.


For  $A_1 \rightarrow a_{i}$:

$a: <XS | \epsilon-Y_{a_{i}} | Y>, a_i \in T$;

$a: <Y_{a_{j}} \alpha_1 | \epsilon-Y_{a_{i}} | \alpha_2>, \alpha_1 \in N, \alpha_2 \in N \cup \{Y\}, a_i, a_j \in T$
where $i, j \in \mathbb{N}$.

\item[$\bullet$] $Lab = \{ a | A_1 \rightarrow a_{i} \alpha \} \cup \{ a | A_1 \rightarrow a_{i}\}$;

\end{itemize}

 
We first prove that $L(G) = L \subseteq CL_{2, FIN}^{2}( \mathscr{LS})$. Any element $x \in L$ can be generated after sequential 
application of the rules in $P$. If the flat splicing rules are applied in the
same sequence starting from $XSY$, a string over $A_{1}$ is generated where 
no rule  can be applied further, i.e., a terminal derivation is obtained. Also, concatenation of the labels 
of the flat splicing rules generates the string $x$. Hence  $L(G) = L \subseteq CL_{2, FIN}^{2}( \mathscr{LS})$. 

In the similar manner,  we can prove the other inclusion $CL_{2, FIN}^{2}(\mathscr{LS}) \subseteq L(G) = L$. 
 Let $w_{1}  \in CL_{2, FIN}^{2}(\mathscr{LS})$, i.e., there exists a terminal
 derivation in $\mathscr{LS}$ such that the concatenation of the labels of the splicing rules generate $w_{1}$.
   If the rules in $G$ are applied in the same sequence as in the terminal derivation of $\mathscr{LS}$ generating $w_{1}$, 
   the string $w_{1}$ is generated.
    This will imply $CL_{2, FIN}^{2}(\mathscr{LS})) \subseteq L.$

\end{proof}

In the next theorem, we show that if the rules in $\mathscr{LS}$ are labeled with $\lambda$, then any recursively 
enumerable language can be obtained as a control language by the labeled flat finite splicing systems of type $(4, 2).$
\begin{theorem}

$RE = CL_{\lambda}LS_{2, FIN}^{4}$.
\end{theorem}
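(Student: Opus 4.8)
The plan is to prove the two inclusions $CL_{\lambda}LS_{2,FIN}^{4} \subseteq RE$ and $RE \subseteq CL_{\lambda}LS_{2,FIN}^{4}$ separately, with essentially all of the construction work for the non-trivial direction already supplied by Theorem \ref{rehomomorphism}. For $CL_{\lambda}LS_{2,FIN}^{4} \subseteq RE$ I would argue directly from the definitions: given a labeled flat finite splicing system $\mathscr{LS} = (A, I, R, Lab)$ of type $(4,2)$, possibly with $\lambda$-labelled rules, dovetail over all finite sequences of flat splicing steps starting from an axiom in $I$. A terminal derivation is a finite object whose validity is decidable — each step must be a legal flat splicing of the current word with some $y \in I$ under some rule of $R$, and the final word $x_n$ must admit no rule of $R$ with second argument in $I$; both conditions are effectively checkable since $I$ and $R$ are finite and the flat splicing operation and its applicability are computable. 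Whenever a valid terminal derivation is found, output the concatenation of the labels of its rules. This enumerates $CL_{2,\lambda,FIN}^{4}(\mathscr{LS})$, so it is recursively enumerable.

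For the reverse inclusion I would take an arbitrary $L \in RE$ and invoke Theorem \ref{rehomomorphism}: there is a labeled flat finite splicing system $\mathscr{LS} = (A, I, R, Lab)$ of type $(4,2)$, with the injective labelling used for Szilard languages, and a homomorphism $h\colon Lab^{*} \to T^{*}$ — erasing in general — with $L = h\big(SZ_{2,FIN}^{4}(\mathscr{LS})\big)$, where $h$ maps every label of a terminal-rule simulator $a_i^{1}, \dots, a_i^{7}$ (for $r_i\colon A_1 \to a$) to $a$ and every other label to $\lambda$. The idea is to absorb $h$ into the labelling. Form $\mathscr{LS}' = (A, I, R, Lab')$ with the same alphabet, axioms and rules, but relabel each rule $p$ carrying label $\ell$ in $\mathscr{LS}$ by the new label $h(\ell) \in T \cup \{\lambda\}$. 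Because control languages permit several rules to share a label and permit the empty label, $\mathscr{LS}'$ is a legitimate labeled flat splicing system for control languages, still of type $(4,2)$ with finite $I$.

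The key step — and the only place that needs care — is the observation that the set of terminal derivations of $\mathscr{LS}'$ is identical to that of $\mathscr{LS}$, since the labels attached to rules are irrelevant to which rules can be applied to which words. Hence along any fixed terminal derivation the control word produced by $\mathscr{LS}'$ is exactly $h$ applied to the Szilard word produced by $\mathscr{LS}$, and taking the union over all terminal derivations gives
\[
CL_{2,\lambda,FIN}^{4}(\mathscr{LS}') \;=\; h\big(SZ_{2,FIN}^{4}(\mathscr{LS})\big) \;=\; L ,
\]
so $L \in CL_{\lambda}LS_{2,FIN}^{4}$, completing the proof. The genuine difficulty — the actual flat-splicing simulation of a Kuroda-normal-form grammar of type $(4,2)$, with the left-most discipline for terminal and erasing rules enforced through the markers $[r_m]$, $k_a^{i}$, $k_\lambda^{i}$ — is not repeated here, being entirely contained in Theorem \ref{rehomomorphism}; what remains is only to check that relabelling an already-correct construction is exact rather than an over-approximation. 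I would also note in passing that the degenerate cases $L = \emptyset$ and $\lambda \in L$ are already accommodated, since Kuroda normal form and Theorem \ref{rehomomorphism} handle them.
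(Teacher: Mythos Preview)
Your proposal is correct and follows essentially the same approach as the paper: the paper dispatches $CL_{\lambda}LS_{2,FIN}^{4}\subseteq RE$ by Church--Turing (you spell out the enumeration, which amounts to the same thing), and for $RE\subseteq CL_{\lambda}LS_{2,FIN}^{4}$ it too relabels the rules of the Theorem~\ref{rehomomorphism} construction by sending $a_i^{1},\dots,a_i^{7}$ to $a$ and all other labels to $\lambda$, i.e.\ exactly your ``absorb $h$ into the labelling'' step. Your added remark that the set of terminal derivations is unchanged under relabelling is a welcome clarification the paper leaves implicit.
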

\begin{proof}
The inclusion $CL_{\lambda}LS_{2, FIN}^{4} \subseteq RE$ follows from the Church-Turing thesis. It only remains to
prove the inclusion $RE \subseteq CL_{\lambda}LS_{2, FIN}^{4}$.
The proof of this inclusion follows from the proof of the Theorem $12$. If all the labels of the rules of Theorem $12$ except
$a_{i}^{1}, a_{i}^{2}, a_{i}^{3}, a_{i}^{4}, a_{i}^{5}, a_i^{6}$ and $a_i^{7}$  are replaced with $\lambda$ and the $a_{i}^{1}, a_{i}^{2}, a_{i}^{3}, a_{i}^{4}, a_{i}^{5}, a_i^{6}$ and $a_i^{7}$ labeled rules are 
replaced by $a$, then by following the same procedure 
as in Theorem $12$ we can prove that $RE \subseteq CL_{\lambda}LS_{2, FIN}^{4}$.

\end{proof}

\section{Conclusion} 
\label{conclusion}
In this work, we compared the Szilard and control languages of labeled flat splicing systems 
with the family of 
languages in the Chomsky hierarchy. It has been shown that 
Szilard language of labeled  flat finite splicing systems and family of regular, context-free and context-sensitive languages are incomparable. Also, any 
non-empty regular and context-free language
 can be obtained as Szilard 
language of these systems when a homomorphism is applied.
Also any
 recursively enumerable language can be obtained as homomorphic image of Szilard language of 
 labeled  flat finite splicing system of type $( 4, 2)$.   
We also proved that
any non-empty regular and context-free language can be obtained as control language
by labeled  flat finite splicing systems and any recursively enumerable language 
can be obtained as control language if some of the rules can be labeled with
the empty label (i.e., $\lambda$). It remains to be investigated whether the bounds mentioned in this paper are optimal.

\section{Acknowledgement}

First author acknowledges the fund received from Indian Statistical Institute, Kolkata, West Bengal, India.

\section{References}





\begin{center}
\textbf{Appendix}

\end{center}

\textbf{Extended proof of Theorem \ref{rehomomorphism}:}

\textbf{Simulation of $r_i: A_1 \rightarrow B_1C_1$ using $r_i^{1}, r_i^{2}, r_i^{3}, r_i^{4}, r_i^{5}$-rule:}

For each $r_i: A_1 \rightarrow B_1C_1$ there exists $r_i^{1}, r_i^{2}, r_i^{3}, r_i^{4}$ and $r_i^{5}$-rule.  These rules 
are applicable to the word $Xw_1 A_1w_2Y$ where $w_1, w_2 \in A^{*}$. Moreover, after application of the rules depending
on the contexts, the words $Xw_1 A_1w_2Y$ and $[r_i]B_1C_1$ are spliced together and the word  $Xw_1A_1[r_i]B_1C_1w_2Y$ is obtained. 

$(Xw_1 A_1w_2Y, [r_i]B_1C_1) \vdash^{r_i^{1}, r_i^{2}, r_i^{3}, r_i^{4}, r_i^{5}}  Xw_1A_1[r_i]B_1C_1w_2Y$  \hspace {2cm}\ldots (1)

No rule of $R$ is applicable to the subword $A_1[r_i]$. But rules from $(R_{15})$ are applicable to the subword $[r_m]A[r_i]$
 obtained during identification of the leftmost non-terminal symbol for simulation of terminating rules.

\textbf{Simulation of $r_i: A_1B_1 \rightarrow C_1D_1$ using $r_i^{7}, r_i^{8}, r_i^{9}$-rule:}

For each $r_i: A_1B_1 \rightarrow C_1D_1$ there exists $r_i^{7}, r_i^{8}$ and $r_i^{9}$-rule. 
These rules splice the strings $Xw_1A_1B_1 \alpha_1 \alpha_2 w_2Y$ and $[r_i]C_1D_1$ where $\alpha_1, \alpha_2 \in N, w_1, w_2 \in A^{*}$.

$(Xw_1A_1B_1 \alpha_1 \alpha_2 w_2Y, [r_i]C_1D_1) \vdash^{r_i^{7}} Xw_1A_1B_1 [r_i] C_1 D_1 \alpha_1 \alpha_2 w_2 Y$ \hspace {0.7cm} \ldots (2)

$(Xw_1A_1B_1 \alpha_1 Y, [r_i]C_1D_1) \vdash^{r_i^{9}} Xw_1A_1B_1 [r_i] C_1 D_1 \alpha_1 Y$ \hspace {2.4cm} \ldots (3)

$(Xw_1A_1B_1  Y, [r_i]C_1D_1) \vdash^{r_i^{8}} Xw_1A_1B_1 [r_i] C_1 D_1  Y$ \hspace {3.2cm} \ldots (4)

No rule in $R$ is applicable to the subword $A_1B_1[r_i]$ of the words obtained in $(2), (3)$ and $(4)$. 
But the rules in $(R_{15})$ is applicable to the subword 
$[r_m]A_1B_1[r_i]$ is obtained during the computation. 

\textbf{Simulation of $r_i: A_1B_1 \rightarrow C_1D_1$ in a word of the form $Xw_1A_1 \alpha_1[r_1]$
$ \alpha_2 [r_2] \ldots \alpha_n[r_n] B_1w_2Y$
where $\alpha_1, \ldots, \alpha_n \in N, [r_1], \ldots, [r_n] \in \Delta_1, w_1, w_2 \in A^{*}$ using $r_i^{10}, r_i^{11}$ and $r_i^{12}$-rule:}

$(Xw_1A_1 \alpha_1[r_1]\alpha_2[r_2] \ldots \alpha_n [r_n]B_1w_2 Y, [r_i])$ 

$\vdash^{r_i^{10}}$ $Xw_1A_1 [r_i] \alpha_1[r_1]\alpha_2[r_2] \ldots \alpha_n [r_n]B_1w_2 Y$

$(Xw_1A_1 [r_i] \alpha_1[r_1]\alpha_2[r_2] \ldots \alpha_n [r_n]B_1w_2 Y, [r_i])$ 

$\vdash^{r_i^{11}}$$ Xw_1A_1 [r_i] \alpha_1[r_1] [r_i]\alpha_2[r_2] \ldots \alpha_n [r_n]B_1w_2 Y$

\vdots

$(Xw_1A_1 [r_i] \alpha_1[r_1] [r_i]\alpha_2[r_2] \ldots [r_i] \alpha_n [r_n]B_1w_2 Y, [r_i])$ 

$\vdash^{r_i^{11}}$ $ Xw_1A_1 [r_i] \alpha_1[r_1] [r_i]\alpha_2[r_2] \ldots [r_i] \alpha_n [r_n] [r_i]B_1w_2 Y$

$(Xw_1A_1 [r_i] \alpha_1[r_1] [r_i]\alpha_2[r_2] \ldots [r_i] \alpha_n [r_n] [r_i]B_1w_2 Y, [r_i]C_1D_1)$ 

$\vdash^{r_i^{12}}$ $Xw_1A_1 [r_i] \alpha_1[r_1] [r_i]\alpha_2[r_2] \ldots [r_i] \alpha_n [r_n] [r_i]B_1 [r_i] C_1 D_1w_2 Y$ \hspace {1.1cm} \ldots (5)

\textbf{Application of $a_i^{1}$-rule and $r_j^{14}$-rule:}

The $a_i^{1}$-rule for each $r_i: A_1 \rightarrow a$ and $r_j^{14}$-rule for each $r_j: A_2 \rightarrow \lambda$
is applicable to the word $XA_1Y$ and $XA_2Y$.

$(XA_1Y, k_a^{i}) \vdash^{a_i^{1}} XA_1k_a^{i}Y$ \hspace {7.3cm} \ldots (6) 

$(XA_2Y, k_{\lambda}^{i}) \vdash^{r_j^{14}} XA_2k_{\lambda}^{i}Y$ \hspace {7.3cm} \ldots (7)

\textbf{Simulation of $r_i: A_1 \rightarrow a$ using the $ a_i^{2}, a_i^{3}, a_i^{4}, a_i^{5}, a_i^{6}$-rule:}

The flat splicing rules simulating the terminal rules $r_i: A_1 \rightarrow a$ and $r_i: A_1 \rightarrow \lambda$ are constructed 
in such a manner that the left-most non-terminal is identified by the marker $X$ and the symbol $[r_m]$ and then the splicing is performed. 
This process is performed by application of the $r_{m+1}^{i}, r_{m+2}^{i}$ -rule and the rules in $(R_{15})$.  
At first we discuss the simulation of the rules $r_i: A_1 \rightarrow a$ and $r_j: A_1 \rightarrow \lambda$. 

\textbf{Simulation of $r_i: A_1 \rightarrow \lambda$ using the $ a_i^{2}, a_i^{3}, a_i^{4}, a_i^{5}, a_i^{6}$-rule:}

For each $r_i: A_1 \rightarrow a$ there exists rules $a_i^{1}, a_i^{2}, a_i^{3}, a_i^{4}, a_i^{5},$ and  $a_i^{6}$-rule  which 
splice the words $Xw_1A_1w_2Y$ and $k_a^{i}$. 

$(Xw_1[r_m]A_1Y, k_a^{i}) \vdash^{ a_i^{2}} Xw_1A_1k_a^{i}Y$

$(Xw_1[r_m]A_1 \alpha_1 Y, k_a^{i}) \vdash^{ a_i^{3}} Xw_1A_1k_a^{i}\alpha_1Y, \alpha_1 \in N$

$(Xw_1[r_m]A_1 \alpha_1 \alpha_2 Y, k_a^{i}) \vdash^{ a_i^{4}} Xw_1A_1k_a^{i}\alpha_1 \alpha_2Y, \alpha_1, \alpha_2 \in N$

$(Xw_1[r_m]A_1 \alpha_1 \alpha_2 \alpha_3 Y, k_a^{i}) \vdash^{ a_i^{5}} Xw_1A_1k_a^{i}\alpha_1 \alpha_2 \alpha_3Y, \alpha_1, \alpha_2, \alpha_3 \in N$,

$(Xw_1[r_m]A_1 \alpha_1 \alpha_2 \alpha_3 \alpha_4 Y, k_a^{i}) \vdash^{a_i^{6}} Xw_1A_1k_a^{i} \alpha_1 \alpha_2 \alpha_3 \alpha_4Y$ 

 \hspace {11.6 cm}  \ldots (8)


Similarly, the rule $r_i: A_1 \rightarrow \lambda$ can be simulated by application of the $ r_i^{15}, r_i^{16}, r_i^{17}, r_i^{18},$ and $r_i^{19}$-rule.


\textbf{Application of  $r_{m+1}^{i}$-rule for $r_i: A_1 \rightarrow a$ and $r_{m+1}^{j}$-rule for $r_j: A_2 \rightarrow \lambda$:}

After simulating the rules $r_i: A_1 \rightarrow a$ and $r_j: A_2 \rightarrow  \lambda$ the words $[r_m]A_1k_a^{i}$ and 
$[r_m]A_2k_{\lambda}^{j}$ are obtained after application of the rules $a_i^{2}, \ldots, a_i^{6}$ and $r_i^{15}, \ldots, r_i^{19}$ 
respectively. To proceed the computation further  $r_{m+1}^{i}$-rule and $r_{m+2}^{j}$-rule are applied. 
These rules are applicable to the words $Xw_1[r_m]A_1k_a^{i} \alpha_1 \alpha_2 w_2Y$ and  $Xw_1[r_m]A_2k_{\lambda}^{j} \alpha_1 \alpha_2 w_2Y$
respectively where $w_1, w_2$
$ \in A^{*}, \alpha_1 \in N, \alpha_2 \in \{Y\} \cup N \cup \Delta_1 \cup \Delta_2$. 
This computation can be done by splicing the words $Xw_1[r_m]A_1k_a^{i} \alpha_1 \alpha_2 w_2Y$ and  $Xw_1[r_m]A_2k_a^{j} \alpha_1 \alpha_2 w_2Y$
with $[r_m]$.

$(Xw_1[r_m]A_1k_a^{i} \alpha_1 \alpha_2 w_2Y, [r_m]) \vdash^{r_{m+1}^{i}} Xw_1[r_m]A_1k_a^{i} [r_m] \alpha_1 \alpha_2 w_2Y$  \hspace {.6 cm} \ldots (9)

$(Xw_1[r_m]A_2k_{\lambda}^{j} \alpha_1 \alpha_2 w_2Y, [r_m]) \vdash^{r_{m+2}^{j}} Xw_1[r_m]A_2k_{\lambda}^{j} [r_m] \alpha_1 \alpha_2 w_2Y$  
 \hspace {.6 cm}\ldots (10) 

The symbol $[r_m]$ helps the system to identify the leftmost non terminal where the rules  $a_i^{2}, \ldots, a_i^{6}$ and $r_i^{15}, \ldots, r_i^{19}$ 
can be applied. The identification process is performed by the $r_m, r_{m+1}, r_{m+2}, r_{m+3}, r_{m+4}$ and $r_{m+5}$-rule.

\textbf{Application of $r_m, r_{m+1}, r_{m+2}, r_{m+3}, r_{m+4}, r_{m+5}, r_{m+6}$-rule:}

Any computation of $\mathscr{LS}$ starts with $XSY$ and a rule $r_i: A_1 \rightarrow B_1C_1$ 
is applied and the word $XA_1[r_i]B_1C_1Y$ is obtained. 
During the computation since no words can be removed from the word $XwY, w \in A^{+}$, 
the word present in the system will be of the form $XA_1[r_i] \alpha w_2Y$ where $\alpha \in N, w_2 \in A^{+}$. 
Moreover, the terminal rules are applied in leftmost manner and the corresponding flat 
splicing rules are constructed in such a manner such that the rules are applied to the leftmost 
non terminal.
So, to identify the leftmost non terminal $r_m$-rule is applied  first. 

$(XA[r_i]\alpha w_2Y, [r_m]) \vdash^{r_m} XA[r_i][r_m] \alpha w_2 Y$ \hspace{4cm} \ldots (11)

Moreover, after simulation of the rule $r_i: A_1B_1 \rightarrow C_1D_1$ using the $r_i^{7}, r_i^{8}$ and $r_i^{9}$-rule
we have a word $Xw_1A_1B_1[r_i]w_2Y, w_1, w_2 \in A^{+}$. The subword $A_1B_1[r_i]$ becomes inactive, i.e., no rule 
is applicable to the subword. But it becomes active again when the subword $[r_m]A_1B_1[r_i]$ is obtained.

$(Xw_1[r_m]\alpha_k[r_k] A_1B_1[r_i] w_2 Y, [r_m])$ 

$\vdash^{r_{m+3}} 
Xw_1[r_m]\alpha_k[r_k] [r_m] A_1B_1[r_i] w_2 Y$

$Xw_1[r_m]\alpha_k[r_k] [r_m] A_1B_1[r_i] w_2 Y$

$\vdash^{r_{m+1}} 
Xw_1[r_m]\alpha_k[r_k] [r_m] A_1B_1[r_i] [r_m] w_2 Y$  \hspace{4cm} \ldots (12)

After the simulation of $r_i: A_1B_1 \rightarrow C_1D_1$ in the word $Xw_1 \alpha [r_1] \alpha_2 [r_2] \ldots$
$ \alpha_n [r_n] w_2Y$, 
to further proceed the computation the following rules are applied. 
Moreover, after simulation of the rule $r_i: A_1B_1 \rightarrow C_1D_1$, the word $Xw_1A_1[r_i] \alpha_1[r_1] [r_i] \alpha_2 [r_2] [r_i] \ldots[r_i] \alpha_n [r_n][r_i]B_1[r_i]C_1D_1w_2Y$ where $w_1, w_2 \in A^{+}$ is obtained. 
So, the computation proceeds in the following manner when a word $Xw_1[r_m]$
$A_1[r_i] \alpha_1$
$[r_1] [r_i] \alpha_2 [r_2] [r_i] \ldots[r_i] \alpha_n [r_n][r_i]B_1[r_i]w_2Y$  is obtained. 

$(Xw_1[r_m]A_1[r_i] \alpha_1$
$[r_1] [r_i] \alpha_2 [r_2] [r_i] \ldots[r_i] \alpha_n [r_n][r_i]B_1[r_i]w_2Y, [r_m])$

$ \vdash^{r_{m+2}}$
$Xw_1[r_m]A_1[r_i]  [r_m] \alpha_1$
$[r_1] [r_i] \alpha_2 [r_2] [r_i] \ldots[r_i] \alpha_n [r_n][r_i]B_1[r_i]w_2Y$;
 
$Xw_1[r_m]A_1[r_i]  [r_m] \alpha_1$
$[r_1] [r_i] \alpha_2 [r_2] [r_i] \ldots[r_i] \alpha_n [r_n][r_i]B_1[r_i]w_2Y$

$ \vdash^{r_{m+3}}$
$Xw_1[r_m]A_1[r_i]  [r_m] \alpha_1$
$[r_1] [r_m] [r_i] \alpha_2 [r_2] [r_i] \ldots[r_i] \alpha_n [r_n][r_i]B_1[r_i]$
$w_2Y$;

So repeated application of the $r_{m+3}$-rule the word   $Xw_1[r_m]A_1[r_i]  [r_m] \alpha_1$
$[r_1] [r_m] [r_i] \alpha_2 [r_2] [r_i] \ldots[r_i] [r_m] \alpha_n [r_n][r_i]B_1[r_i]w_2Y$ 
is obtained. 

$(Xw_1[r_m]A_1[r_i]  [r_m] \alpha_1$
$[r_1] [r_m] [r_i] \alpha_2 [r_2] [r_i] \ldots[r_i] [r_m] \alpha_n [r_n][r_i]B_1[r_i]w_2Y,$

$ [r_m])$

$ \vdash^{r_{m+4}}$
$Xw_1[r_m]A_1[r_i]  [r_m] \alpha_1$
$[r_1] [r_m] [r_i] \alpha_2 [r_2] [r_i] \ldots[r_i] [r_m] \alpha_n [r_n] [r_i] [r_m] B_1$
$[r_i]w_2Y$;

$(Xw_1[r_m]A_1[r_i]  [r_m] \alpha_1$
$[r_1] [r_m] [r_i] \alpha_2 [r_2] [r_i] \ldots[r_i] [r_m] \alpha_n [r_n] [r_i] [r_m] B_1[r_i]w_2Y,$
$ [r_m])$

$ \vdash^{r_{m+6}}$
$Xw_1[r_m]A_1[r_i]  [r_m] \alpha_1[r_1] [r_m] [r_i] \alpha_2 [r_2] [r_i] \ldots[r_i] [r_m] \alpha_n [r_n] [r_i] [r_m] B_1$

$[r_i] [r_m] C_1D_1w_2Y$;  \hspace{8cm} \ldots (13)

\textbf{Application of the rule $r_i^{6}$ for $r_i: A_1B_1 \rightarrow C_1D_1$:}

After the simulation of $r_i: A_1B_1 \rightarrow C_1D_1$, a word $Xw_1A_1[r_i] \alpha_1[r_1] [r_i]$
$ \ldots \alpha_n[r_n][r_i]B_1[r_i]w_2Y, w_1, w_2 \in A^{+} $
is obtained. If $w_1 = w_1^{'}A_2, w_1^{'} \in A^{*}$ and there exists a rule $r_j: A_2 \rightarrow B_2C_2$, then the application 
of the rule is simulated in the following manner

($Xw_1^{'}A_2A_1[r_i] \alpha_1[r_1] [r_i] \ldots \alpha_n[r_n][r_i]B_1[r_i]w_2Y, [r_j])$

$\vdash^{r_i^{6}}$
 $Xw_1^{'}A_2[r_j] B_2C_2[r_i] \alpha_1[r_1] [r_i] \ldots \alpha_n[r_n][r_i]B[r_i]w_2Y$.  \hspace{1.5cm} \ldots (14)

 \textbf{Application of $a_j^{7}$-rule for  $a_j: A_3 \rightarrow a$ and $r_l^{20}$-rule for  $r_l: A_4 \rightarrow \lambda$: }
 
 Similarly, the application of the $a_j^{7}$-rule for  $a_j: A_3 \rightarrow a$ and  $r_l^{20}$-rule  for $r_l: A_4 \rightarrow \lambda$
 can be explained for the words 
 $Xw_1^{'}A_3A_1[r_i] \alpha_1[r_1] [r_i] \ldots \alpha_n[r_n][r_i]$
 $B_1[r_i]w_2Y$  and 
 $Xw_1^{'}A_4A_1[r_i] \alpha_1[r_1] [r_i] \ldots \alpha_n[r_n][r_i]$
 $B_1[r_i]w_2Y$ respectively in the following manner: 
 
 $Xw_1^{'}A_3A_1[r_i] \alpha_1[r_1] [r_i] \ldots \alpha_n[r_n][r_i]B_1[r_i]w_2Y$
 
 $\vdash^{a_j^{7}}$
 $Xw_1^{'}A_3[r_j] k_a^{j}A_1[r_i] \alpha_1[r_1] [r_i] \ldots \alpha_n[r_n][r_i]B_1[r_i]w_2Y$  \hspace{1.8cm} \ldots (15)

  $Xw_1^{'}A_4A_1[r_i] \alpha_1[r_1] [r_i] \ldots \alpha_n[r_n][r_i]B_1[r_i]w_2Y$
 
 $\vdash^{r_l^{20}}$
 $Xw_1^{'}A_4[r_l] k_{\lambda}^{l}A_1[r_i] \alpha_1[r_1] [r_i] \ldots \alpha_n[r_n][r_i]B_1[r_i]w_2Y$  \hspace{1.8cm} \ldots (16)

From the above derivations we know that after simulating the rules, subwords of the form $A_1[r_i], A_1B_1[r_i], A_1k_a^{i}$
and $A_1k_{\lambda}^{i}$ are obtained. Moreover, these subwords become active again when the subwords 
$[r_m]A_1[r_i], [r_m]A_1B_1$
$[r_i], [r_m]A_1k_a^{i}$
and $[r_m]A_1k_{\lambda}^{i}$. So the flat splicing rules are constructed in such a manner that 
no extra derivation is possible. Again, any $x \in L$ can be generated
by application of the recursive rules and then by leftmost application of the terminating rules. 
Now, if the flat splicing rules simulating the rules in $P$ are applied in the same order, then 
there exists a terminal derivation such that the concatenation of the labels of the applied 
flat splicing rules obtain a string $x_1$ where $h(x_1) = x$.  Hence $x \in SZ_{2}^{4}(\mathscr{LS})$. 
So, $   L  \subseteq SZ_{2}^{4}(\mathscr{LS}) $.

 Again, let $x_1 \in SZ_{2}^{4}(\mathscr{LS}) $. Then $x_1$ can be obtained by concatenating the labels
 of the flat splicing rules in a terminal derivation. Since no extra derivation is possible, 
 if the rules in $P$ are applied in the same order, a string over $T$ is obtained. In fact, the rules
 are constructed in such a way that, the string $x \in T^{+}$ is obtained where $h(x_1) = x$. 
 Hence $  SZ_{2}^{4}(\mathscr{LS})   \subseteq  L $.




\end{document}